\newtheorem{theorem}{Theorem}[section]
\newtheorem{lemma}[theorem]{Lemma}
\newtheorem{corollary}[theorem]{Corollary}
\newtheorem{proposition}[theorem]{Proposition}
\newtheorem{definition}{Definition}[section]
\newenvironment{ProofDummyEnv}{}{}
\newenvironment{proof}[1][none]{\begin{proofby}[#1]{}}{\end{proofby}}
\newenvironment{proofby}[2][none]{\par\noindent{\bf Proof:} #2
\renewenvironment{ProofDummyEnv}{\begin{#1}}{\end{#1}}%
\begin{ProofDummyEnv}}%
{\QED\end{ProofDummyEnv}}
\newcommand{\QED}{\nopagebreak\hfill $\Box$}
\newcommand{\prob}[2][]{\text{\bf Pr}\ifthenelse{\not\equal{}{#1}}{_{#1}}{}\!\left[#2\right]}
\newcommand{\expect}[2][]{\text{\bf E}\ifthenelse{\not\equal{}{#1}}{_{#1}}{}\!\left[#2\right]}
\newcommand{\given}{\ \mid\ }
\newcommand{\abs}[1]{\left| #1 \right|}
\def\eps{\epsilon}
\def\cali{{\cal I}}
\newcommand{\abs}[1]{\left| #1 \right|}
\newtheorem{theorem}{Theorem}[section]
\newtheorem{lemma}[theorem]{Lemma}
\newtheorem{corollary}[theorem]{Corollary}
\newtheorem{definition}{Definition}[section]
\newtheorem{claim}[theorem]{Claim}
\newtheorem{cor}[theorem]{Corollary}
\newtheorem{thm}{Theorem}[section]
\newtheorem{cor}[thm]{Corollary}
\newtheorem{lem}[thm]{Lemma}
\newtheorem{claim}[thm]{Claim}
\DeclareMathOperator{\OPT}{OPT}
\newcommand{\alg}{{\cal A}}
\newcommand{\mech}{{\cal M}}
\newcommand{\ialg}{\bar{\alg}}
\newcommand{\val}{v}
\newcommand{\vals}{{\mathbf \val}}
\newcommand{\valsmi}{\vals_{-i}}
\newcommand{\vali}[1][i]{{\val_{#1}}}
\newcommand{\ival}{\bar{\val}}
\newcommand{\ivals}{\bar{\vals}}
\newcommand{\ivali}[1][i]{{\ival_{#1}}}
\newcommand{\dist}{F}
\newcommand{\dists}{{\mathbf \dist}}
\newcommand{\distsmi}{\dists_{-i}}
\newcommand{\disti}[1][i]{{\dist_{#1}}}
\newcommand{\dens}{f}
\newcommand{\densi}[1][i]{{\dens_{#1}}}
\newcommand{\price}{p}
\newcommand{\prices}{{\mathbf \price}}
\newcommand{\pricei}[1][i]{{\price_{#1}}}
\newcommand{\alloc}{x}
\newcommand{\allocs}{{\mathbf \alloc}}
\newcommand{\alloci}[1][i]{{\alloc_{#1}}}
\newcommand{\est}{y}
\newcommand{\ests}{{\mathbf \est}}
\newcommand{\esti}[1][i]{{\est_{#1}}}
\newcommand{\discalloc}{\dot{\alloc}}
\newcommand{\discallocs}{\dot{\allocs}}
\newcommand{\discalloci}[1][i]{{\discalloc_{#1}}}
\newcommand{\ialloc}{\bar{x}}
\newcommand{\ialloci}[1][i]{{\ialloc_{#1}}}
\newcommand{\feasibles}{{\cal X}}
\newcommand{\Gbar}{\bar{G}}
\newcommand{\gbar}{\bar{g}}
\newcommand{\intset}{{\cal I}}
\newcommand{\intseti}[1][i]{\intset_{#1}}
\newcommand{\intsets}{\intset\!\!\!\intset}
\newcommand{\cost}{c}
\newcommand{\approxratio}{\beta}
\newcommand{\stair}[1]{\text{\sc Stair}\left(#1\right)}
\newcommand{\monoints}[1]{\text{\sc MonoInts}\left(#1\right)}
\newcommand{\resampalg}[2][\intsets]{\text{\sc Resample}\left(#2, #1\right)}
\newcommand{\discalg}[2][\eps]{\text{\sc Disc}_{#1}\left(#2\right)}
\newcommand{\statalg}[2][\eps]{\text{\sc Iron}_{#1}\left(#2\right)}
\newcommand{\corralg}[2][\eps]{\text{\sc Mono}_{#1}\left(#2\right)}
\newcommand{\combalg}[2][\eps]{\text{\sc Comb}_{#1}\left(#2\right)}
\newcommand{\trimalg}[2][\eps]{\text{\sc Trim}_{#1}\left(#2\right)}
\newcommand{\corralloc}{\hat{x}}
\newcommand{\corralloci}[1][i]{{\corralloc_{#1}}}
\newcommand{\mumax}{\mu_{\max}}
\newcommand{\softO}{\tilde{O}}
\newcommand{\stairfrac}{\delta}
\begin{document}

\title{Bayesian Algorithmic Mechanism Design}

\iffalse
\numberofauthors{2}

\author{
\alignauthor
Jason D. Hartline\titlenote{Supported in part by NSF Grant CCF-0830773 and NSF Career Award CCF-0846113.}\\
       \affaddr{Northwestern University}\\
       \affaddr{Evanston, IL, USA}\\
       \email{hartline@eecs.northwestern.edu}
\alignauthor
Brendan Lucier\\
       \affaddr{University of Toronto}\\
       \affaddr{Toronto, ON, Canada}\\
       \email{blucier@cs.toronto.edu}
}
\fi

\author{
Jason D. Hartline\thanks{Supported in part by NSF Grant CCF-0830773 and NSF Career Award CCF-0846113.} \thanks{Northwestern University, Evanston, IL, USA.  email: \texttt{hartline@eecs.northwestern.edu}}
\and
Brendan Lucier\thanks{University of Toronto, Toronto, ON, Canada.  email: \texttt{blucier@cs.toronto.edu}}
}

\maketitle

\begin{abstract}
The principal problem in algorithmic mechanism design is in merging
the incentive constraints imposed by selfish behavior with the
algorithmic constraints imposed by computational intractability.  This
field is motivated by the observation that the preeminent approach for
designing incentive compatible mechanisms, namely that of Vickrey,
Clarke, and Groves; and the central approach for circumventing
computational obstacles, that of approximation algorithms, are
fundamentally incompatible: natural applications of the VCG approach
to an approximation algorithm fails to yield an incentive compatible
mechanism.  We consider relaxing the desideratum of (ex post)
incentive compatibility (IC) to Bayesian incentive compatibility
(BIC), where truthtelling is a Bayes-Nash equilibrium (the standard
notion of incentive compatibility in economics).  For welfare
maximization in single-parameter agent settings, we give a general
black-box reduction that turns any approximation algorithm into a
Bayesian incentive compatible mechanism with essentially the
same\footnote{More specifically, we obtain a polynomial time 
  approximation scheme with an $\epsilon$ loss that is either 
  additive or multiplicative, depending on the problem setting. 
  This error term arises from statistical methods that seem
  necessary for a black-box reduction.} approximation factor.  
%In
%other words, for a large relevant class of problems and slight
%relaxation of the standard (in computer science) solution concept, we
%solve the central problem in algorithm mechanism design.
\end{abstract}

%\begin{document}
%\setcounter{pqage}{0}

%\begin{titlepage}

%\category{J.4}{Social And Behavioral Sciences}{Economics}

%\terms{Algorithms, Economics, Theory}

%\keywords{Mechanism design, Bayesian incentive compatibility, algorithms, social welfare.}

% remove page number.
%\renewcommand{\thepage}{}

%\end{titlepage}

%\newpage

\section{Introduction}

%%
%% reduction
%%
{\em Can any approximation algorithm be converted into an
  approximation mechanism for selfish agents?}  This question is
framed by a fundamental incompatibility between the standard economic
approach for the design of mechanisms for selfish agents (the {\em
  Vickrey-Clarke-Groves} (VCG) mechanism) and the standard algorithmic
approach for circumventing computational intractability (approximation
algorithms).  The conclusion from this incompatibility, driving much
of the field of algorithmic mechanism design, is that incentive and
algorithmic constraints must be dealt with simultaneously (See e.g.,
\cite{lav-07}).  For a large, important class of problems, we arrive
at the opposite conclusion: {\em there is a general
  approximation-preserving reduction from mechanism design to
  algorithm design!}

%%
%% BIC
%%
The goal of mechanism design is to construct the rules for a system of
agents so that in the equilibrium of selfish agent behavior a desired
objective is obtained.  For settings of incomplete information the
standard game theoretic equilibrium concept is {\em Bayes-Nash
  equilibrium} (BNE), which is defined by mutual best response when
the {\em prior distribution} of agent payoffs is {\em common knowledge}.  The
{\em revelation principle}~\cite{mye-81} suggests that when looking
for mechanisms with desirable Bayes-Nash equilibria, one must look no
further than those with truthtelling as a Bayes-Nash equilibrium, also
known as {\em Bayesian incentive compatible} (BIC) mechanisms.  Almost
all of the computer science literature has focused on the BIC subclass
of {\em ex post incentive compatible} (IC) mechanisms where
truthtelling is a {\em dominant strategy}.  While IC is aesthetically
appealing because it is congruous with worst-case-style results, it
is not generally without loss!  

%%
%% motivation for BIC
%%
This loss is evident in the computer science theory of IC mechanism
design, which is described most characteristically by impossibility.  For
instance, for single-minded combinatorial auctions of $m$ items, the
optimal worst-case approximation factor (under standard complexity
theoretic assumptions) is $\sqrt m$~\cite{LOS-99}.  With such
strong lower bounds, a relevant theory must make relaxations.  For many
problems within the realm of computer systems; e.g., online auctions
(eBay), advertising auctions (Google, Yahoo!, MSN), file sharing
(BitTorrent), routing (TCP/IP), scheduling (SETI@home), and video
streaming (YouTube); high volume should enable demand distributions to
be estimated.  With demand distributions, the natural algorithmic and
mechanism design problems are Bayesian.

%%
%% single-dimensional and monotonicity
%%
Agent incentives in Bayesian mechanism design are very well understood
in single-parameter settings, where each agent has a single
independent private value for receiving a service (see,
e.g,~\cite{mye-81}).  For the single parameter setting it is known
that a mechanism is BIC if and only if (a) the probability an agent is
served (a.k.a.~the {\em allocation rule}) is monotone non-decreasing
in the agent's value for service, and (b) the agent's expected payment
(a.k.a.~the {\em payment rule}) is of a particular form 
identified precisely from the
allocation rule.\footnote{Probabilities and expectations above are
  taken with respect to both the distribution of agent values and
  possible randomization in the mechanism.}

%
% monotonizing rules.
%
The main challenge in reducing BIC (or IC) mechanism design to
algorithm design is that approximation algorithms do not generally
have monotone allocation rules.  Our reduction shows that in a
Bayesian setting we can convert any non-monotone allocation rule into
a monotone one without compromising its social welfare.  The main
technical observation that enables this reduction is that, in a
Bayesian setting, we can focus on a single agent for whom the
allocation rule is not monotone, apply a transformation that fixes
this non-monotonicity (and weakly improves our objective), and {\em no
  other agents are affected} (in a Bayesian sense).  Therefore, we can
apply the transformation independently to each agent.
%
%
% reduction
%
Our reduction is as follows:
\begin{enumerate}
\item For each agent, identify intervals in which the agent's
  allocation rule is non-monotone.  (This is a property of the
  distribution and algorithm and can be done prior to considering 
  any agent bids.)
\item For each agent, if their bid falls in an (above identified)
  interval, redraw the agent's bid from the prior distribution conditioned
  on being within the interval.
\item Run the approximation algorithm on the resulting bids and
  output its solution.
\end{enumerate}
Notice that under the assumption that the original values are drawn
according to the common prior, the redrawing of values does not alter
this prior.

%%
%% ironing, sampling, prices
%%
Three items must be clarified.  First, there are many ways one might
try to choose intervals in Step 1 of the reduction and most of them
are incompatible with mechanism design.  To address this issue, we
develop a monotonizing technique for allocation rules (adapted from
the standard {\em ironing procedure} from the field of Bayesian
optimal mechanism design~\cite{mye-81}).  Second, we are unlikely to
have access the functional form of the allocation rule.  To address
this issue, we estimate the allocation rule by sampling the distribution and
making black-box calls to the algorithm.  These estimates can be made
precise enough to enable arbitrary small loss in welfare (i.e., a
fully polynomial time approximation scheme).  Finally, we must
also determine payments for our monotonized allocation rule.  For
this, a general approach of Archer et al.~\cite{APTT-03} suffices.

%%
%% objective and feasibility
%%
%Our results apply generally to single-parameter agent settings where
%the designer's objective is to maximize the social welfare (e.g.,
%single-minded combinatorial auctions).
%Agents each have a private value for service.  The designer either has
%a feasibility constraint on the set of agents that can be
%simultaneously served or a cost function over the set of served
%agents.  The social welfare is the sum of the values of the agents
%served less the designer's cost.
Our results apply generally to single-parameter agent settings where
the designer's objective is to maximize the social welfare (e.g.,
single-minded combinatorial auctions).  In the most general form, such
an algorithmic problem can be written as finding an allocation $\allocs
= (\alloci[1],\ldots,\alloci[n])$ to maximize $\sum_i \vali \alloci -
\cost(\allocs)$ for agent valuations $\vals =
(\vali[1],\ldots,\vali[n])$ and cost function $\cost(\cdot)$.  For
instance, the {\em multicast auction} problem of Feigenbaum et
al.~\cite{FPS-00} is the special case where the $\cost(\allocs)$ is
the sum of the costs of tree edges necessary to connect all agents
served by $\allocs$ to the root (generally, the Steiner tree problem).
A special and relevant case occurs when costs are zero for $\allocs$ in
some feasible set system $\feasibles$ and all other allocations are
infeasible (i.e, $\cost(\allocs) = 0$ if $\allocs \in \feasibles$ and
$\infty$ otherwise).  For the single-item auction, $\feasibles$ is the
collection of all sets of cardinality at most one; and for single-minded combinatorial
auctions, $\feasibles$ contains all sets of agents with non-intersecting
desired bundles.  Our most general result does not need any
restrictions on the cost function or the set system.  In particular,
costs can be arbitrarily non-monotone or the set system
non-downward-closed (e.g. public good or scheduling problems).

%%
%% results
%% 
For any $\epsilon$, we give a black box reduction that, in polynomial
time in the number of agents and $1/\epsilon$, converts any
approximation algorithm into a BIC
mechanism with an additive loss of $\epsilon$ to the social welfare.  We also give a pseudo-polynomial time reduction to a BIC mechanism with a multiplicative loss of $\epsilon$, and a fully polynomial time approximation scheme for the special case of downward-closed feasibility problems.
Thus, the
approximation complexity of social welfare in single-parameter
settings is the same for algorithms and BIC mechanisms.

%%
%% application of our results.
%%
For the most studied single-parameter mechanism design problems, the
performance of the best ex post IC approximation mechanism matches the
best approximation algorithm (e.g., single-minded combinatorial
auctions~\cite{LOS-99} and related machine scheduling~\cite{DDDR-08}).
None-the-less, our approach gives the best known BIC approximation
mechanism for many problems, such as auctions under various graph
constraints~\cite{AADK-02} and auctions of convex
bundles~\cite{BB-04}.

%%
%% conclusion
%%
Our result demonstrates that there is no gap between algorithmic approximation and
approximation by BIC mechanisms.  The remaining (theoretical)
question of gaps in approximation factors imposed by
incentive constraints is thus focused on whether BIC is more powerful
than IC for social welfare maximization.  For other
non-welfare-maximization objectives (e.g., makespan) the question of a
general reduction remains open.

\paragraph*{Related Work}

The design of ex post IC mechanisms for social welfare problems is
well studied, notably for the specific settings of combinatorial
auctions \cite{APTT-03, BB-04, LS-05, LOS-99}.  Lehmann et
al.~\cite{LOS-99} introduced the problem of polynomial time
approximation of social welfare for single-minded combinatorial
auctions and give a mechanism that matches the best algorithmic
approximation factor.  Archer et al.~\cite{APTT-03} considered the
setting where there are many (at least logarithmic) copies of each
item and gave a $(1+\epsilon)$-approximation mechanism.  Archer and
Tardos~\cite{AT-01} gave a (single-parameter) related machine
scheduling mechanism that approximates the makespan.  Dhangwatnotai et
al.~\cite{DDDR-08} gave a mechanism for related machine sched\-ul\-ing
that approximates makespan and matches the algorithmic lower bound.
All of the above results are for ex post incentive compatible
mechanisms.

There has been a large literature on multi-parameter combinatorial
auctions and approximation, but this is only tangentially related to
our work so we do not cite it exhaustively.

The literature contains a few reductions from mechanism design to 
algorithm design of varying degrees of generality.
Lavi and Swamy \cite{LS-05} consider IC mechanisms for multi-parameter
packing problems and give a technique for constructing a
(randomized) $\approxratio$-approximation mechanism from any
$\approxratio$-approximation algorithm that verifies an integrality
gap.  Babaioff et al.~\cite{BLP-09} look at the equilibrium notion of
{\em algorithmic implementation in undominated strategies} and gives a
technique for turning a $\approxratio$-algorithm into a
$\approxratio (\log \val_{max})$-approximation mechanism.  This
solution concept requires that no agent plays a strategy that is
dominated by an easy to find strategy.  Their approach applies to
single-valued combinatorial auctions and does not require the
mechanism to know which bundles each agent desires.

There have been a few related studies of Bayes-Nash equilibrium.
Christodoulou et al.~\cite{CKS-08} consider Bayes-Nash equilibria of
simultaneous Vickrey auctions in a combinatorial setting and show that
these give a 2-approximation when agents' valuations are submodular.
Gairing et al.~\cite{GMT-05} consider Bayes-Nash equilibria of a
routing game and study worst-case performance.
Borodin and Lucier~\cite{BL-10} study worst-case performance of Bayes-Nash equilibria in combinatorial mechanisms based on greedy algorithms.

There are many papers on profit maximization that consider
Bayesian design settings.  These papers do not tend to consider
computational constraints and for many of these (non-computational)
settings the restriction to ex post incentive compatibility is without
loss.%
\footnote{One non-computational setting where ex post incentive
  compatibility is with loss is when the profit maximizing seller has
  a strict no-deficit constraint \cite{CHRR-06}.}  One notable
exception is Bhattacharya et al.~\cite{BGGM-09} which focuses on the
problem of selling heterogeneous goods to agents with linear
valuations.  They construct a polynomial time $4$-approximation
mechanism.  Their approximation result requires that the type
distributions satisfy the monotone hazard rate assumption.  In a
spirit similar to this paper, they make heavy use of the Bayesian
setting to obtain a polynomial runtime.

\paragraph*{Organization} 
We describe in detail the model for single-parameter agents, Bayesian
approximation, Bayesian incentive compatibility, and foundational
economic theory in Section~\ref{sec:model}.  In
Section~\ref{sec:ideal} we give the reduction in an ideal setting
where the allocation rule of the algorithm for the given distribution
on agent values is precisely known.  This reduction is lossless.  In
Section~\ref{sec:black-box} we develop the reduction in the black-box
model where we must sample the distribution and run the algorithm to
determine its allocation.  Conclusions and open problems are discussed
in Section~\ref{sec:conclusions}.

\section{Model and Definitions}

\label{sec:model}

\paragraph*{Algorithms}
%
% single-parameter agents
%
We consider algorithms for binary single-pa\-ram\-e\-ter agent settings.  An
algorithm in such a setting must select a set of agents to serve.
This {\em allocation} is denoted by $\allocs =
(\alloci[1],\ldots,\alloci[n])$ where $\alloci$ is an indicator for
whether or not agent $i$ is served.  Agent $i$ has {\em valuation}
$\vali$ for being served.  Without loss for non-negative
bounded-support distributions, we will assume $\vali \in
[0,1]$.\footnote{The bounded support assumption is unnecessary except
  for our results using sampling, where we believe it is realistic.}
The vector $\vals = (\vali[1],\ldots,\vali[n])$ of valuations is the
{\em valuation profile}.  

In the {\em general costs setting}, the seller may have some cost
function $\cost(\cdot)$ over allocations representing the cost of
serving the allocated set (e.g., Steiner tree problems \cite{FPS-00}).
The {\em general feasibility setting} is the special case where 
costs are zero (feasible) or infinity (infeasible). 
%General
%feasibility problems 
These
include scheduling and public good problems.  An
important subclass are 
%{\em downward-closed (feasibility) settings}
{\em downward-closed settings}
where any subset of a feasible set is feasible.  Downward closed
settings include single-minded combinatorial auctions~\cite{LOS-99}
and knapsack auctions~\cite{AH-06}.

%
% algorithms and objective
%
An algorithm $\alg$ is simply an {\em allocation rule} that maps
valuation profiles to allocations.  The allocation rule for $\alg$ we
will denote by $\allocs(\vals)$.  Our objective is the {\em social
  welfare} which is $\alg(\vals) = \sum_i \vali \alloci(\vals) -
\cost(\allocs(\vals))$.  We allow $\alg$ to be randomized in which
case $\alloci(\vals)$ is a random variable; $\alg(\vals)$ denotes the
expected welfare of the algorithm for valuation profile $\vals$.
$\OPT(\vals)$ will denote the maximum social welfare.

%
% Bayesian
%
We will consider these algorithmic problems in a Bayesian
(a.k.a.~stochastic) setting where the valuations of the agents are
drawn from a product distribution $\dists = \disti[1] \times \cdots
\times \disti[n]$.  Agent $i$'s {\em cumulative distribution} and {\em
  density} functions are denoted $\disti$ and $\densi$, respectively.
The distribution is assumed to be {\em common knowledge} to the
agents and designer.  
% Note: for now these definitions are being kept here, as we state the
% main theorem using them (even though we prove a weaker version that
% does not require them).
%Some of our bounds will be based on
%$\val_{max}$, an upper bound on any agent's valuation, and $\mumax =
%\max_i\expect{\vali}$, the maximum expected valuation of any agent.

%%
%% worst-case and Bayesian approximation.
%%
The pair $(\cost(\cdot),\dists)$ defines a setting for
single-parameter algorithm design which we will take as implicit.  For
this setting, the optimal expected welfare is $\OPT = \expect[\vals
  \sim \dists]{\OPT(\vals)}$ and the algorithm's expected welfare is
$\alg = \expect[\vals \sim \dists]{\alg(\vals)}$.  An algorithm
is a {\em worst-case $\approxratio$-approximation} if for all $\vals$,
$\alg(\vals) \geq \OPT(\vals)/\approxratio$.  An algorithm is a {\em Bayesian
  $\approxratio$-approx\-imation} if $\alg \geq \OPT/\approxratio$.

\paragraph*{Mechanisms}
%%
%% mechanism design
%%
A mechanism $\mech$ consists of an {\em allocation rule} and a {\em
  payment rule}.  We denote by $\allocs(\vals)$ and $\prices(\vals)$
the allocation and payment rule of an implicit mechanism $\mech$.  We
assume agents are {\em risk neutral} and individually desire
to maximize their expected {\em utilities}.  Agent $i$'s utility for
allocation $\allocs$ and payments $\prices$ is $\vali\alloci -
\pricei$.  We consider single-round, sealed-bid mechanisms where
agents simultaneously bid and the mechanism then computes the
allocation and payments.

%%
%% IC vs BIC
%%
Our goal is a mechanism that has good social welfare in
equilibrium.  The standard economic notion of equilibrium for games of
incomplete information is {\em Bayes-Nash equilibrium} (BNE).  The
revelation principle says that any equilibrium that is implementable in
BNE is implementable with truthtelling as the BNE strategies of the
agents.\footnote{The revelation principle holds even in computational
  settings; any BNE for which the agent strategies and the mechanism
  can be computed in polynomial time can be converted into a
  polynomial time BIC mechanism.}  Meaning: an agent that believes 
%that
the other agents are reporting their values truthfully as given by the
distribution has a best response of also reporting truthfully.  A
mechanism with truthtelling as a BNE is {\em Bayesian incentive
  compatible} (BIC).\footnote{Much of the computer science literature
  on mechanism design focuses on dominant strategy equilibrium (DSE)
  and {\em ex post incentive compatibility} (IC).  This is not without
  loss in many settings and therefore should be considered with care
  when addressing computational questions in mechanism design.}

%%
%% ex post IR
%%
%All of our mechanisms will be ex post {\em individual rational} (IR).
%Meaning: any truthful reporting agent will obtain non-negative utility
%in worst-case.  

%%
%% characterization
%%
It will be useful to consider agent $i$'s expected payment and
probability of allocation conditioned on their value.  To this end, denote
$\pricei(\vali) = \expect[\vals,\alg]{\pricei(\vals) \given \vali}$
and $\alloci(\vali) = \expect[\vals,\alg]{\alloci(\vals) \given \vali}$.  The following theorem characterizes BIC mechanisms.
\begin{theorem} \cite{mye-81} \label{t:bic}
A mechanism is BIC if and only if for all agents $i$:
\begin{itemize}
\item $\alloci(\vali)$ is monotone non-decreasing, and
\item $\pricei(\vali) = \vali \alloci(\vali) - \int_0^{\vali}
  \alloci(z) dz + \pricei(0)$.
\end{itemize}
Usually, $\pricei(0)$ is assumed to be zero.
\end{theorem}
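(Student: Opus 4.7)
The plan is to prove both directions of the characterization using a standard envelope-type argument. Throughout, let $U_i(\vali) = \vali \alloci(\vali) - \pricei(\vali)$ denote agent $i$'s interim (expected) utility from truthful reporting at value $\vali$, and let $u_i(\vali, z) = \vali \alloci(z) - \pricei(z)$ denote agent $i$'s interim utility from reporting $z$ when the true value is $\vali$. BIC is precisely the condition $U_i(\vali) \geq u_i(\vali, z)$ for all $i, \vali, z$.

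For the forward direction, I would fix agent $i$ and two values $\vali > \vali'$ and write the two BIC inequalities $U_i(\vali) \geq u_i(\vali, \vali')$ and $U_i(\vali') \geq u_i(\vali', \vali)$. Adding these cancels the payment terms and yields $(\vali - \vali')(\alloci(\vali) - \alloci(\vali')) \geq 0$, which is monotonicity. Rearranging the same two inequalities in terms of $U_i$ gives the envelope bounds
\[
(\vali - \vali') \alloci(\vali') \;\leq\; U_i(\vali) - U_i(\vali') \;\leq\; (\vali - \vali')\alloci(\vali).
\]
These force $U_i$ to be absolutely continuous with derivative equal to $\alloci(\vali)$ wherever the latter is continuous; integrating from $0$ to $\vali$ gives $U_i(\vali) = U_i(0) + \int_0^{\vali} \alloci(z)\, dz$. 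Substituting back $U_i(\vali) = \vali \alloci(\vali) - \pricei(\vali)$ and $U_i(0) = -\pricei(0)$ produces exactly the claimed payment identity.

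For the reverse direction, I would assume monotonicity of $\alloci$ and the payment formula, and directly compute
\[
U_i(\vali) - u_i(\vali, z) \;=\; \bigl[\vali \alloci(\vali) - \pricei(\vali)\bigr] - \bigl[\vali \alloci(z) - \pricei(z)\bigr].
\]
Plugging in the payment formula for $\pricei(\vali)$ and $\pricei(z)$, the $\pricei(0)$ constants cancel, the $\vali \alloci(\vali)$ terms cancel, and after collecting what remains one gets $U_i(\vali) - u_i(\vali, z) = \int_z^{\vali} \bigl(\alloci(y) - \alloci(z)\bigr)\, dy$. Monotonicity of $\alloci$ makes the integrand non-negative on $[z, \vali]$ when $\vali > z$ and non-positive on $[\vali, z]$ when $\vali < z$, so in either case the signed integral is non-negative. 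Hence truthful reporting is a best response, establishing BIC.

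The argument has no real obstacles beyond being careful at one spot: in the forward direction, the passage from the discrete envelope inequalities to the integral representation of $U_i$ requires observing that those inequalities make $U_i$ convex (as a supremum of affine functions of $\vali$) and therefore absolutely continuous, so the fundamental theorem of calculus applies with $\alloci$ as its derivative almost everywhere. Everything else is direct algebra.
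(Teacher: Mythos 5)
Your proof is correct. The paper does not prove this statement at all---it is quoted as Myerson's characterization with a citation to \cite{mye-81}---and your argument is exactly the standard one from that source: adding the two incentive constraints to get monotonicity, the envelope sandwich $(\vali-\vali')\alloci(\vali') \leq U_i(\vali)-U_i(\vali') \leq (\vali-\vali')\alloci(\vali)$ to pin down $U_i$ (using that $U_i$ is a supremum of affine functions, hence convex and, with $\alloci$ bounded, Lipschitz, while monotone $\alloci$ is continuous almost everywhere, so $U_i' = \alloci$ a.e.\ and the fundamental theorem of calculus applies), and the direct computation $U_i(\vali)-u_i(\vali,z)=\int_z^{\vali}\left(\alloci(y)-\alloci(z)\right)dy \geq 0$ for the converse.
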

This motivates the following definition:
\begin{definition}
An allocation rule $\allocs(\cdot)$ is {\em monotone} for distribution
$\dists$ if $\alloci(\vali)$ is monotone non-decreasing for all $i$.
An algorithm is {\em monotone} if its allocation rule is monotone.  
\end{definition}
From Theorem~\ref{t:bic}, BIC and monotone are equivalent and we will
use them interchangeably for both algorithms and mechanisms, though
we will prefer ``BIC'' when the focus is incentive properties and
``monotone'' when the focus is algorithmic properties.

\paragraph*{Computation}
%%
%% computational model
%%
Our main task in demonstrating that the approximation complexity of
algorithms and BIC mechanisms is the same by giving an
approximation-preserving reduction from the BIC mechanism design
problem to the algorithm design problem.  In other words, we use
the algorithm's allocation rule to compute the mechanism's allocation
and payment rules.  As we are in a Bayesian setting this computation
will also need access to the distribution.  

We consider two models of
computation: an {\em ideal model} and a {\em black-box model}.
In the ideal model, we will assume we have explicit access to the
functional form of the distribution and allocation rule and we will
assume we can perform calculus on these functions.  While this model
is not realistic, we present it for the sake of clarity in explaining
the economic theory that drives our results.  In the black-box
model we will assume we can query the algorithm on
any input and that we can sample from the distribution on any
subinterval of its support.  Our philosophy is that the ideal model is
predictive of what is implementable in polynomial time and we verify
this philosophy by instantiating approximately the same reduction
under the black-box model.

%% We give our algorithm runtimes in the black-box model in terms of the
%% number of calls to the algorithm black-box.  This is justified as each
%% such call requires at most a linear number of calls to the
%% distribution black-box and no significant additional computation is
%% needed.

\section{Reduction: Ideal Model}
\label{sec:ideal}

In this section we prove that, in the ideal model, any Bayesian 
algorithm can be made BIC without loss of performance.

\begin{theorem}
\label{t:ideal}
In the ideal model and general cost settings, a BIC algorithm $\ialg$ can be computed from any algorithm $\alg$.  Its expected social welfare satisfies $\ialg \geq \alg$.
\end{theorem}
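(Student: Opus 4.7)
The plan is to iron the algorithm's allocation rule one agent at a time. Fix agent $i$, and let the interim allocation rule of $\alg$ be
\[
\alloci(\vali) = \expect[\vals,\alg]{\alloci(\vals) \given \vali},
\]
reparametrized in quantile space as $H(q) = \int_0^q \alloci(\disti^{-1}(t))\,dt$ for $q\in[0,1]$. Observe that $\alloci(\vali)$ is monotone non-decreasing in $\vali$ if and only if $H$ is convex. Let $\Gbar$ denote the largest convex function on $[0,1]$ lying pointwise below $H$ (its convex hull from below); by construction $\Gbar(0)=H(0)=0$, $\Gbar(1)=H(1)$, and the maximal open subintervals of $[0,1]$ where $\Gbar<H$, which we call the \emph{ironing intervals}, are each ones on which $\Gbar$ is linear.

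Now define the modified algorithm $\ialg$: on a reported profile $\vals$, independently for each agent $i$, if $\disti(\vali)$ lies in an ironing interval $(q_1,q_2)$ redraw $\vali$ from $\disti$ conditioned on $\vali \in (\disti^{-1}(q_1),\disti^{-1}(q_2))$, and otherwise leave $\vali$ untouched; then invoke $\alg$ on the resulting profile. Because each coordinate is either left alone or replaced by a fresh draw from its own conditional marginal, the joint distribution of the profile handed to $\alg$ remains $\dists$, which is the key fact exploited throughout.

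To verify monotonicity, note that outside the ironing intervals the interim allocation is unchanged and equals $H'$, while inside an ironing interval the resampling forces the interim allocation to be constant and equal to the interval average, which is exactly the slope of the linear piece of $\Gbar$. Hence the new interim allocation is $\Gbar'(\disti(\vali))$, non-decreasing in $\vali$ by convexity of $\Gbar$ and monotonicity of $\disti$; by Theorem~\ref{t:bic}, $\ialg$ is BIC. For welfare, the contributions from every agent $j\ne i$ and from $\cost(\allocs)$ are determined only by the joint input distribution to $\alg$, which is preserved, so only agent $i$'s term $\expect{\vali \alloci(\vali)}$ can change. Its change evaluates to
\[
\int_0^1 \disti^{-1}(q)\bigl(\Gbar'(q)-H'(q)\bigr)\,dq = -\int_0^1 (\disti^{-1})'(q)\,\bigl(\Gbar(q)-H(q)\bigr)\,dq \ge 0,
\]
where the equality is integration by parts (the boundary term vanishes because $\Gbar$ and $H$ agree at $0$ and $1$) and the inequality uses that $\disti^{-1}$ is non-decreasing together with $\Gbar\le H$ pointwise.

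Finally, iterate this transformation over agents. Each iteration preserves the joint input distribution to $\alg$, and therefore preserves the interim allocation of every other agent; in particular the monotonicity secured for previously-ironed agents is not destroyed. After $n$ iterations the resulting algorithm is monotone (hence BIC) and its expected welfare has only weakly increased at each step, establishing $\ialg \ge \alg$. I expect the main obstacle to be pinning down the precise ironing rule: the convex-hull-in-quantile-space choice is exactly what makes the integration-by-parts step yield $\Gbar \le H$, and thereby turns a purely per-agent local modification into a globally welfare-improving transformation.
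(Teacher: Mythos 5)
Your proof is correct and follows essentially the same route as the paper: resample each agent's value on the intervals determined by the convex hull of the cumulative allocation curve in quantile space, use the fact that resampling preserves the input distribution to $\alg$ (so costs and all other agents' interim allocations are unaffected), and conclude monotonicity from convexity of the hull and welfare improvement from $\Gbar \le G$ with matching endpoints. The only cosmetic differences are that you compare the per-agent welfare terms by integration by parts where the paper interchanges the order of integration, and that you iron agents sequentially rather than simultaneously---both variants are equivalent given the distribution-preservation property.
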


Theorem~\ref{t:ideal} implies an immediate corollary for Bayesian
approximation. 

\begin{corollary}
\label{c:ideal}
In the ideal model and general cost settings, a BIC Bayesian
$\approxratio$-approximation $\ialg$ can be computed from
any Bayesian $\approxratio$-approximation algorithm, $\alg$.
\end{corollary}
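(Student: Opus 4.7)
The plan is to derive the corollary as an immediate consequence of Theorem~\ref{t:ideal}, chaining the welfare guarantees. There is essentially no further work required beyond unpacking the definition of Bayesian $\approxratio$-approximation.

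First, I would invoke Theorem~\ref{t:ideal} on the given algorithm $\alg$ to obtain a BIC algorithm $\ialg$ whose expected social welfare satisfies $\ialg \geq \alg$. Crucially, Theorem~\ref{t:ideal} applies in the general cost setting and makes no assumption on $\alg$ other than that it is a well-defined algorithm in the sense of the model, so the hypothesis that $\alg$ happens to be a Bayesian $\approxratio$-approximation imposes no additional constraint on the applicability of the construction.

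Second, I would combine this inequality with the assumption on $\alg$. By definition, $\alg$ being a Bayesian $\approxratio$-approximation means $\alg \geq \OPT/\approxratio$, where both quantities are expectations over $\vals \sim \dists$. Chaining gives
\[
\ialg \;\geq\; \alg \;\geq\; \OPT/\approxratio,
\]
which is precisely the definition of $\ialg$ being a Bayesian $\approxratio$-approximation. Since $\ialg$ is BIC by Theorem~\ref{t:ideal}, the corollary follows.

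The only conceivable subtlety is purely bookkeeping: one should confirm that the welfare measure $\alg$ used in Theorem~\ref{t:ideal}'s conclusion is the same expected welfare $\expect[\vals \sim \dists]{\alg(\vals)}$ referenced in the Bayesian approximation definition, which is immediate from the setup in Section~\ref{sec:model}. There is no real obstacle here; the corollary simply repackages Theorem~\ref{t:ideal} in the language of approximation ratios.
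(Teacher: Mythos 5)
Your proposal is correct and matches the paper exactly: the paper treats Corollary~\ref{c:ideal} as an immediate consequence of Theorem~\ref{t:ideal}, chaining $\ialg \geq \alg \geq \OPT/\approxratio$ while noting $\ialg$ is BIC. Nothing further is needed.
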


Corollary \ref{c:ideal} applies in the special case that $\alg$ 
is a worst-case $\approxratio$-approximation, but the resulting
BIC algorithm $\ialg$ will not necessarily be a worst-case
$\approxratio$-approximation.  See Appendix~\ref{app:Bayesian-approx} 
for a concrete example.

Let us build some intuition for the requirements of Theorem \ref{t:ideal}. 
Suppose that we are given an algorithm $\alg$ that is monotone for the
distribution $\dists$.  Then $\alg$ is already BIC and specifies the
allocation rule $\allocs(\cdot)$, so we must only compute the payment
rule.  In our ideal model this is trivial given the formula from
Theorem~\ref{t:bic}.

Now suppose we have a non-monotone Bayesian $\approxratio$-approx\-imation
algorithm $\alg$ with allocation rule $\allocs(\cdot)$.
%\footnote{
%  We note that Theorem \ref{t:ideal} can be applied in the special case 
%  where $\alg$ is a worst-case $\approxratio$-approximation algorithm, 
%  though $\ialg$ is only guaranteed to be a $\approxratio$-approximation 
%  in the (weaker) Bayesian sense.
%}
We would
like to use $\alg$ to construct a monotone algorithm $\ialg$ from which we
can obtain a BIC mechanism by simply computing the payment
rule as above.  We must make sure that in doing so we do not reduce the
algorithm's expected welfare.  The key property of our approach which
makes it tractable is that we monotonize each agent's allocation rule
independently without changing (in a Bayesian sense) the allocation rule
any other agent faces.  This property is also important for the
approximation factor as $\alg$ is guaranteed to be a Bayesian
$\approxratio$-approximation only for the given distribution $\dists$, and may not
be a good approximation for some other distribution.

\pagebreak[1]

In summary, the desiderata for monotonizing agent $i$ are:
\begin{enumerate}
\item[D1.] monotone $\ialloci(\vali)$,
\item[D2.] (weakly) improved social welfare $\expect[\vali]{\vali\ialloci(\vali)}
  \geq \expect[\vali]{\vali\alloci(\vali)}$, and
\item[D3.] other agents unaffected.
\end{enumerate}
Notice that if we satisfy the last condition we can apply the process
simultaneously to all agents.

\subsection{Ironing via Resampling}
\label{sec.iron}

There is a history of fixing non-monotonicities in Bayesian mechanism
design.  %Notably, 
Myerson invented the technique of {\em ironing} which
relies on the fact that if an allocation rule is constant over some
interval then any agent within that interval is effectively
equivalent to a canonical ``average'' agent from that interval.
Myerson applied this theory to iron {\em virtual valuation functions}
which are 
%a crucial construct 
used
in Bayesian profit maximization
\cite{mye-81}.  We will apply this theory directly to allocation
rules.

Before we describe our ironing procedure in full, let us develop some more
intuition.  Suppose allocation rule $\alloci(\cdot)$ of $\alg$ is
non-monotone for agent $i$. A simple approach to flattening non-monotonicities is to choose some
interval $[a,b]$ on which $\alloc(\cdot)$ is non-monotone, and to
treat the agent identically whenever on this interval.  For example,
whenever $\vali \in [a,b]$ we could choose to pretend that $\vali$ is
actually some other fixed value $\val'$ (e.g.~$\val' = a$) and pass
this ``pretend'' value $\val'$ to the algorithm.  Unfortunately, if
we take this na\"ive approach, we would have changed the distribution
of agent $i$'s input to the algorithm (in particular, the probability of
value $\val'$ would be increased) and violated D3.  In order to maintain
D3 we make a minor modification: instead of picking a fixed $\val'$, we will
draw $\val'$ from $\disti$ restricted to the interval $[a,b]$.  Thus, we are
replacing $\vali \in [a,b]$ with $\val'$ drawn from the same distribution. 
Other agents cannot tell the difference -- this operation does not change the
distribution of agent $i$'s input!  Moreover, agent $i$ will indeed be treated
identically whenever $\vali \in [a,b]$: the new probability of allocation will
be precisely the distribution weighted average of $\alloci(\cdot)$ over the
interval $[a,b]$.

\begin{figure}
\begin{center}
\begin{tabular}{cc}
\includegraphics[width=2.8in]{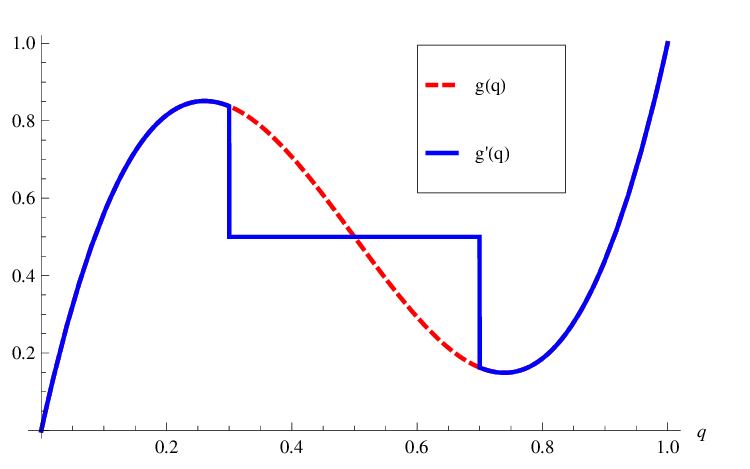} &
\includegraphics[width=2.8in]{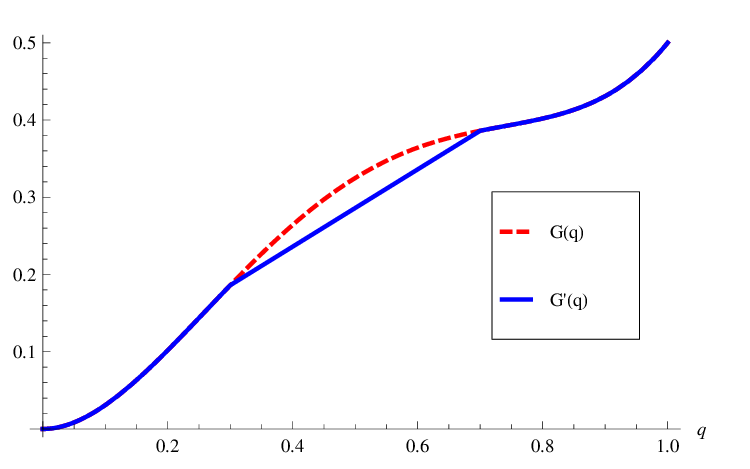} \\
(a) &
(b)\\
\end{tabular}
\end{center}
\caption{\small (a) A non-monotone ironing $g'$ (solid) of curve $g$ (dashed).
  (b) The corresponding integral curves $G'$ (solid) and $G$ (dashed) in probability
  space.}
\label{fig:a-b-ironing}
\end{figure}

%Consider a single agent with value from distribution
%$\dist$ and non-monotone allocation rule $\alloc(\cdot)$.  It will be
%useful to consider this allocation rule in probability space, $g(q) =
%\alloc(\dist^{-1}(q))$, and the {\em cumulative allocation rule} $G(q)
%= \int_0^q g(z)dz$.  Notice that monotonicity of $\alloc(\cdot)$ is
%equivalent to monotonicity of $g(\cdot)$ which is equivalent to
%convexity of $G(\cdot)$.

%\pagebreak

Let $\alloci'(\cdot)$ represent the allocation rule obtained from the
following procedure (where $\valsmi \sim \distsmi$):
\begin{itemize}
\item if $\vali \in [a,b]$, redraw $\val' \sim \disti$ restricted to $[a,b]$; else, set $\val' = \vali$.
\item run $\alg(\val',\valsmi)$.
\end{itemize}

We say that $\alloci'(\cdot)$ is the curve $\alloci(\cdot)$ \emph{ironed on interval $[a,b]$}.  We note that $\alloci'(\vali) = \alloci(\vali)$ for $\vali \not\in [a,b]$ and $\alloci'(\vali) = \expect[\val'\sim\disti]{\alloci(\val')~|~\val'\in[a,b]}$ otherwise.  Note that we can easily iron along multiple disjoint intervals, redrawing $\val'$ from whichever interval contains $\vali'$ (if any).

We now explore a method for choosing intervals on which to iron in order to
obtain monotonicity.  It will be instructive to consider the \emph{allocation rule in probability space} instead of valuation space, and the \emph{cumulative allocation rule} (also in probability space).
\begin{itemize}
\item Let $g(q) = \alloci(\disti^{-1}(q))$ be the allocation rule in probability space.
\item Let $G(q) = \int_0^q g(z)dz$ be the cumulative allocation rule.
\end{itemize}
Notice that monotonicity of $\alloci(\cdot)$ is
equivalent to monotonicity of $g(\cdot)$ which is equivalent to
convexity of $G(\cdot)$.

Let $\alloci'(\cdot)$ be $\alloci(\cdot)$ ironed along some interval $[a,b]$, and consider the corresponding curves $g'(\cdot)$ and $G'(\cdot)$.
This ironing procedure corresponds to replacing $g(\cdot)$
with its average on $[a,b]$, or equivalently $G(\cdot)$ with the line segment connecting $G(\dist(a))$ to $G(\dist(b))$ (See
Figure~\ref{fig:a-b-ironing}).\footnote{Note that the transformation
  to probability space (from valuation space) is necessary for
  obtaining this line-segment interpretation.}  This latter line
segment interpretation suggests that we can view our interval selection problem as the problem of replacing portions of curve $G$ with straight line segments so that the resulting curve $\Gbar$ will be convex.  This is precisely the problem of finding the convex hull of $G$!  Thus the choice of intervals that monotonizes $\alloci(\cdot)$ (satisfying~D1) is precisely the set of intervals defined by the convex hull of $G(\cdot)$.  See
Figure~\ref{fig:mono-ironing}.

Finally, since the convex hull of $G(\cdot)$ lies below $G(\cdot)$, 
%by analogy to first-order stochastic dominance
this transformation weakly improves welfare (satisfying~D2).  Informally
speaking, in moving from cumulative allocation rule $G(\cdot)$ to $\Gbar(\cdot)$, we lower the probability of low-value allocations in
exchange for a corresponding increase in the probability that 
higher-valued allocations occur.  This intuition is made more precise in Lemma \ref{lem.ideal.approx}, below.

\begin{figure}
\begin{center}
\begin{tabular}{cc}
\includegraphics[width=2.8in]{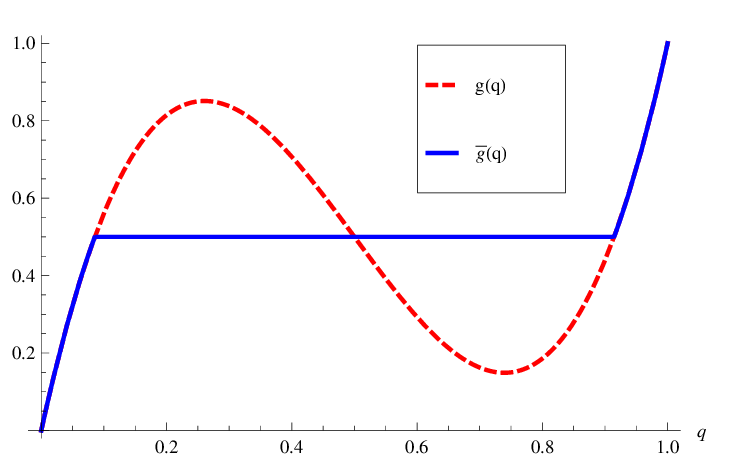} &
\includegraphics[width=2.8in]{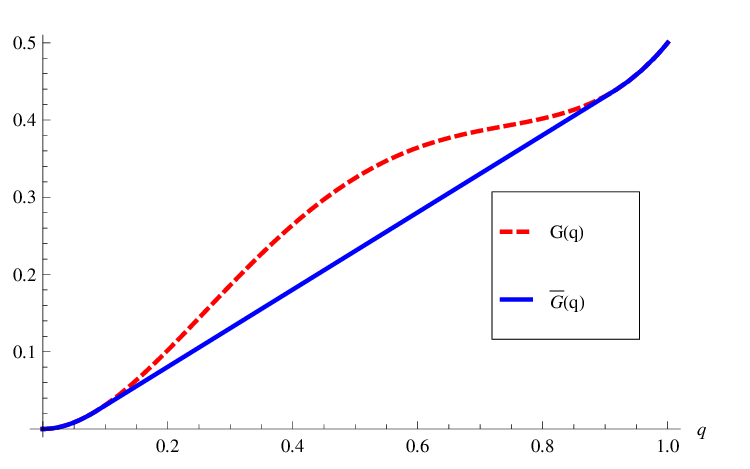} \\
(a) &
(b)\\
\end{tabular}
\end{center}
\caption{\small (a) A monotone ironing $\gbar$ (solid) of curve
  $g$ (dashed).  (b) The corresponding integral curves $\Gbar$
  (solid) and $G$ (dashed) in probability space.  Note $\Gbar$ is
  the convex hull of $G$. }
\label{fig:mono-ironing}
\end{figure}

\subsection{The Ironed Algorithm}
\label{sec.ironed}

We are now ready to define our ironed algorithm $\ialg$.  Given 
distribution $\dist$ and interval $I$, we will write ${\dist}[I]$ 
to mean $\dist$ restricted to $I$.

\begin{definition}[$\resampalg{\alg}$]
\label{def:resampalg}
Given algorithm $\alg$ and a {\em profile of disjoint interval sets}, $\intsets = \{\cali_1,\dotsc,\cali_n\}$, the {\em resampled algorithm} for $\alg$ with intervals $\intsets$ is algorithm $\resampalg{\alg}$:
\begin{enumerate}
\item For each agent $i$, if $\vali \in I \in \intseti$,
  draw $\ivali~\sim {\disti}[I]$; else, set $\ivali = \vali$.
\item Run $\alg(\ivals)$.
\end{enumerate}
\end{definition}
\begin{definition}[$\monoints{\allocs}$]
\label{def:intset}
The {\em set of monoton\-izing intervals for $\allocs(\cdot)$} is
  $\monoints{\allocs} = (\intseti[1],\ldots,\intseti[n])$ defined by:
\begin{enumerate}
\item Let $g_i(q) = \alloci(\disti^{-1}(q))$ be the allocation rule in probability space.
\item Let $G_i(q) = \int_0^q g_i(z)dz$ be the cumulative allocation rule.
\item Let $\Gbar_i(\cdot)$ be the convex hull of $G_i(\cdot)$.
\item Let $\intseti$ be the set of intervals in valuation
  space on which $G_i(\disti(\cdot)) > \Gbar_i(\disti(\cdot))$.
\end{enumerate}
\end{definition}
\begin{definition}[$\ialg$]
\label{def:ialg}
The {\em ironed algorithm} corresponding to algorithm $\alg$ is the algorithm $\ialg = \resampalg[\monoints{\allocs}]{\alg}$.
\end{definition}

\begin{lemma} 
\label{lem.ideal.monotone}
$\ialg$ is monotone.
\end{lemma}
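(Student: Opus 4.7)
\medskip
\noindent\textbf{Proof plan for Lemma \ref{lem.ideal.monotone}.}
The plan is to fix an arbitrary agent $i$, compute the ironed allocation rule $\ialloci(\vali)$ explicitly, and identify it with the derivative of the convex hull $\Gbar_i$ (in probability space). Monotonicity will then follow immediately from convexity of $\Gbar_i$ and monotonicity of $\disti$.

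The first step is to observe that the resampling step for agent $i$ does not affect the marginal distribution of $\ivali$: if $\vali \sim \disti$ and we redraw $\ivali \sim \disti[I]$ whenever $\vali \in I$, then $\ivali \sim \disti$. Consequently, when computing the conditional allocation probability of agent $i$ in $\ialg$, we may treat the other agents' resampled valuations $\ivalsmi$ as being drawn from $\distsmi$ independently of agent $i$. In particular, $\ialloci(\vali) = \expect[\ivali \given \vali]{\alloci(\ivali)}$ where the inner $\alloci(\cdot)$ is the original (unironed) conditional allocation rule of $\alg$.

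Next I carry out the case analysis. For $\vali \notin \bigcup_{I \in \intseti} I$, resampling leaves the value untouched, so $\ialloci(\vali) = \alloci(\vali) = g_i(\disti(\vali))$. At such a point $q = \disti(\vali)$ the convex hull agrees with $G_i$ (by definition of $\intseti$), and so $\Gbar_i'(q) = g_i(q)$ almost everywhere on the ``unironed'' set. For $\vali \in I = [a,b] \in \intseti$, resampling gives
\[
\ialloci(\vali) \;=\; \expect[\val' \sim \disti[I]]{\alloci(\val')} \;=\; \frac{1}{\disti(b)-\disti(a)}\int_{\disti(a)}^{\disti(b)} g_i(q)\,dq \;=\; \frac{G_i(\disti(b))-G_i(\disti(a))}{\disti(b)-\disti(a)}.
\]
Because $a$ and $b$ are boundary points of a maximal interval on which $G_i > \Gbar_i$, we have $G_i(\disti(a)) = \Gbar_i(\disti(a))$ and $G_i(\disti(b)) = \Gbar_i(\disti(b))$, so the right-hand side equals the slope of $\Gbar_i$ on $[\disti(a),\disti(b)]$, i.e.\ $\Gbar_i'(\disti(\vali))$. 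Thus in both cases $\ialloci(\vali) = \Gbar_i'(\disti(\vali))$.

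Finally, since $\Gbar_i$ is the convex hull of $G_i$, it is convex, so $\Gbar_i'$ is non-decreasing; composing with the non-decreasing function $\disti$ preserves monotonicity, giving D1. The only mildly subtle point is the boundary identification ensuring the resampled average coincides with a secant slope of $G_i$ (and hence of $\Gbar_i$), and handling the measure-zero set where $\Gbar_i$ is not differentiable; both are routine once one works in probability space, which is the main reason for the change of variables $q = \disti(\vali)$ introduced in Definition~\ref{def:intset}.
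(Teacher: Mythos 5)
Your proposal is correct and follows essentially the same route as the paper: the paper's proof simply asserts the identity $\ialloci(\vali) = \gbar_i(\disti(\vali))$ and concludes monotonicity from convexity of $\Gbar_i$, while you supply the (correct) verification of that identity via the case analysis on ironed versus unironed values and the observation that $G_i$ and $\Gbar_i$ agree at the endpoints of the ironed intervals. Your added details (marginal distribution preserved under resampling, secant-slope identification) are exactly the facts the paper leaves implicit, so this is the same argument written out more fully.
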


\begin{proof}
We must show that each agent has a monotone allocation rule.
%By construction 
The allocation rule for agent $i$ is precisely $\ialloci(\vali) =
\gbar(\disti(\vali))$, which is the derivative of a convex function and 
therefore monotone.
\end{proof}

\begin{lemma}
\label{lem.ideal.approx}
If $\alg$ is a Bayesian $\approxratio$-approximation then $\ialg$ is a Bayesian $\approxratio$-approximation.
\end{lemma}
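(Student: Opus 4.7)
My plan is to establish $\ialg \geq \alg$ directly on the expected-welfare decomposition; combined with the hypothesis $\alg \geq \OPT/\approxratio$, this immediately yields the claim. Since $\ialg(\vals) = \sum_i \vali\,\alloci(\ivals) - \cost(\allocs(\ivals))$, where $\ivals$ denotes the resampled profile produced inside $\ialg$, I will analyze the value contributions and the cost contribution separately.

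For the cost contribution, I will use the observation already emphasized as desideratum D3: regardless of whether $\vali$ lies in an ironing interval of $\intseti$ or not, the marginal distribution of the resampled value $\ivali$ is exactly $\disti$, and the resamplings are performed independently across agents. Hence $\ivals \sim \dists$, and therefore $\expect{\cost(\allocs(\ivals))} = \expect{\cost(\allocs(\vals))}$. This handles general cost functions uniformly, so no assumption of downward-closedness or positivity of cost is needed.

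For the value contribution, I will prove $\expect[\vali]{\vali\,\ialloci(\vali)} \geq \expect[\vali]{\vali\,\alloci(\vali)}$ for each agent $i$. The first step is to identify $\ialloci(\vali) = \gbar_i(\disti(\vali))$: conditioning on $\vali$, the other agents' resampled values $\ivalsmi$ still have joint distribution $\distsmi$, so $\ialloci$ is just the average of $\alloci$ over the resampling rule for $\vali$, which coincides with the construction in Section~\ref{sec.iron} and Definition~\ref{def:intset}. After this, changing variables to probability space via $q = \disti(\vali)$ reduces the desired inequality to $\int_0^1 \disti^{-1}(q)\,\gbar_i(q)\,dq \geq \int_0^1 \disti^{-1}(q)\,g_i(q)\,dq$. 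Integrating by parts, the difference equals $\int_0^1 (\disti^{-1})'(q)\,[G_i(q) - \Gbar_i(q)]\,dq$, which is nonnegative since $(\disti^{-1})' \geq 0$ (as $\disti$ is a CDF) and $G_i \geq \Gbar_i$ pointwise (because $\Gbar_i$ is by construction the convex hull of $G_i$).

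The main obstacle I expect is checking that the boundary terms in the integration by parts vanish, i.e., $G_i(0) = \Gbar_i(0)$ and $G_i(1) = \Gbar_i(1)$: the first is immediate since both are integrals from $0$, and the second follows because outside the ironing intervals $\gbar_i = g_i$, while inside each ironing interval the averaging preserves the integral of $g_i$. Once this is in hand, summing the value-term inequality over $i$ and combining with the equality of cost terms yields $\ialg \geq \alg \geq \OPT/\approxratio$, completing the proof.
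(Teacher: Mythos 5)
Your proposal is correct and follows essentially the same route as the paper: expected costs are unchanged because resampling preserves the input distribution, and the per-agent value comparison is done in probability space using $\Gbar_i \leq G_i$ together with agreement at the endpoints. The only difference is cosmetic—you establish $\int_0^1 \disti^{-1}(q)\,[\gbar_i(q)-g_i(q)]\,dq \geq 0$ by integration by parts (which tacitly assumes $\disti^{-1}$ is differentiable, harmless in the ideal model), whereas the paper swaps the order of integration to write each side as $\int_0^1 \bigl(G_i(1)-G_i(\disti(z))\bigr)dz$ and compares pointwise, avoiding any derivative of $\disti^{-1}$.
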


\begin{proof}
First notice that
the two allocation rules produce the same distribution over
allocations and therefore expected costs are identical.
We will show, for a single agent $i$, that $\expect{\vali \ialloci(\vali)}
\geq \expect{\vali \alloci(\vali)}$, from which linearity of
expectation implies the result.
We have 
\begin{align*}
\expect{\vali \alloci(\vali)} & = \int_0^1 v \alloci(v) \densi(v)\,dv =  
\int_{0}^{1}\dist_i^{-1}(q)g_i(q)\,dq \\
& = \int_0^1 \int_0^{\dist_i^{-1}(q)}g_i(q)\,dz\,dq = \int_{0}^{1}\int_{\disti(z)}^{1}g_i(q)\,dq\,dz \\
& = \int_{0}^{1}\left(G_i(1)-G_i(\disti(z))\right)\,dz
\end{align*}
and similarly $\expect{\vali \ialloci(\vali)} = \int_{0}^{1}(\Gbar_i(1)-\Gbar_i(\disti(z)))dz$.  We conclude 
$\expect{\vali \ialloci(\vali)} \geq \expect{\vali \alloci(\vali)}$ 
since $\Gbar_i(1) = G_i(1)$ and $\Gbar_i(\disti(z)) \leq G_i(\disti(z))$ for all 
$z \in [0,1]$.
\end{proof}

Theorem~\ref{t:ideal} follows from
Lemmas~\ref{lem.ideal.monotone} and~\ref{lem.ideal.approx}.  

% Notes
\paragraph{Notes.}
We make the following notes about our main result.  A more detailed
discussion is given in our conclusions.
\begin{itemize}
\item The argument fails for non-linear objectives such as makespan.
  While D2 holds, it will not lead to an overall bound on the
  expected performance.  See Appendix~\ref{app:makespan} for an
  example.

\item Our ironing procedure is distinct from Myerson's, in the sense
  that Myerson's procedure yields a different mechanism.  Myerson irons
  virtual valuations and allocates to maximize ironed virtual value.
  This is not the same as maximizing virtual value and then ironing
  the allocation rule where it is non-monotone.  See
  Appendix~\ref{app:Myerson} for an example.

\item Even if $\alg$ is a worst-case $c$-approximation, $\ialg$ may fail to
  be a worst-case $c$-approximation.  See
  Appendix~\ref{app:Bayesian-approx} for an example.
\end{itemize}

\section{Reduction: Black-Box Model}
\label{sec:black-box}

We now turn to a setting in which we do not have full functional access to allocation rules, but only black-box access to the given
algorithm $\alg$ and valuation distribution $\dists$.
We will use the ironing procedure from the previous section to monotonize an algorithm in this black-box model.  
Instead of using direct knowledge of the allocation rule, we must use sampling to estimate it.  This sampling introduces errors in the selection of interval sets for resampling, which must then be dealt with.  
Our analysis will proceed in the following steps.

\begin{enumerate}
\item We describe a method for computing payments in the black-box model.
\item We describe a method for combining sampling with ironing to obtain
  a nearly monotone algorithm.  In fact, this algorithm will be
  $\epsilon$-Bayesian incentive compatible.
\item We show that a convex combination of this nearly monotone
  algorithm with a blatantly monotone one will give a monotone algorithm,
  resulting in a BIC mechanism.
\end{enumerate}

All of these steps approximately preserve social welfare.  We obtain the following theorem.

\begin{theorem}
\label{thm.main.bic.2}
In the black-box model and general cost settings, for any $\epsilon > 0$,
a BIC algorithm 
$\alg'$
can be computed from any 
algorithm $\alg$.  Its expected social welfare satisfies 
$\alg' \geq \alg - \epsilon$, and its
runtime is
polynomial in $n$ and $1/\epsilon$.
\end{theorem}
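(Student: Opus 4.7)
The plan is to execute the three-step strategy laid out at the head of the section, approximately emulating the ideal ironing procedure of Section~\ref{sec:ideal} with bounded sampling error, and then restoring exact monotonicity via a convex-combination trick.

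First, I would address the payment computation by adapting the self-resampling scheme of Archer et al.~\cite{APTT-03}: for each agent $i$, draw a random threshold $t \sim \disti$, pass $\min(\vali,t)$ to the (eventually-monotonized) allocation routine, and use a suitably scaled version of the resulting indicator as an unbiased estimator of the integral $\vali\alloci(\vali)-\int_0^{\vali}\alloci(z)\,dz$ in Theorem~\ref{t:bic}. Because this uses black-box allocation calls only, it commutes with whatever monotone rule I end up building, so payments cost one extra round of queries per agent.

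Second, I would implement the ironing of Section~\ref{sec.iron} by sampling. Discretize the probability space $[0,1]$ for each agent $i$ into $k=\Theta(n/\epsilon)$ equal-mass bins under $\disti$, and for each bin $B$ estimate the average $\hat g_i(B)$ of $g_i$ on $B$ by drawing $\valsmi\sim\distsmi$ and $\vali\sim\disti[B]$ some $N=\mathrm{poly}(n,1/\epsilon)$ times and invoking $\alg$. A Chernoff bound plus a union bound over the $nk$ bins makes every $\hat g_i(B)$ accurate to $\epsilon/\mathrm{poly}(n)$ simultaneously with high probability. I form the piecewise-linear cumulative curve $\hat G_i$, take its lower convex hull $\hat{\Gbar}_i$, read off the valuation-space intervals on which $\hat G_i>\hat{\Gbar}_i$ as my estimated monotonizing intervals $\hat\intseti$, and run $\resampalg[\hat\intsets]{\alg}$. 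A bound analogous to Lemma~\ref{lem.ideal.approx} shows that the welfare of this algorithm is within $\epsilon/2$ of $\alg$, while its interim allocation rule fails monotonicity by at most an additive $\epsilon/(4n)$ on each bin boundary.

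Third, I would convert this nearly-monotone algorithm into a truly BIC one by mixing it with weight $1-O(\epsilon)$ against a blatantly and steeply monotone reference algorithm $\alg^{\star}$ whose interim allocation rule grows in $\vali$ at rate at least $\epsilon/(4n)$ per unit quantile, e.g.\ ``with probability $\disti(\vali)$, serve only agent $i$''. The slope of $\alg^{\star}$ strictly dominates the residual $\epsilon/(4n)$ non-monotonicity, so the convex combination has a monotone interim allocation rule for every agent and is therefore BIC by Theorem~\ref{t:bic}; diluting by $\alg^{\star}$ costs $O(\epsilon)$ in welfare, which I absorb into the $\epsilon$ slack of the theorem.

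The main obstacle will be error-tracking in the second step: the convex-hull operator is not Lipschitz in the sup norm on $g_i$, so naive pointwise control of $\hat{\Gbar}_i$ is hopeless. The fix is to work entirely through the integral identity established in the proof of Lemma~\ref{lem.ideal.approx}, namely $\expect{\vali\ialloci(\vali)}=\int_0^1(\Gbar_i(1)-\Gbar_i(\disti(z)))\,dz$, whose dependence on the underlying $G_i$ is $1$-Lipschitz in sup norm. Sampling error on $\hat g_i$ then feeds directly into welfare error without needing pointwise control of the ironed rule, and the polynomial runtime bound follows immediately from the sample complexity $N\cdot k\cdot n=\mathrm{poly}(n,1/\epsilon)$.
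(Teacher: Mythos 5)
Your overall architecture matches the paper's: black-box payments via Archer et al., discretize the allocation rule, estimate it by sampling, iron on the estimated intervals to get an almost-monotone ($\eps$-BIC) algorithm, and then mix with a blatantly monotone algorithm to remove the residual non-monotonicity. However, there is a genuine gap in your third step for the stated level of generality. The theorem covers \emph{general cost settings}, where the singleton allocation $\{i\}$ (and even the empty allocation) may have enormous or infinite cost. Your reference algorithm $\alg^{\star}$ (``with probability $\disti(\vali)$, serve only agent $i$'') may therefore be infeasible or have arbitrarily negative expected welfare, so mixing with it even with weight $O(\eps)$ does not give $\alg' \geq \alg - \eps$, and may not even be well-defined. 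The paper spends real effort here: it samples $\alg$'s own outputs to find, for each agent $i$, allocations $S_i \ni i$ and $T_i \not\ni i$ lying in the range of $\alg$ (whose costs are at most $n$, since WLOG $\alg$ never returns a negative-welfare allocation and values lie in $[0,1]$), and for any agent for whom sampling finds no such sets --- which can only happen when that agent is almost never (or almost always) served, hence contributes little --- it irons that agent's entire value range so that no reference sets are needed. Your proposal implicitly assumes a downward-closed setting where $\{i\}$ and $\emptyset$ are free and feasible; without the trimming/low-cost-set step the construction fails exactly where the theorem claims generality.

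Two smaller issues. First, the arithmetic of your dominance argument does not work as stated: a reference rule with slope $\eps/(4n)$ per unit quantile, diluted by a mixing weight of $O(\eps)$, increases by only $O(\eps^3/n^2)$ across a bin of quantile width $\eps/n$, which cannot dominate a per-boundary non-monotonicity of $\eps/(4n)$. You need stair-style \emph{jumps} of order $1/((k-1)n)$ at every bin boundary and a correspondingly smaller (but still polynomial) sampling error, with the mixing weight tuned as $\Theta(kn)$ times the closeness parameter, as in the paper's $\stair{\cdot}$ construction. Second, your claimed pointwise bound on the residual non-monotonicity is not delivered by the integral-identity/Lipschitz observation (which only controls welfare); what is actually needed is the paper's observation that resampling on a \emph{fixed} interval set maps $\eps$-close allocation rules to $\eps$-close rules, and that ironing the estimated rule on its own monotonizing intervals is exactly monotone --- so the real ironed rule is within the sampling error of a monotone step function. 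That argument is recoverable from what you wrote, but it is the step your error-tracking discussion should have supplied.
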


The additive error in Theorem \ref{thm.main.bic.2} can be converted into a multiplicative error whenever the expected welfare of $\alg$ is not too small.  We obtain the following corollary.

%\begin{theorem}
%\label{thm.main.bic}
%In the black-box model and general cost settings, for any $\epsilon > 0$,
%a BIC algorithm $\corralg{\eps}{\alg}$ can be computed from any Bayesian
%algorithm $\alg$.  Its social welfare satisfies 
%$\corralg{\eps}{\alg}ilon \geq \alg - \epsilon\mumax$, and its
%runtime is
%$\softO(n^{9}\epsilon^{-9}\log^5(\val_{max}/\epsilon\mumax))$.  
%In general feasibility settings, the runtime can be improved to
%$\softO(n^5\epsilon^{-5}\log^3(\val_{max}/\epsilon\mumax))$.
%\end{theorem}

%For the special case of downward-closed set
%systems for feasibility problems, we can assume that $\alg \geq \mumax$, since the trivial
%algorithm that simply allocates to the single player with the highest
%input value attains this value.  This implies the following corollary.

\begin{cor}
\label{cor.main.bic}
In the black-box model and general cost settings,
for any $\epsilon > 0$, 
a BIC algorithm
$\alg'$
can be computed from any algorithm $\alg$.  Its expected 
social welfare satisfies $\alg' \geq \alg/(1+\eps)$, and its
runtime is
polynomial in $n$, $1/\epsilon$, and $1/\alg'$.
\end{cor}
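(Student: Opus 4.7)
The plan is to invoke Theorem~\ref{thm.main.bic.2} with a carefully chosen additive error parameter. If we could apply Theorem~\ref{thm.main.bic.2} with additive error $\epsilon^* = \epsilon\,\alg/(1+\epsilon)$, the resulting BIC algorithm $\alg'$ would satisfy
\[
\alg' \;\geq\; \alg - \epsilon^* \;=\; \alg\bigl(1 - \tfrac{\epsilon}{1+\epsilon}\bigr) \;=\; \alg/(1+\epsilon),
\]
which is exactly the desired multiplicative guarantee. The obstacle is that $\alg$ is not known in advance, so we cannot simply plug in this choice of $\epsilon^*$.

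To resolve this, I would first estimate $\alg$ (up to a constant factor) by black-box sampling and then choose $\epsilon^*$ from this estimate. Each sample---computing the welfare of $\alg$ on an independent valuation profile drawn from $\dists$---lies in a bounded range $[0,M]$ (with $M \leq n$ when values and costs are bounded as in the paper's setup). By Hoeffding's inequality, $O((M/\eta)^2\log(1/\delta))$ samples suffice to estimate $\alg$ within additive error $\eta$ with failure probability at most $\delta$. Since the target accuracy depends on $\alg$, I would use a geometric search: for $j=0,1,2,\ldots$, set $\gamma_j = M\cdot 2^{-j}$, compute an estimate $\hat\alg_j$ of $\alg$ to accuracy $\gamma_j/4$ (with failure probability $\delta_j$ shrinking geometrically so the total failure probability is $O(\delta)$), and halt at the first index $j^*$ with $\hat\alg_{j^*} \geq \gamma_{j^*}$. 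Conditioned on all estimates being accurate, this stopping rule certifies $\alg \geq 3\gamma_{j^*}/4 > 0$.

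Let $\gamma^* = 3\gamma_{j^*}/4$ and $\epsilon^* = \gamma^*\epsilon/(1+\epsilon) \leq \alg\,\epsilon/(1+\epsilon)$. Invoking Theorem~\ref{thm.main.bic.2} with this $\epsilon^*$ yields a BIC algorithm $\alg'$ satisfying $\alg' \geq \alg - \epsilon^* \geq \alg/(1+\epsilon)$, as required. For the runtime, the geometric search terminates at $j^* = O(\log(M/\alg))$, so the sampling phase uses a total of $\mathrm{poly}(M/\alg,\log(1/\delta))$ black-box calls to $\alg$; the invocation of Theorem~\ref{thm.main.bic.2} then runs in time polynomial in $n$ and $1/\epsilon^* = O(1/(\alg\,\epsilon))$. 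Because $\alg' \geq \alg/(1+\epsilon)$ implies $1/\alg \leq (1+\epsilon)/\alg'$, the overall runtime is polynomial in $n$, $1/\epsilon$, and $1/\alg'$, matching the bound claimed in the corollary. The main subtlety to get right is the circular dependence between the additive-error parameter $\epsilon^*$ and the unknown quantity $\alg$; the doubling-search strategy is what cleanly decouples them.
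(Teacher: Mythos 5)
Your overall route is the paper's own: the paper gives no separate proof of this corollary, treating it as an immediate re-parameterization of Theorem~\ref{thm.main.bic.2} (``the additive error \dots can be converted into a multiplicative error whenever the expected welfare of $\alg$ is not too small''), i.e.\ run that construction with additive error $\epsilon^*\approx\epsilon\alg/(1+\epsilon)$, exactly as you do. Your doubling/estimation phase to cope with $\alg$ being unknown is a sensible fleshing-out of a detail the paper leaves implicit (note the welfare samples lie in $[0,n]$ only under the paper's standing WLOG, used in Lemma~\ref{lem.corralg}, that $\alg$ never returns a negative-welfare allocation), and your stopping rule does certify $\gamma^*=\Theta(\alg)$, so the parameter choice goes through.

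Two points to tighten. First, your last runtime step has the inequality backwards: $\alg'\geq\alg/(1+\epsilon)$ gives $1/\alg'\leq(1+\epsilon)/\alg$, not $1/\alg\leq(1+\epsilon)/\alg'$; since the ironing and stair steps can only be bounded from below in welfare, $\alg'$ is not a priori $O(\alg)$, so what your argument actually delivers is a runtime polynomial in $n$, $1/\epsilon$ and $1/\alg$ --- which is also how the paper itself describes the dependence in the sentence following the corollary (``its runtime depends on the expected welfare of $\alg$''); the phrasing in terms of $1/\alg'$ is the paper's own looseness, but you should not justify it with a flipped implication. Second, the welfare guarantee is on expectation over all randomness, including your estimation samples, so the failure event of the geometric search (in which $\epsilon^*$ may be as large as $\Theta(\epsilon n)$) must be charged against the bound $\alg'\geq\alg/(1+\epsilon)$; this is routine with the machinery you already have --- e.g.\ aim for multiplicative loss $\epsilon/2$ in the good event and, using the certified lower bound $\gamma^*\leq\alg$, choose the stage-$j$ failure probabilities so that their total contribution $\delta\cdot O(n)$ is at most a small fraction of $\epsilon\gamma^*$ --- but it needs to be said.
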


Corollary \ref{cor.main.bic} gives a construction with a multiplicative 
error in social welfare, but its runtime depends on the expected welfare
of $\alg$.  In Appendix~\ref{app:gen-vals} we describe an improvement that
removes this dependency, and implies a fully polynomial reduction 
for downward-closed settings.

%  Specifically, one can show that for any $\eps > 0$ and 
%  algorithm $\alg$, it is possible 
%  to construct a BIC algorithm $\alg'$ such that $\alg' 
%  \geq \alg - \eps \max_i \expect{\vali}$, with runtime polynomial in $n$,
%  $1/\eps$, and $\log(1/\max_i \expect{\vali})$.  For the special case of
%  downward-closed set systems for feasibility problems, this implies that 
%  a BIC algorithm $\alg'$
%  can be computed from any algorithm $\alg$,
%  again $\alg' \geq \alg/(1+\eps)$ and with the runtime of $\alg'$ 
%  being polynomial in $n$, $1/\eps$, and 
%  $\log(1/\max_i \expect{\vali})$.

In the remainder of this section we prove of Theorem \ref{thm.main.bic.2}.

\subsection{Computing Payments}

Suppose that $\alg$ has monotone allocation rules.  The problem of
designing a mechanism to implement $\alg$ then reduces to calculating
appropriate payments.  These payments are completely determined by
the allocation rule of $\alg$, but in the black-box model we do not have
direct access to the functional form of the allocation rule.  Archer
et al.~\cite{APTT-03} solve this problem by computing an unbiased
estimator of the desired payment rule using only black-box calls to the
algorithm.  For completeness we now summarize their approach.

\begin{definition}[black-box payments] 
If algorithm $\alg$ does not allocate to agent $i$, then agent $i$ pays $0$.
Otherwise, we compute the payment
of agent $i$ as follows:
\begin{enumerate}
\item Choose $\vali'$ uniformly from $[0,\vali]$
\item Draw $\valsmi' \sim \distsmi$ and run $\alg(\vali',\valsmi')$
\item If $\alg$ allocated to agent $i$ in the previous step set $X = \vali$, otherwise set $X = 0$.
\item If $X \neq 0$, repeatedly draw values $\valsmi' \sim \distsmi$ and run
  $\alg(\vali,\valsmi')$ until the algorithm allocates to player $i$,
  and let $T$ be the number of iterations required.
\item Agent $i$'s payment is $\pricei = \vali - TX$.
\end{enumerate}
\end{definition}

As was shown by Archer et al., this computation attains the
appropriate expected payment.

\begin{claim}[Archer et al.~\cite{APTT-03}] 
In the black-box payment procedure, the expected payments are\\
$\pricei(\vali) = \vali
\alloci(\vali) - \int_0^{\vali} \alloci(z) dz$.
\end{claim}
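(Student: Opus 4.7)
The plan is to decompose the expected payment into two parts corresponding to the two sources of randomness: the initial allocation to agent $i$ (which determines whether a nonzero payment is computed at all), and the randomness internal to the payment procedure (steps 1--5). Since these sources of randomness are independent, I would write
\[
\pricei(\vali) \;=\; \alloci(\vali)\cdot\expect{\vali - TX},
\]
where the outer factor is the probability that $\alg(\vali,\valsmi)$ allocates to agent $i$, and the expectation is taken over the independent draws of $\vali'$ and the $\valsmi'$ vectors used in steps 2 and 4.

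Next I would compute $\expect{X}$ and $\expect{T}$ separately and invoke independence. For $X$: conditional on $\vali'$, step 2 allocates to $i$ with probability $\alloci(\vali')$ (by definition of $\alloci$ as the marginal over $\valsmi$ and algorithm randomness), so
\[
\expect{X} \;=\; \vali \cdot \expect[\vali'\sim U[0,\vali]]{\alloci(\vali')} \;=\; \vali\cdot \tfrac{1}{\vali}\int_0^{\vali}\alloci(z)\,dz \;=\; \int_0^{\vali}\alloci(z)\,dz.
\]
For $T$: each iteration of step 4 independently allocates to $i$ with probability $\alloci(\vali)$ (using a fresh $\valsmi'$), so $T$ is geometric and $\expect{T}=1/\alloci(\vali)$. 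Because $T$ uses random draws disjoint from those determining $X$, the two are independent, giving $\expect{TX} = \tfrac{1}{\alloci(\vali)}\int_0^{\vali}\alloci(z)\,dz$.

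Substituting into the decomposition above yields
\[
\pricei(\vali) \;=\; \alloci(\vali)\!\left(\vali - \tfrac{1}{\alloci(\vali)}\int_0^{\vali}\alloci(z)\,dz\right) \;=\; \vali\,\alloci(\vali) - \int_0^{\vali}\alloci(z)\,dz,
\]
as claimed. There is no real obstacle here; the only subtlety is that $T$ is ill-defined when $\alloci(\vali)=0$, but in that case the initial allocation occurs with probability zero, so this event contributes nothing and one can take the convention $\alloci(\vali)\cdot\expect{TX}=\int_0^{\vali}\alloci(z)\,dz$ by continuity (or note the product $\alloci(\vali)\cdot\expect{T}$ already cancels in the algebra).
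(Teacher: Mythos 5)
Your proposal is correct: the factorization $\pricei(\vali) = \alloci(\vali)\cdot\expect{\vali - TX}$ is justified because the payment procedure uses fresh draws independent of the event that the main run allocated to $i$, and your computations $\expect{X} = \int_0^{\vali}\alloci(z)\,dz$ and $\expect{T} = 1/\alloci(\vali)$ (with $T$ independent of $X$) give exactly the stated formula, with the $\alloci(\vali)=0$ edge case handled sensibly. The paper itself offers no proof of this claim --- it simply cites Archer et al. --- and your argument is essentially the standard unbiased-estimator computation from that reference, so there is nothing further to reconcile.
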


We note that since we execute this procedure for agent $i$ only if he
receives an allocation, which occurs with probability $\alloci(\vali)$, the expected number of calls to $\alg$
for each player is at most
\[\alloci(\vali)\left(1+\tfrac{1}{\alloci(\vali)}\right) \leq 2.\]  
Thus, in expectation, all payments can be computed with $2n$
calls to $\alg$.

Any mechanism paired with the above payment scheme  will be {\em individually rational} (IR), meaning
that a truthtelling agent will never obtain negative utility.  This is
true even if the allocation rule is not monotone.  This follows
immediately from the fact that the payment for an agent that declares
value $v_i$ is never greater than $v_i$ (indeed, it is defined as
$v_i$ minus a non-negative value).

\subsection{Sampling and $\epsilon$-Bayesian Incentive Compatibility}
\label{sec:black-box-bne}

We will be estimating the allocation rule of a non-mono\-tone
algorithm and attempting to iron it.  This will fail to result in an
absolutely monotone rule.  In this section we show that a nearly
monotone rule results in truthtelling as an {\em $\epsilon$-Bayes-Nash
  equilibrium }($\epsilon$-BNE): the most an agent can gain from 
a non-truthtelling strategy is an additive $\epsilon$.  We call such a
mechanism {\em $\epsilon$-Bayesian incentive compatible} ($\epsilon$-BIC).
%While there is no characterization of payment rules
%for $\epsilon$-BIC mechanisms, the payment rule given by
%Theorem~\ref{t:bic} will be sufficient.

\begin{definition}[$\epsilon$-BIC]
A mechanism is $\epsilon$-Bayesian incentive compatible if
truthtelling obtains at least as much utility as any other strategy, up to an additive $\epsilon$, assuming all other agents truthtell.  That is,
for all $i$, $\vali$, and $\val'$, 
$\vali \alloci(\vali) -\pricei(\vali) \geq \vali \alloc(\val') - \pricei(\val') - \epsilon.$
\end{definition}

\noindent
The main theorem of this section is the following.

\begin{theorem}
\label{thm.main.eps-bne}
In the black-box model and general cost settings, for any $\epsilon > 0$,
an 
%$\epsilon \val_{max}$-BIC algorithm $\alg'$ 
$\epsilon$-BIC algorithm $\alg'$ 
can be
computed from any algorithm $\alg$. 
Its expected social welfare satisfies
%$\alg' \geq \alg - \epsilon\mumax$. 
$\alg' \geq \alg - \epsilon$,
and its runtime is
%$\softO(n^3\epsilon^{-2}\log_{1+\epsilon}(\val_{max}/\epsilon\mumax))$.
%$\softO(n^3\epsilon^{-3}\log(\epsilon^{-1}))$.
polynomial in $n$ and $1/\epsilon$.
\end{theorem}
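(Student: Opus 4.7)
The plan is to mimic the ideal-model construction of Section~\ref{sec:ideal} with the exact allocation curve $\alloci(\cdot)$ replaced by a sampling-based estimate $\esti(\cdot)$ on a discretization of $[0,1]$. First, I would fix a discretization width $\delta = \Theta(\epsilon/n)$, partition $[0,1]$ into $K = 1/\delta$ equal cells, and for each agent $i$ and each cell draw $M = \mathrm{poly}(n,1/\epsilon)$ profiles with $\vali$ drawn from $\disti$ restricted to that cell and $\valsmi \sim \distsmi$, run $\alg$ on each, and average the indicator that $i$ is served to produce an estimate $\esti$ of the cell-average of $\alloci(\cdot)$. A Hoeffding bound together with a union bound over the $nK$ cells gives, with high probability, $|\esti - \mathrm{avg}(\alloci)| \leq \delta^2$ simultaneously on every cell.

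I would then invoke Definition~\ref{def:intset} on the piecewise-constant $\esti$: build the cumulative curve $E_i$, compute its lower convex hull $\bar E_i$ in $O(K\log K)$ time per agent, and read off the union of cells $\intseti'$ on which $E_i > \bar E_i$. Let $\alg'$ be a two-layer resampling version: within each cell redraw $\vali$ from $\disti$ restricted to that cell (so that the induced rule becomes piecewise constant at the cell level), and additionally, on cells in $\intseti'$, redraw from the enclosing convex-hull interval as in Definition~\ref{def:resampalg}. Payments are computed via Archer et al.'s black-box estimator of Section~4.1 applied to $\alg'$. Because each resampling step preserves each agent's marginal $\disti$, other agents' rules and the expected cost are unchanged, and $\ialloci$ is cell-wise constant and equals the slope of the corresponding segment of $\bar E_i$ up to the $\delta^2$ estimation error.

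For $\epsilon$-BIC, the Myerson formula gives misreport gain $\int_v^{v'}(\ialloci(z) - \ialloci(v'))\,dz$ under Archer et al.'s payments; monotonicity of the slopes of $\bar E_i$ together with uniform $\delta^2$ estimation error bounds the integrand pointwise by $2\delta^2$, so any deviation gains at most $O(\delta^2)$, well within $\epsilon$. The welfare bound parallels Lemma~\ref{lem.ideal.approx}: ironing the estimated curve weakly improves $\expect{\vali \esti(\disti(\vali))}$, and swapping $\esti$ for the true $\alloci$ costs at most $\delta$ per agent in expected welfare, for total additive loss $n\delta = O(\epsilon)$.

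The main technical obstacle is propagating sampling and discretization error through the nonsmooth lower-convex-hull operator. The enabling observation is that this operator is $1$-Lipschitz in sup norm on cumulative curves: a $\pm\delta^2$ uniform perturbation of $\esti$ induces a $\pm\delta^2$ uniform perturbation of $E_i$ and hence of $\bar E_i$, which simultaneously controls the monotonicity gap and the welfare loss. Polynomial runtime is immediate: total sampling is $nKM = \mathrm{poly}(n,1/\epsilon)$, per-agent convex-hull construction is $O(K\log K)$, and Archer et al.'s payment procedure takes expected $O(n)$ additional calls to $\alg'$.
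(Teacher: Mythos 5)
Your proposal is correct and follows essentially the same route as the paper: discretize by in-cell resampling so each allocation rule is piecewise constant on polynomially many pieces, estimate it by sampling, iron the \emph{true} algorithm on the intervals determined by the convex hull of the estimated cumulative curve in quantile space, compute payments via Archer et al., and conclude from the fact that a rule that is close to a monotone one is approximately BIC with only $O(n\delta)$ welfare loss (your uniform grid versus the paper's geometric grid is immaterial for an additive guarantee). The one caveat is your ``$1$-Lipschitz convex hull'' framing --- sup-norm closeness of the hulls does not by itself control their slopes, which are what the allocation rules are --- but the argument you actually invoke in the incentive step (on each ironing interval the realized rule is the $\disti$-weighted average of the true cell values, hence within the estimation error of the corresponding hull slope, which is monotone) is precisely the paper's Lemma~\ref{lem.close.iron} and closes that point; you should also fold the small estimation-failure probability into the closeness and welfare accounting, as the paper does in Lemma~\ref{lem.statalg}.
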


The resampling procedure from the previous section is the main
workhorse for Theorem \ref{thm.main.eps-bne}.  The construction of
algorithm $\alg'$ consists primarily of choosing interval sets on
which to resample.

%As in Corollary \ref{cor.main.bic}, for the special case of downward-closed set systems, we can assume
%that $\alg \geq \mumax$ (since any algorithm could simply choose to allocate only to the one player with highest expected value).  This implies the following corollary.

%\begin{cor}
%\label{cor.main.eps-bne}
%In the black-box model, for any $\epsilon > 0$, an $\epsilon
%\val_{max}$-BIC Bayesian $\approxratio(1+\epsilon)$-approximation
%algorithm $\alg'$ can be computed from any Bayesian
%$\approxratio$-approximation algorithm $\alg$ for a downward-closed
%set system.  The running time of $\alg'$ is at most
%$\softO(n^3\epsilon^{-2}\log_{1+\epsilon}(\val_{max}/\epsilon\mumax)$.
%\end{cor}

\subsubsection{$\epsilon$-closeness}

We now formalize a closeness property under which an allocation rule
that is close to monotone 
%(with the payment rule from Theorem~\ref{t:bic}) 
is $\epsilon$-BIC (for some related $\epsilon$).

\begin{definition}[$\epsilon$-close]
Given allocation rules $\alloc(\cdot)$ and $\alloc'(\cdot)$ are
$\epsilon$-close if $\abs{\alloc(\val) - \alloc'(\val) } < \epsilon$
for all $\val$.  Two algorithms or mechanisms are $\epsilon$-close
if each agent's allocation rules are $\epsilon$-close.
\end{definition}

\begin{lemma}
\label{lem.eps-BNE}
If non-monotone $\alg'$ is $\epsilon$-close to a monotone $\alg$,
then $\alg'$ is 
%$(2\epsilon \val_{max})$-BIC.
$(2\epsilon)$-BIC.
\end{lemma}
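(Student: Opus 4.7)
The plan is to reduce the $2\epsilon$-BIC inequality for $\alg'$ to the (exact) BIC inequality for the monotone algorithm $\alg$, with the factor of two arising from applying $\epsilon$-closeness in two separate places in the same chain of inequalities. Throughout I interpret ``$\alg'$ is $2\epsilon$-BIC'' as referring to $\alg'$ paired with the black-box payment procedure of Section~\ref{sec:black-box}, so that by Claim~4.2 the expected payments satisfy $\pricei'(\vali) = \vali \alloci'(\vali) - \int_0^{\vali}\alloci'(z)\,dz$; crucially, the derivation of this payment formula does not require monotonicity of $\alg'$.

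Given this payment rule, the first step will be to write the utility gain from a deviation as a single expression. An agent with true value $\vali$ reporting $\val'$ gains, relative to truthtelling,
\[
\bigl(\vali\alloci'(\val') - \pricei'(\val')\bigr) - \bigl(\vali\alloci'(\vali) - \pricei'(\vali)\bigr) = (\vali-\val')\alloci'(\val') - \int_{\val'}^{\vali}\alloci'(z)\,dz,
\]
so the task reduces to showing this quantity is at most $2\epsilon$ for all $\vali,\val' \in [0,1]$. Next I would observe that the corresponding quantity for the monotone algorithm $\alg$ is non-positive: since $\alloci(\cdot)$ is non-decreasing, $(\vali-\val')\alloci(\val') \le \int_{\val'}^{\vali}\alloci(z)\,dz$ for both signs of $\vali - \val'$ (when $\vali \ge \val'$ the integrand dominates $\alloci(\val')$ on $[\val',\vali]$; when $\vali < \val'$ both sides change sign together with the interval, giving the same inequality).

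The final step is to invoke $\epsilon$-closeness twice. Pointwise $|\alloci'(z) - \alloci(z)| < \epsilon$ implies
\[
\left|\int_{\val'}^{\vali}\alloci'(z)\,dz - \int_{\val'}^{\vali}\alloci(z)\,dz\right| \le \epsilon|\vali - \val'| \quad \text{and} \quad \left|(\vali-\val')\alloci'(\val') - (\vali-\val')\alloci(\val')\right| \le \epsilon|\vali - \val'|,
\]
and substituting both perturbations into the monotonicity inequality for $\alg$ yields
\[
(\vali-\val')\alloci'(\val') - \int_{\val'}^{\vali}\alloci'(z)\,dz \le 2\epsilon|\vali-\val'| \le 2\epsilon,
\]
where the last step uses the bounded-support assumption $\vali,\val' \in [0,1]$ from Section~\ref{sec:model}.

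The main obstacle, really a bookkeeping point, will be to verify that when $\vali < \val'$ the integral and the factor $(\vali - \val')$ are both negative, so the $\epsilon$-closeness inequalities must be applied in the opposite direction; I expect both cases to collapse to the same clean $2\epsilon$ bound. The bounded-support assumption is essential here, as it is what converts the $\epsilon|\vali-\val'|$ error into a pure additive $2\epsilon$; without it one would only recover an $\epsilon \cdot \vmax$-BIC guarantee.
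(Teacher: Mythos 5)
Your proposal is correct and follows essentially the same route as the paper's proof: both write the deviation gain via the Myerson payment formula as $(\vali-\val')\alloci'(\val') - \int_{\val'}^{\vali}\alloci'(z)\,dz$ and apply $\epsilon$-closeness twice (the paper bounds the integrand pointwise through the monotone curve, $\alloci'(z) \geq \alloci'(\val') - 2\epsilon$, which is the same double perturbation you make), arriving at the bound $2\epsilon|\vali-\val'| \leq 2\epsilon$. Your explicit handling of the $\vali < \val'$ case and the remark that the payment identity needs no monotonicity are fine and consistent with the paper, which simply notes that case is "similar."
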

\begin{proof}
Suppose agent $i$ is participating in $\alg'$ and has value $\vali$, but claims to have value $\vali'$.  Assume $\vali > \vali'$; the opposite case is similar.  Using the payment rule from Theorem~\ref{t:bic}, agent $i$'s gain in utility from declaring $\vali'$ is:
\begin{equation}
\label{eq2}
(\vali \alloci'(\vali') - \pricei(\vali')) - (\vali \alloci'(\vali) - \pricei(\vali)) = (\vali - \vali')\alloci'(\vali') - \int_{\vali'}^{\vali}\alloci'(z) dz.
\end{equation}
Since $\alloci'(\cdot)$ is $\eps$-close to a monotone curve, it must be that $\alloci'(z) + \eps \geq \alloci'(\vali') - \eps$ for all $z \in [\vali',\vali]$. Thus $\int_{\vali'}^{\vali}\alloci'(z) dz \geq (\vali-\vali')(\alloci'(\vali')-2\eps)$. This implies that the value in $\eqref{eq2}$ is at most $2\eps(\vali-\vali')$, which is at most $2\eps$.
\end{proof}

%Now we show that if two algorithms have $\eps$-close allocation rules, then
%they obtain similar welfare.  In the statement of Lemma \ref{lem.welfare} we assume
%that the expected costs of both rules are the same.  This is implied by our techniques: since our mechanism runs the original
%algorithm on values from the original distribution, the expected
%cost of our mechanism is equal to the expected cost of the original
%algorithm.

\begin{lemma}
\label{lem.welfare}
If $\alg$ and $\alg'$ have the same expected costs\footnote{
  Recall that the \emph{expected cost} of an algorithm $A$ with 
  allocation rule $\allocs(\cdot)$ is 
  $\expect[\vals \sim \dists]{\cost(\allocs(\vals))}$.
} and are
$\epsilon$-close then 
%$\alg' \geq \alg - n\epsilon\mumax$.
$\alg' \geq \alg - n\epsilon$.
\end{lemma}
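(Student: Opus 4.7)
The plan is to exploit the linearity of social welfare together with the assumption of matching expected costs, so that the gap $\alg' - \alg$ reduces to a sum of one-dimensional expectations over each agent's value, each of which is small by $\epsilon$-closeness.

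First I would unwind the definition of expected social welfare. Writing $\alg = \expect[\vals \sim \dists]{\sum_i \vali \alloci(\vals) - \cost(\allocs(\vals))}$ and similarly for $\alg'$, the equal-cost hypothesis cancels the cost terms and leaves
\begin{equation*}
\alg' - \alg = \sum_{i=1}^n \expect[\vals \sim \dists]{\vali \bigl(\alloci'(\vals) - \alloci(\vals)\bigr)}.
\end{equation*}

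Next I would pull the expectation over $\valsmi$ inside using the tower property, recognizing that $\expect[\valsmi \sim \distsmi]{\alloci(\vali,\valsmi)} = \alloci(\vali)$ by the definition given just before Theorem~\ref{t:bic} (and similarly for $\alloci'$). This rewrites the $i$-th term as $\expect[\vali \sim \disti]{\vali (\alloci'(\vali) - \alloci(\vali))}$.

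Finally I would apply the $\epsilon$-closeness and bounded-support assumption $\vali \in [0,1]$ pointwise: $|\vali (\alloci'(\vali) - \alloci(\vali))| \leq \vali \cdot \epsilon \leq \epsilon$ for every value $\vali$. Taking expectations and summing over the $n$ agents gives $|\alg' - \alg| \leq n\epsilon$, hence $\alg' \geq \alg - n\epsilon$. There is no real obstacle here; the only subtlety is being careful that the randomness in $\alg$ (if any) is absorbed into the definition of $\alloci(\vali)$ as a conditional expectation, so the reduction from a multi-dimensional expectation to a one-dimensional one is legitimate.
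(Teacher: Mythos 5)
Your proposal is correct and follows essentially the same route as the paper: cancel the equal expected costs, reduce each agent's welfare contribution to the interim quantity $\expect[\vali]{\vali\,\alloci(\vali)}$, and bound the per-agent gap by $\epsilon\,\expect{\vali} \leq \epsilon$ using $\epsilon$-closeness and $\vali \in [0,1]$, then sum over the $n$ agents. The paper states this more tersely (per-agent inequality plus linearity of expectation), but the content is identical.
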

\begin{proof}
For each agent $i$, $\expect{v_i\alloci'(v_i)} \geq \expect{v_i(\alloci(v_i) - \epsilon)} \geq \expect{v_i\alloci(v_i)} - \epsilon\expect{v_i}$.  The result then follows by linearity of expectation.
\end{proof}

%Finally, we note that the property of $\eps$-closeness persists when algorithms are ironed on some collection of intervals.

\begin{lemma}
\label{lem.close.iron}
If algorithms $\alg$ and $\alg'$ are $\epsilon$-close, then for any collection $\intsets = \{\intseti[1],\dotsc,\intseti[n]\}$ of interval sets, resampled algorithms $\resampalg{\alg}$ and $\resampalg{\alg'}$ are $\epsilon$-close.
\end{lemma}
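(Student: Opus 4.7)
\textbf{Proof proposal for Lemma~\ref{lem.close.iron}.} The plan is to express the allocation rule of a resampled algorithm for agent $i$ as a pure averaging of the original allocation rule, and then observe that pointwise averaging preserves pointwise $\epsilon$-closeness.

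First I would fix an agent $i$ and a value $v_i$, and analyze $\ialloci(v_i)$, the probability that $\resampalg{\alg}$ serves agent $i$ conditioned on $v_i$. Let $\ivali$ denote agent $i$'s (possibly resampled) input to $\alg$: if $v_i \in I \in \intseti$ then $\ivali \sim \disti[I]$, otherwise $\ivali = v_i$. For each other agent $j$, $\ivalj$ is likewise the resampled value, which (marginalizing over $v_j \sim \dist_j$) is still distributed according to $\dist_j$. Moreover, the resampling of $\ivali$ is independent from the resampling of $\ivalsmi$. Hence, conditioning on $v_i$ and $\ivali$, the vector $\ivalsmi$ is distributed exactly as $\distsmi$. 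Since $\alloci(v)$ is by definition $\expect[\valsmi\sim\distsmi,\alg]{\alg_i(v,\valsmi)}$, this gives the key identity
\begin{equation*}
\ialloci(v_i) \;=\; \expect[\ivali]{\alloci(\ivali) \,\mid\, v_i},
\end{equation*}
and by the exact same argument applied to $\alg'$,
\begin{equation*}
\ialloci'(v_i) \;=\; \expect[\ivali]{\alloci'(\ivali) \,\mid\, v_i}.
\end{equation*}

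The rest is immediate: since both expectations are taken against the same conditional distribution on $\ivali$, linearity of expectation yields
\begin{equation*}
\bigl|\ialloci(v_i) - \ialloci'(v_i)\bigr|
\;=\; \Bigl|\expect[\ivali]{\alloci(\ivali) - \alloci'(\ivali) \mid v_i}\Bigr|
\;\leq\; \expect[\ivali]{\bigl|\alloci(\ivali) - \alloci'(\ivali)\bigr|} \;<\; \epsilon,
\end{equation*}
where the last inequality uses $\epsilon$-closeness of $\alg$ and $\alg'$ pointwise in $\ivali$. This holds for every $i$ and every $v_i$, which is exactly the definition of $\epsilon$-closeness for $\resampalg{\alg}$ and $\resampalg{\alg'}$.

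The only subtlety I expect is the justification of the identity $\ialloci(v_i) = \expect[\ivali]{\alloci(\ivali) \mid v_i}$, which hinges on two facts: (i) that resampling is performed independently across agents, so $\ivali$ and $\ivalsmi$ are independent given $v_i$; and (ii) that each other agent's resampled value $\ivalj$ has the same marginal distribution $\dist_j$ as before resampling, so the joint distribution of $\ivalsmi$ is exactly $\distsmi$. Both facts are immediate from Definition~\ref{def:resampalg}, so once this bookkeeping is laid out explicitly, the remainder of the argument is a one-line application of Jensen (triangle) inequality.
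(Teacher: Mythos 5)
Your proof is correct and follows essentially the same route as the paper's: both express the resampled allocation rule for agent $i$ as the conditional average of the original allocation rule over the resampling interval (using that other agents' resampled inputs remain distributed as $\distsmi$), and then conclude by linearity of expectation and the triangle inequality. Your write-up just makes explicit the justification of that identity and the trivial case where $v_i$ lies in no interval, both of which the paper leaves implicit.
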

\begin{proof}
For any $i$, let $\alloci$ and $\alloci'$ be the allocation rules of $\alg$ and $\alg'$, respectively.  Let $\ialloci$ and $\ialloci'$ be the allocation rules of $\resampalg{\alg}$ and $\resampalg{\alg'}$.  Then for any $I \in \intseti$, 
\[|\ialloci(I) - \ialloci'(I)| = \left|E_v[\alloc(v)\ |\ v \in I] - E_v[\alloc'(v)\ |\ v \in I]\right| = \left|E_v[\alloc(v) - \alloc'(v)\ |\ v \in I]\right| < \eps.\]
\end{proof}

Appropriate payments to turn an algorithm that is $\epsilon$-close to
monotone into a mechanism that is $2\epsilon$-BIC can be computed by
the same process we would use for monotone algorithms.

\subsubsection{Discretization}

A key step in our reduction will be in discretizing the allocation
rules of the algorithm.  This reduces the problem of estimating an
allocation rule to estimating its value at a polynomial number of
points.  Moreover, our resulting allocation will not necessarily be
monotone, but there will be only a polynomial number of points at
which it can be non-monotone; we will use this to our advantage when
fixing non-monotonicities in Section \ref{sec:black-box-bic}.

%This is needed for two reasons:
%
%\begin{itemize}
%\item We need to estimate the allocation rule and so we need to pick a
%polynomial number of points on it to estimate.  (This is the focus of the next step.)
%\item Our resulting allocation will be almost monotone.  By
%  discretizing we will end up with a polynomial number of points in
%  which it is non-monotone and we will know exactly where these points
%  are.  (In a subsequent step, a convex combination of the sampled
%  algorithm with a blatantly monotone one will fix the
%  non-monotonicities.)
%\end{itemize}

\begin{definition}[Piecewise constant] An algorithm is \emph{$k$-piece piecewise constant} if for each $i$ there is a partition of valuation space into at most $k$ intervals such that the allocation rule for agent $i$ is constant on each interval.  
\end{definition}
%Our discretization procedure, which converts any algorithm into a piecewise constant algorithm, is the following.

\begin{definition}[$\discalg{\alg}$]
\label{def:discalg}
For a given $\epsilon > 0$ and algorithm $\alg$, the \emph{discretization of algorithm $\alg$}, $\discalg{\alg}$, is $\resampalg{\alg}$, where $\intsets = \{\intseti[1],\dotsc,\intseti[n]\}$ is the collection of intervals defined by 
\[ \intseti = \left\{ \left[0,\eps\right) \right\} \cup  \left\{\left[\eps(1+\epsilon)^t,\eps(1+\epsilon)^{t+1}\right)\right\}_{0 \leq t \leq \log_{1+\eps}(1/\eps)}.\]
\end{definition}

\begin{lemma}
\label{lem.useful.discrete}
$\discalg{\alg}$ is
$\log_{1+\epsilon}(1/\epsilon)$-piece 
piecewise constant
and 
$\discalg{\alg} \geq \alg - 2n\epsilon$.
\end{lemma}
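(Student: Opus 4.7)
The plan is to verify the two assertions separately. The piecewise-constant structure is essentially immediate from Definitions~\ref{def:resampalg} and~\ref{def:discalg}: whenever $\vali$ lies in some $I\in\intseti$, the value actually passed to $\alg$ is $\ivali\sim\disti[I]$, drawn independently of $\vali$, so the allocation rule of $\discalg{\alg}$ restricted to agent $i$ depends on $\vali$ only through the identity of the interval containing it. The partition of $[0,1]$ specified in Definition~\ref{def:discalg} consists of $[0,\eps)$ together with the $\lfloor\log_{1+\eps}(1/\eps)\rfloor+1$ geometric intervals, giving $O(\log_{1+\eps}(1/\eps))$ pieces in total.

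For the welfare bound my plan is to couple the two expectations through the resampling. Writing $\ivals$ for the resampled profile, first observe that since each $\ivali$ is drawn from $\disti$ conditioned on the interval of $\intseti$ that contains $\vali$, the unconditional distribution of $\ivals$ is still $\dists$. Consequently $\expect{\sum_i \ivali\alloci(\ivals) - \cost(\allocs(\ivals))} = \alg$, and subtracting this from the definition of $\discalg{\alg}$ cancels the cost term to give
$$\discalg{\alg} - \alg \;=\; \expect{\textstyle\sum_i (\vali-\ivali)\alloci(\ivals)}.$$

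The remaining step is to lower-bound each summand by $-2\eps$. Since $\vali$ and $\ivali$ always lie in the same interval $I\in\intseti$, the geometric partition gives $|\vali-\ivali|<\eps$ when $I=[0,\eps)$, and on any $I=[\eps(1+\eps)^t,\eps(1+\eps)^{t+1})$ the width is $\eps^{2}(1+\eps)^t\le\eps\vali$. Combining, $|\vali-\ivali|\le\eps(1+\vali)$; using $\alloci(\ivals)\in[0,1]$ and $\vali\le 1$ yields $\expect{(\vali-\ivali)\alloci(\ivals)}\ge-\eps(1+\expect{\vali})\ge-2\eps$, and summing over $i$ delivers $\discalg{\alg}\ge\alg-2n\eps$.

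No step presents a serious obstacle; the one subtle point is the coupling observation $\ivals\sim\dists$, which is exactly what allows the potentially arbitrary cost function $\cost(\cdot)$ to drop out of the comparison and reduces the problem to a per-agent interval-width estimate.
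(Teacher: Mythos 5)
Your proposal is correct and follows essentially the same route as the paper: both use the fact that resampling leaves the distribution (and hence the expected cost) unchanged, and then bound the per-agent welfare loss by $2\eps$ via the geometric structure of the intervals (a multiplicative $\eps$ loss above $\eps$ plus an additive $\eps$ from the bottom interval $[0,\eps)$). Your write-up merely makes the coupling $\ivals\sim\dists$ and the per-realization width bound $|\vali-\ivali|\le\eps(1+\vali)$ explicit where the paper states the corresponding per-agent inequality directly.
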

\begin{proof}
  Let $\discallocs(\cdot)$ denote the allocation rules for $\discalg{\alg}$.
  The allocation curves for $\discalg{\alg}$ are constant on
  interval $[0,\eps)$ and all intervals of the form
  $[\eps(1+\epsilon)^t,\eps(1+\epsilon)^{t+1})$, and there 
  are at most
  $\log_{1+\epsilon}(\eps^{-1})$ such intervals over the
  range $[\eps,1]$.  These intervals do, indeed, 
  partition valuation space.  Furthermore,
  $\expect[\vali]{\vali \discalloci(\vali)} \geq
  (1-\epsilon)\expect[\vali]{\vali\alloci(\vali)} - \epsilon 
  \geq \expect[\vali]{\vali\alloci(\vali)} - 2\epsilon$, as
  algorithm $\discalg{\alg}$ modifies any input value greater
  than $\epsilon$ by at most a factor of $(1-\epsilon)$.  As
  the expected costs before and after discretization are the same,
  the result follows from linearity of expectation.
\end{proof}

\subsubsection{Statistical Estimation}

We next describe a sampling procedure for estimating an allocation
rule.  
This procedure will not form an algorithm, but
rather generates an estimated allocation curve, which we will denote by $\ests(\cdot)$.
This estimate behaves like an allocation rule, but is not associated
with an actual algorithm (and, in particular, need not be feasibly implementable).  
%Nevertheless, we can pretend that it is an
%algorithm and apply Lemma \ref{lem.eps-BNE} and Lemma \ref{lem.welfare} 
%to it.  The appropriate way to view this is ``if $\ests$ was an allocation rule for an algorithm, then \ldots''

\begin{definition}[estimate allocation rule]
Given algorithm $\alg$ which is $k$-piece piecewise constant and $\epsilon > 0$, an \emph{estimated allocation rule} for $\alg$ is a curve $\ests(\cdot)$ found as follows:
\begin{enumerate}
\item for each agent $i$ and valuation-space piece $I_j$, draw
  $\frac{4}{\epsilon^2}\log{(2kn/\epsilon)}$ samples from $\dists$
  conditional on $\vali \in I_j$, and run $\alg$ on each of these
  samples.
\item let $\est_{ij}$ be the average allocation over the invocations
  to $\alg$ above, for each $i$ and $j$.
\item Define $\ests$ by $\esti(v) = \est_{ij}$ for all $\val \in I_j$
\end{enumerate}
\end{definition}

%Note that $\ests(\cdot)$ is a random variable (even if $\alg$ is deterministic) since it depends on the random bits used during sampling.

\begin{lemma}
\label{lem.useful.sampling}
If algorithm $\alg$ is $k$-piece piecewise constant then, for
any $\epsilon > 0$, an estimated allocation rule $\ests(\cdot)$ for $\alg$ is
$k$-piece piecewise constant, and is $\frac{\epsilon}{2}$-close to
$\allocs(\cdot)$ with probability at least $1 - \frac{\epsilon}{2}$.  The
number of black-box calls to $\alg$ used in the
construction of $\ests(\cdot)$ 
is polynomial in $n$, $k$, and $1/\epsilon$.
%$O(nk(\epsilon)^{-2}\log{(2k/n\epsilon)})$ black-box calls to $\alg$.
\end{lemma}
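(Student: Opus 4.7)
The plan is to verify the three claims in order: piecewise constancy, statistical closeness, and runtime.

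The piecewise constant property is immediate from the definition of $\ests(\cdot)$: each $\esti(\cdot)$ is defined to equal the single value $\est_{ij}$ on each piece $I_j$ of the partition that makes $\alg$ piecewise constant. Since there are at most $k$ such pieces per agent, $\ests(\cdot)$ is $k$-piece piecewise constant. For the runtime, we draw $\frac{4}{\epsilon^2}\log(2kn/\epsilon)$ samples per agent-piece pair, giving $\frac{4kn}{\epsilon^2}\log(2kn/\epsilon)$ total calls to $\alg$, which is polynomial in $n$, $k$, and $1/\epsilon$.

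The main content is the $(\epsilon/2)$-closeness claim, which I would establish by a standard Hoeffding plus union-bound argument. Fix an agent $i$ and piece $I_j$, and let $\alloc_{ij}^\star$ denote the (constant) value of $\alloci(v)$ on $I_j$. Each black-box call with input drawn from $\dists$ conditioned on $v_i \in I_j$ produces an independent $\{0,1\}$-valued sample (or, for randomized $\alg$, a value in $[0,1]$) whose expectation is exactly $\alloc_{ij}^\star$, because the marginal distribution of the other agents is unaffected by the conditioning. Hence $\est_{ij}$ is the empirical average of $m = \frac{4}{\epsilon^2}\log(2kn/\epsilon)$ i.i.d.\ $[0,1]$-valued samples with mean $\alloc_{ij}^\star$. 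Hoeffding's inequality then gives
\[
  \Pr\!\left[\,|\est_{ij} - \alloc_{ij}^\star| \geq \tfrac{\epsilon}{2}\,\right] \;\leq\; 2\exp\!\left(-2m(\epsilon/2)^2\right) \;=\; 2\exp\!\left(-2\log(2kn/\epsilon)\right) \;\leq\; \frac{\epsilon^2}{2k^2n^2}.
\]
A union bound over all $kn$ agent-piece pairs bounds the probability that any estimate is off by $\epsilon/2$ or more by $\epsilon^2/(2kn) \leq \epsilon/2$. On the complementary (good) event, $|\esti(v) - \alloci(v)| < \epsilon/2$ holds for every $v$ (since both functions are constant on each piece and agree up to the estimation error there), so $\ests(\cdot)$ is $(\epsilon/2)$-close to $\allocs(\cdot)$.

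The only mild subtlety is the observation that conditioning only on $v_i \in I_j$ (while letting the other coordinates be drawn from $\distsmi$) makes each sample an independent draw whose expectation is exactly the conditional allocation probability of agent $i$ on that piece; this is what justifies treating $\est_{ij}$ as an unbiased estimator of $\alloc_{ij}^\star$ and unlocks Hoeffding. Everything else is bookkeeping, and no step presents a real obstacle beyond choosing the sample size so that the union bound closes.
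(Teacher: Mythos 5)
Your proposal is correct and follows essentially the same route as the paper: Hoeffding--Chernoff concentration for each of the $kn$ agent--piece estimates, a union bound over all pairs, and the observation that piecewise constancy and the polynomial sample count are immediate from the construction. Your explicit note that conditioning only $\vali$ on $I_j$ (with $\valsmi \sim \distsmi$) makes each sample an unbiased estimate of the constant value $\alloci(I_j)$ is a point the paper leaves implicit, but it does not change the argument.
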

\begin{proof}
The runtime bound and the fact that $\ests(\cdot)$ is $k$-piece piecewise constant follow immediately from the definition.  Choose some $i$ and let $I_j$ denote piece $j$ of the valuation space for agent $i$ in $\alg$, and write
$\alloci(I_j)$ for the (constant) value of $\alloci(\val)$ for any $\val \in
I_j$.  By the Hoeffding-Chernoff inequality, the probability that $|
\est_{ij} - \alloci(I_j)| > \epsilon/2$ is at most
$e^{-4(\epsilon)^{-2}\log{(2kn/\epsilon)}(\epsilon/2)^2} \leq
\epsilon/2kn$.  Thus, taking the union bound over all $i$ and $j$,
we conclude that
$$\abs{\est_{ij} - \alloci(I_j)} \leq \tfrac{\epsilon}{2}$$ for all $i$ and $j$ with probability at least $1 - \frac{\epsilon}{2}$.
\end{proof}

%\begin{cor}
%With probability $1 - \eps/2$, $\alg_{\ests} \geq \alg - \frac{\eps}{2}n\mumax$ and $\alg_{\ests}$ is $\frac{\eps}{2}$-close to $\alg$.
%\end{cor}

%\subsubsection{Statistical Ironing}

%We are now ready to combine our sampling procedure with the ironing procedure from the ideal model to construct an $\eps$-BIC algorithm from a piecewise constant algorithm $\alg$.

We now complete the proof of Theorem \ref{thm.main.eps-bne} by combining our sampling procedure with the ironing procedure from the ideal model.
\begin{definition}[$\statalg{\alg}$]
Given piecewise constant algorithm $\alg$, the {\em
  statistically ironed algorithm} for $\alg$ with error $\epsilon > 0$ is
$\statalg{\alg} = \resampalg[\monoints{\ests}]{\alg}$ where
$\ests(\cdot)$ is the estimated allocation rule for $\alg$.
%\begin{enumerate}
%\item Construct an estimated allocation rule $\ests(\cdot)$ for $\alg$.
%\item Run $\resampalg[\monoints{\ests}]{\alg}$.
%\end{enumerate}
\end{definition}
Note that $\statalg{\alg}$ is not simply a resampling of $\alg$, but
rather a convex combination of resamplings since the construction of
interval set $\monoints{\ests}$ is randomized.

%Let us give some clarification on step 4 of $\statalg_{\epsilon}$.  Recall that $\ialg$ is the ironed version of some algorithm $\alg$, and $\ialg$ can be constructed in the ideal model.  Since $\statalg_{\epsilon}$ explicitly constructs allocation rule $\ests(\cdot)$, we can think of $\alg_{\ests}$ as an algorithm in the ideal model, so that $\bar{\alg_{\ests}}$ can be determined.  $\bar{\alg_{\ests}}$ defines intervals on which to redraw input values; we will perform this ironing procedure and then pass the modified input to $\discalg_{\epsilon'}$.  Thus, step 4 can be thought of as an ironing of $\discalg_{\epsilon'}$, where the procedure to determine intervals proceeds as though the allocation rule were $\ests(\cdot)$.

\begin{lemma}
\label{lem.statalg}
$\statalg[\epsilon]{\alg}$ is 
%$2\eps\val_{max}$-BIC 
$2\eps$-BIC 
and 
%$\statalg_{\epsilon} \geq \alg - n\eps\mumax$.
$\statalg{\alg} \geq \alg - n\eps$.
\end{lemma}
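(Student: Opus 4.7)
The plan is to treat the estimated allocation rule $\ests$ as a hypothetical (possibly infeasible) algorithm $\tilde{\alg}$ and apply the ideal-model ironing analysis to it, using the closeness lemmas to transfer conclusions back to $\statalg{\alg}$. First I would fix a realization of $\ests$ and observe that $\resampalg[\monoints{\ests}]{\tilde{\alg}}$ has allocation rule equal to the derivative of the convex hull $\Gbar_i$ of $G_i$ (both constructed from $\ests$), so it is monotone (as in Lemma~\ref{lem.ideal.monotone}) and has expected welfare at least that of $\tilde{\alg}$ (as in Lemma~\ref{lem.ideal.approx}). By Lemma~\ref{lem.useful.sampling}, with probability at least $1-\eps/2$ over the samples the estimate is $\eps/2$-close to $\allocs$, and by Lemma~\ref{lem.close.iron} this closeness is preserved under the common resampling $\resampalg[\monoints{\ests}]{\cdot}$, so on this ``good event'' $\statalg{\alg}$ is $\eps/2$-close to the monotone ironing of $\tilde{\alg}$.

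For the $2\eps$-BIC claim, I would unpack the allocation rule $\statalloci(\cdot)$ as the expectation over $\ests$ of the allocation rules of $\resampalg[\monoints{\ests}]{\alg}$. On the good event (probability $\geq 1-\eps/2$) each realized rule is within $\eps/2$ of a monotone rule; on the bad event it differs from any monotone rule by at most $1$ pointwise. Since an average of monotone functions is monotone, averaging the good-event monotonizations produces a single monotone function within $(1-\eps/2)(\eps/2)+(\eps/2)(1)\leq \eps$ of $\statalloci$. Lemma~\ref{lem.eps-BNE} then converts $\eps$-closeness to a monotone rule into $2\eps$-BIC.

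For the welfare bound, on the good event Lemma~\ref{lem.welfare} gives $\statalg{\alg} \geq \resampalg[\monoints{\ests}]{\tilde{\alg}} - O(n\eps)$, since the two algorithms have matching expected costs (resampling preserves the product distribution) and $\eps/2$-close allocation rules; meanwhile Lemma~\ref{lem.ideal.approx} applied to $\tilde{\alg}$, combined with the closeness of $\tilde{\alg}$ to $\alg$, yields $\resampalg[\monoints{\ests}]{\tilde{\alg}} \geq \alg - O(n\eps)$. On the bad event the expected loss is at most an additive $O(n\eps)$ from the $\eps/2$ probability weight times the bound on valuations. The main obstacle — and the key conceptual move — is recognizing that resampling commutes with closeness, which lets the ideal-model ironing analysis be applied verbatim to $\ests$ as though it were a real algorithm; the remaining work is additive bookkeeping to collect the error contributions from sampling error, the ironing of the estimate, and failure of the good event into the stated $n\eps$ bound.
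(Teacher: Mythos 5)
Your proposal is correct and takes essentially the same route as the paper: treat the estimated rule $\ests$ as a fictional algorithm, apply the ideal-model ironing lemmas to it, and transfer the conclusions to $\statalg{\alg}$ via Lemmas~\ref{lem.useful.sampling} and~\ref{lem.close.iron}, splitting into the good/bad sampling events; your explicit averaging step (a mixture of monotone comparators is monotone, giving $\eps$-closeness and hence $2\eps$-BIC via Lemma~\ref{lem.eps-BNE}) is exactly the paper's terse ``taking expectation over all possible outcomes of the sampling'' spelled out, and the welfare chain through Lemmas~\ref{lem.welfare} and~\ref{lem.ideal.approx} is the same. The only difference is that you carry $O(n\eps)$ error terms rather than tracking the constant to the stated $n\eps$, which is routine bookkeeping (and the paper itself is loose with these constants).
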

\begin{proof}
By Lemma \ref{lem.useful.sampling}, $\esti(\cdot)$ 
is $k$-piece
piecewise constant for each $i$.
Let $\alg_{\ests}$ be the (fictional) algorithm with allocation rule $\ests$.
Since $\intsets$ is the monotonizing interval set for $\alg_{\ests}$, if $\alg_{\ests}$ were ironed according to $\intsets$, the result would be $\ialg_{\ests}$ which is monotone.

By Lemma \ref{lem.useful.sampling}, $\alg_{\ests}$ is $\frac{\eps}{2}$-close to $\alg$ with probability $1 - \frac{\eps}{2}$.  In this case, Lemma \ref{lem.close.iron} implies $\resampalg{\alg}$ is $\frac{\eps}{2}$-close to $\ialg_{\ests}$.  For the remaining probability, $\frac{\eps}{2}$, we note that $\resampalg{\alg}$ is trivially $1$-close to $\ialg_{\ests}$.  Thus, taking expectation over all possible outcomes of the sampling, we conclude that $\statalg{\alg}$ is $\eps$-close to monotone, and is therefore $2\eps$-BIC by Lemma \ref{lem.eps-BNE}.

Since, with probability $1-\frac{\eps}{2}$, $\resampalg{\alg}$ is $\frac{\eps}{2}$ close to $\ialg_{\ests}$ and $\alg_{\ests}$ is $\frac{\eps}{2}$ close to $\alg$, Lemma \ref{lem.welfare} and Lemma \ref{lem.useful.discrete} imply that, with probability $1-\frac{\eps}{2}$,
\begin{equation*}
%\resampalg{\alg} \geq \ialg_{\ests} - \frac{1}{2}n\epsilon\mumax \geq \alg_{\ests} - \frac{1}{2}n\epsilon\mumax \geq \alg - n\epsilon\mumax.
\resampalg{\alg} \geq \ialg_{\ests} - \tfrac{1}{2}n\epsilon \geq \alg_{\ests} - \tfrac{1}{2}n\epsilon \geq \alg - n\epsilon.
\end{equation*}
For the remaining probability, $\frac{\eps}{2}$, we note that trivially 
$\resampalg{\alg} \geq 0 = \alg - \alg \geq \alg - n$.  
%$\resampalg{\alg} \geq 0 = \alg - \alg \geq \alg - n\mumax$.  
Thus, taking expectation over all possible outcomes of sampling, we conclude 
$\statalg{\alg} \geq \alg - n\epsilon$.
%$\statalg{\alg} \geq \alg - n\epsilon\mumax$.
\end{proof}

%\noindent
%We are now ready to complete the proof of Theorem \ref{thm.main.eps-bne}.
\noindent
\textbf{Proof of Theorem \ref{thm.main.eps-bne}: }
%Theorem \ref{thm.main.eps-bne} now follows by taking 
Define $\alg'$ to be the algorithm $\statalg[\epsilon']{\discalg[\epsilon']{\alg}}$, where $\eps' = \eps/3n$.  Then, by Lemmas \ref{lem.useful.discrete} and \ref{lem.statalg}, $\alg'$ is
%$2\eps'\val_{max}$-BIC, 
$2\eps'$-BIC, 
and hence 
%$\eps\val_{max}$-BIC, 
$\eps$-BIC, 
and 
%$\alg' \geq \discalg_{\eps'} - n\eps'\mumax \geq \alg - 3n\eps'\mumax = \alg - \eps\mumax$. 
$\alg' \geq \discalg[\epsilon']{\alg} - n\eps' \geq \alg - 3n\eps' = \alg - \eps$. 
The runtime of $\alg'$ (which is dominated by sampling in the construction of $\ests$) is 
%$O(nk{\eps'}^{-2}\log{(2kn/\eps'))}) = \softO(n^3\epsilon^{-3}\log(v_{max}/\epsilon\mumax))$, 
$O(nk{\eps'}^{-2}\log{(2kn/\eps'))}) = \softO(n^3\epsilon^{-3}\log(\epsilon^{-1}))$, 
where recall 
%$k = \log_{1+\eps}(v_{max}/\epsilon\mumax)$ 
$k = \frac{1}{\eps}\log(1/\epsilon)$ 
is the number of discrete intervals in $\discalg[\epsilon']{\alg}$.
\QED

\subsection{Bayesian Incentive Compatibility}
\label{sec:black-box-bic}

In the previous section we showed how to construct an $\epsilon$-BIC mechanism from any algorithm with almost no loss to the social welfare.
Our goal now is to take such an $\epsilon$-BIC algorithm $\alg$ and make it BIC.  In other words, we would like to ``fix'' the (small) non-monotonicities in $\alg$.  Fortunately, since each allocation curve of $\alg$ is discretized,  any non-monotonicities must occur only at a small number of predetermined points.  Our approach for removing these points of non-monotonicity is simple: we will construct an alternative algorithm $\alg'$ whose allocation curves are stair functions, with jumps in allocation probability occurring at each of those points.  A convex combination of $\alg$ and $\alg'$ will then be monotone.  This convex combination will be our final BIC algorithm.

It is important that this convex combination process not reduce social
welfare by too much.  This requires two things.  First, we need the
convex combination to be mostly $\alg$ as only it has provably good
welfare.  This is possible by taking $\epsilon$ so small that the
explicit monotonicities in $\alg'$ heavily outweigh the
non-monotonicities in $\alg$ (which are at most $\epsilon$).  Second,
we need to ensure that the expected social welfare of $\alg'$ is not
extremely negative.

How should we construct $\alg'$?  Suppose first that we are 
in a downward-closed feasibility setting.  In this case, the singleton allocation $\{i\}$ is feasible for each agent $i$.  The construction of $\alg'$ with stair-function allocation curves is then straightforward: an agent $i$ is chosen uniformly at random and the algorithm then either allocates to agent $i$ or not, with the probability of allocation following a stair function.  Since $\alg'$ only returns feasible outcomes, its expected social welfare must be non-negative.

We would like to follow this same approach in general cost settings.  However, it may be that, for some $i$, the particular allocation $\{i\}$ has an extremely high (or infinite) cost, in which case the above algorithm may have an extremely negative social welfare.  Note, though, that in our construction we can replace $\{i\}$ with \emph{any} allocation that includes agent $i$.  It is therefore sufficient to find, for each $i$, some allocation that includes agent $i$ and whose cost is not too high.  Once these allocations are found, we can use them to construct the stair algorithm $\alg'$.

In some cases finding low-cost allocations may be highly non-trivial.
To get around this problem, we observe that as long as algorithm
$\alg$ has a reasonable probability of allocating to agent $i$, there
must exist low-cost allocations that include $i$ that are returned by
$\alg$.  We can therefore find such allocations by repeatedly sampling
outcomes of $\alg$.  If, on the other hand, we were to take many
samples and not find any allocations that include agent $i$, then we
can safely assume that agent $i$ does not contribute much to the
expected social welfare of $\alg$.  In this case, we can trivially
monotonize agent $i$'s allocation curve by ironing on interval $[0,1]$,
removing the need to find allocations that include him.  

\subsubsection{The Stair Algorithm}

%In the previous section we showed how to construct a $\epsilon$-BIC mechanism from any algorithm with almost no welfare loss.  Furthermore,
%$\epsilon$ can be made arbitrarily small with more sampling.  
%We now outline machinery by which we can make an $\epsilon$-BIC 
%mechanism BIC, outright, without much additional loss.  
%This 
%approach will apply to any $\epsilon$-BIC mechanism with allocation 
%rules that are piecewise constant.  
%
%
%We assume that we can easily find a feasible
%allocation for any agent who is allocated with non-negligible
%probability by the original algorithm. (Notice that any sampling
%procedure that adequately samples the the allocation rules must
%uncover these allocations.  Any agents not satisfying this property can
%be ignored outright.)  
%
%
%We note that this approach is not specific to
%our mechanism, and can generally be applied to any piecewise constant
%mechanism that is close to another monotone mechanism.
%
%\subsubsection{An Explicitly Monotone Algorithm}
%
%Our approach will be to form a convex combination of a piecewise constant algorithm $\alg$ that is close to monotone with the following explicitly monotone algorithm.

We begin by demonstrating how to combine an $\eps$-BIC mechanism with an algorithm whose allocation rules are stair functions in order to obtain a BIC mechanism.

\begin{definition}[$\stair{\alg}$] Let $\alg$ be a $k$-piece piecewise constant algorithm, and suppose $S_1, \dotsc, S_n$ and $T_1, \dotsc, T_n$ are allocations such that $i \in S_i$ and $i \not\in T_i$ for all $i$.  The {\em stair algorithm for $\alg$}, $\stair{\alg}$, does the following:
\begin{enumerate}
\item Pick an agent $i$ uniformly from the $n$ agents.
\item If $\vali$ is in the $j$th highest piece of $k$ pieces, allocate
  to $S_i$ with probability $(j-1)/(k-1)$ and $T_i$ otherwise.
\end{enumerate}
\end{definition}
%
%We have not yet specified how each allocation $S_i$ is chosen; the process will be different for feasibility settings and general cost settings, as we describe below.  First, we note that if these allocations $S_1, \dotsc, S_n$ can be found, then a convex combination of $\alg$ and $\stair{\alg}$ will be monotone.
%
%If $\alg$ is $\eps$-close to a monotone algorithm, a convex combination of $\alg$ and $\stair{\alg}$ will be monotone.
%
%depend on the problem setting.  For feasibility settings we will require that $S_i$ be feasible, whereas in general cost settings we will need to bound the cost of $S_i$.  We will handle these settings separately below, but first we present the main technical Lemma of this section.

%We note that the stair algorithm makes allocation $S_i$ with positive probability only if $\vali$ is not in the lowest piece of agent $i$'s allocation rule.  This motivates the following definition.

%\begin{definition}[$w_i$] Given algorithm $\alg$ with piece-wise constant allocation rules, the \emph{stair threshold for agent $i$}, $w_i$, is the supremum of all values in the first piece of agent $i$'s allocation rule.
%\end{definition}

\begin{definition}[$\combalg{\alg}$] Suppose algorithm $\alg$ is $k$-piece piecewise constant.  Then $\combalg{\alg}$ is the convex combination of 
$\alg$ with probability $1-\stairfrac$ and
$\stair{\alg}$ with probability $\stairfrac$, 
%$\alg$ and $\stair{\alg}$ with probabilities $1-\stairfrac$ and 
%$\stairfrac$, respectively, 
where $\stairfrac =  2(k-1)n\epsilon$.
\end{definition}

\begin{lemma}
\label{lem.step}
If $\alg$ is $\epsilon$-close to a monotone $\alg'$, 
then algorithm $\combalg{\alg}$ is BIC.
\end{lemma}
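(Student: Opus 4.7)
The plan is to verify directly that the allocation rule of $\combalg{\alg}$ is monotone non-decreasing for every agent, which by Theorem~\ref{t:bic} is equivalent to BIC. Fix an agent $i$ and let $\alloci(\cdot)$ and $\tilde{\alloc}_i(\cdot)$ denote the allocation rules (for agent $i$) of $\alg$ and $\stair{\alg}$, respectively. Then $\combalg{\alg}$'s allocation rule for $i$ is $\bar{\alloc}_i(v) = (1-\stairfrac)\alloci(v) + \stairfrac\,\tilde{\alloc}_i(v)$.

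First I would compute $\tilde{\alloc}_i(\cdot)$ explicitly. With probability $1/n$, agent $i$ is the chosen agent, and conditional on this his probability of being served (i.e., the probability the allocation equals $S_i$) steps through the $k$ equally spaced values $\{(j-1)/(k-1)\}$ increasing in $\vali$ as $\vali$ moves through the pieces of $\alg$. With probability $(n-1)/n$ some other agent is chosen, and whether $i$ is served then depends only on that agent's valuation, contributing a term to $\tilde{\alloc}_i(\vali)$ that is constant in $\vali$. Thus $\tilde{\alloc}_i(\cdot)$ is piecewise constant on the same $k$-piece partition as $\alloci(\cdot)$, is monotone non-decreasing, and jumps upward by exactly $\tfrac{1}{n(k-1)}$ across each piece boundary.

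Next I would bound the worst-case downward jumps in $\alloci(\cdot)$ using $\epsilon$-closeness. Let $\alloci'(\cdot)$ denote the allocation rule of the monotone $\alg'$. For any $v < v'$,
\begin{equation*}
\alloci(v) - \alloci(v') = \bigl(\alloci(v) - \alloci'(v)\bigr) + \bigl(\alloci'(v) - \alloci'(v')\bigr) + \bigl(\alloci'(v') - \alloci(v')\bigr) < 2\epsilon,
\end{equation*}
applying $\epsilon$-closeness to the outer terms and monotonicity of $\alloci'$ to the middle.

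To finish, observe that since $\alloci$ and $\tilde{\alloc}_i$ are both piecewise constant on the same partition, so is $\bar{\alloc}_i$, and any monotonicity violation must occur across a piece boundary. For $v < v'$ in distinct pieces,
\begin{equation*}
\bar{\alloc}_i(v') - \bar{\alloc}_i(v) \;\ge\; -(1-\stairfrac)\cdot 2\epsilon + \stairfrac\cdot \tfrac{1}{n(k-1)} \;\ge\; 0,
\end{equation*}
where the last inequality uses $\stairfrac = 2(k-1)n\epsilon$, giving $\stairfrac/[n(k-1)] = 2\epsilon \ge (1-\stairfrac)\cdot 2\epsilon$. The only real subtlety is this arithmetic cancellation: the value of $\stairfrac$ is engineered precisely so that the guaranteed monotone rise contributed by $\stair{\alg}$ across one piece boundary matches the worst-case $2\epsilon$ drop in $\alloci$ permitted by $\epsilon$-closeness.
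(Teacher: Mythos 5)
Your proof is correct and follows essentially the same route as the paper's: both observe that the combined allocation rule can only decrease across piece boundaries, bound the possible drop of $\alloci$ there by $2\epsilon$ via $\epsilon$-closeness to the monotone $\alloci'$, and cancel it against the guaranteed rise of at least $\stairfrac/\bigl(n(k-1)\bigr)$ from the stair component using $\stairfrac = 2(k-1)n\epsilon$. In fact your write-up states the two comparison inequalities with the correct signs, whereas the paper's displayed chain contains sign typos, so nothing further is needed.
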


\begin{proof}
We will write $\corralloci(\cdot)$ to denote an allocation rule of $\combalg{\alg}$.
To show $\combalg{\alg}$ is BIC, choose any agent $i$ and any values $\vali < \vali'$; we will show $\corralloci(\vali) \leq \corralloci(\vali')$.  If $\vali, \vali'$ are in the same piece of the valuation space then $\corralloci(\vali) = \corralloci(\vali')$.  Otherwise, since $\alg$ is $\epsilon$-close to monotone $\alg'$, it must be that $\alloci(\vali) \leq \alloci(\vali')-2\epsilon$.  Furthermore, if $\mathbf{s}(\cdot)$ is the allocation rule for 
$\stair{\alg'}$, then $s_i(\vali) \leq s_i(\vali')+1/(k-1)n$.  We conclude that 
\begin{equation*}
\begin{split}
\corralloci(\vali) & = (1-\stairfrac)\alloci'(\vali) + \stairfrac s_i(\vali) \\
& \leq \corralloci(\vali') - 2\epsilon + \stairfrac/(k-1)n \\
& = \corralloci(\vali')
\end{split}
\end{equation*}
as required, since $\stairfrac = 2(k-1)n\epsilon$.
\end{proof}

\subsubsection{Bounding Social Welfare: Finding Low-Cost Sets}

%We would like to apply Lemma \ref{lem.step} to $\statalg{\alg}$ from
%Lemma \ref{lem.statalg}, thereby proving Theorem \ref{thm.main.bic.2}.
We now describe the choice of sets $S_1,\dotsc, S_n$ and $T_1,\dotsc,T_n$ for algorithm $\stair{\alg}$.
What we require is that, for all $i$, $i \in S_i$, $i \not\in T_i$, and $S_i$, $T_i$ are feasible (or have sufficiently low cost).
In many settings finding such sets is trivial (e.g., for downward-closed feasibility problems we can take $S_i = \{i\}$ and $T_i = \emptyset$), 
%, we could simply take $S_i = \{i\}$ and Theorem
%\ref{thm.main.bic} follows immediately.  %However, in general settings,
%it may be difficult to find feasible sets (or low-cost sets in general
%cost settings).  The remainder of this section will be devoted to
%describing a general process by which sets $S_1, \dotsc, S_n$ can be
%found.
%
%\subsubsection{Implementing the Stair Algorithm in General Cost Settings}
%
but for some problems it might be difficult to find feasible (or low-cost) allocations.
% or algorithm $\stair{\alg}$ may incur negative
%value if, for some $i$, the cost of set $S_i$ is large relative to
%$\vali$.  Thus, to preserve the approximation ratio of our construction, we must find sets $S_i$ with low costs.
%
Our approach is as follows.  Since $\alg$ never makes an allocation that generates negative social welfare, we can bound the cost of any allocation made by $\alg$.  This motivates us to look for a set $S_i \ni i$ returned by $\alg$ on some input, for each $i$.  This can be accomplished by sampling. %techniques used in the construction of $\statalg{\alg}$.
%That is, for each $i$ and each piece of the valuation space, we will
%run $\alg$ on many sample inputs.  As long as $\alloci(I)$ is not too
%small on a given interval $I$, we are very likely to find some valid
%allocation that includes agent $i$ during the sampling process!  
In the event that we do not find a set $S_i$, it is likely that the probability of allocating to agent $i$ is very low; we can therefore \emph{iron together all intervals} for agent $i$, effectively removing the need for $S_i$, without causing much loss to the expected welfare.
This operation can be viewed as trimming away agents that are very rarely allocated.  The same holds for finding $T_i$.

\begin{definition}[$\trimalg{\alg}$]
\label{def:stairalg}
%Given piece-wise constant algorithm $\alg$, 
The \emph{trimmed algorithm} for piece-wise constant $\alg$ is $\trimalg{\alg}$:
\begin{enumerate}
\item For each agent $i$ and valuation-space piece $I_j \in \cali_i$, draw $\frac{4}{\eps^2}\log(2n/\eps)$ samples from $\dists$ conditional on $\vali \in I_j$, and run $\alg$ on each of these samples.
\item If $\alg$ is the same (always or never allocating) for $i$ on every sample, define $\intseti' = \{ [0,1] \}$; otherwise, $\intseti' = \intseti$ and we define $S_i$ to be any observed allocation that includes agent $i$ and $T_i$ to be any observed allocation that does not include agent $i$.
%\item \quad Let $j_i$ be the minimal index such that, for some sample of interval $I_{j_i}$, $\alg$ allocated a set $T_i$ with $T_i \ni i$ and $\cost(T_i) \leq \min I_{j_i} + n\mumax/\sqrt{\eps}$.  Choose $S_i$ to be any such $T_i$.
%\item \quad If no such set $T_i$ was returned for \emph{any} interval, take $j_i = k+1$ and $S_i = \{i\}$.
%\item \quad Define $\intseti' = \{ I_1 \cup \dotsc \cup I_{j_i-1}, I_{j_i}, \dotsc, I_k \}$.
\item Run $\resampalg[\intsets']{\alg}$.
\end{enumerate}
\end{definition}

Note that, for each $i$, either sets $S_i \ni i$ and $T_i \not\ni i$ will be found during the execution of $\trimalg{\alg}$, or else the allocation rule of agent $i$ will be made constant. %, which is taken to be any set satisfying the conditions on line 4 for interval $I_{j_i}$ (or $\{i\}$ if no sets were found).

%In summary, $\trimalg_\eps$ samples each constant interval for agent $i$, searching for an appropriate set $S_i$.  We take $I_{j_i}$ to be the leftmost interval for which such a set $S_i$ was found.  All intervals to the left of $I_{j_i}$ are then ironed together.  Thus, regardless of the sampling outcome, $I_{j_i}$ will be the second valuation space piece for agent $i$ in algorithm $\ialg_{\intsets'}$.  Thus $\cost(S_i) \leq w_i^{\trimalg_\eps} + n\mumax/\sqrt{\eps}$.

%\begin{lemma}
%\label{lem.findsets-bound}
%The stair compatible algorithm $\trimalg_\eps$ for
%$\alg$
%(Definition~\ref{def:stairalg}) and stair thresholds
%${\mathbf w}^{\trimalg_\eps}$ (Definition~\ref{def:stairthreshold}) satisfy $\cost(S_i) \leq w_i^{\trimalg_\eps} +
%n\mumax/\sqrt{\eps}$ for all $i$.
%\end{lemma}
%\begin{proof}
%For each $i$, if no
%set satisfying the conditions on line 4 of $\trimalg_\eps$ was found
%during the sampling of any interval, then $j_i = k+1$ and all
%intervals of $\alg$ are ironed together in $\intsets'$.  In this case
%$w_i^{\trimalg_\eps} = \infty$, so $\cost(S_i) \leq
%w_i^{\trimalg_\eps}$ trivially.  Otherwise, by line 4 of
%$\trimalg_\eps$, $\cost(S_i) \leq \min I_{j_i} +
%n\mumax/\sqrt{\eps}$.  However, since all intervals to the left of
%$I_{j_i}$ are ironed together in $\intsets'$, $I_{j_i}$ will be the
%second piecewise constant interval of $\trimalg_\eps$, and hence
%$\min I_{j_i} = w_i^{\trimalg_\eps}$.  Thus $\cost(S_i) \leq \min
%w_i^{\trimalg_\eps} + n\mumax/\sqrt{\eps}$ as required.
%\end{proof}

\begin{lemma}
\label{lem.findsets}
%The stair compatible algorithm $\trimalg{\eps}{\alg}$ for
%$\alg$
%(Definition~\ref{def:stairalg}) satisfies
%$\trimalg_\eps \geq \alg - 2n\mumax/\sqrt{\eps}$.
$\trimalg{\alg} \geq \alg - n\eps$.
\end{lemma}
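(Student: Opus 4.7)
The plan is to track each agent's welfare contribution separately under the resampling in Step 3, showing that an agent either loses nothing or only $O(\eps)$ in expectation, and to observe that resampling preserves the expected cost exactly. For the cost: Step 3 draws each $\ivali$ from $\dist_i$ conditional on the interval of $\intseti'$ containing $\vali$, which leaves the marginal distribution as $\dist_i$. Hence the joint input to $\alg$ is distributed identically to $\vals \sim \dists$, so $\expect{\cost(\allocs)}$ is unchanged and the entire argument reduces to tracking $\sum_i \expect{\vali\alloci(\ivals)}$. Fixing an agent $i$, if $\intseti' = \intseti$ then because $\alg$ is piecewise constant with pieces $\intseti$, the allocation $\alloci(\ivals)$ depends only on the piece of $\ivali$, which is the piece of $\vali$; agent $i$'s welfare contribution therefore matches that of $\alg$ exactly.

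The interesting case is when $\intseti' = \{[0,1]\}$, where $\ivali$ is redrawn from $\dist_i$ independently of $\vali$. Here agent $i$'s welfare contribution becomes $\expect{\vali}\cdot\expect{\alloci(\vali)}$, differing from the original $\expect{\vali\alloci(\vali)}$ by exactly $-\mathrm{Cov}(\vali, \alloci(\vali))$. To bound this covariance I will apply Hoeffding to the $\frac{4}{\eps^2}\log(2n/\eps)$ samples in each piece, just as in Lemma~\ref{lem.useful.sampling}: the per-piece failure probability is at most $\eps^2/(2n^2)$, and the union bound over the $nk$ piece-agent pairs yields aggregate failure probability at most $\eps/2$ (using that $k$ is only polylogarithmic in $1/\eps$). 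Conditional on this good event, the triggering condition that \emph{all} samples for $i$ agree forces either $\alloci(I_j) \in [0,\eps/2]$ for every $j$ (no samples allocated) or $\alloci(I_j) \in [1-\eps/2,1]$ for every $j$ (all samples allocated). In either subcase the range of $\alloci(\vali)$ over $\vali\in[0,1]$ is at most $\eps/2$, so by the standard bound $|\mathrm{Cov}(X,Y)| \leq \max|Y-\expect{Y}|\cdot\expect{|X|}$ applied with $X = \vali \in [0,1]$ and $Y = \alloci(\vali)$, we get $|\mathrm{Cov}(\vali,\alloci(\vali))| \leq \eps/2$.

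Summing over agents, the welfare loss on the good sampling event is at most $n\eps/2$. On the complementary bad event (probability at most $\eps/2$) I use only the trivial bound that each agent's welfare contribution is at most $\expect{\vali} \leq 1$, so the expected contribution from this event is at most $(\eps/2)\cdot n$. Combining gives a total expected welfare loss of at most $n\eps/2 + n\eps/2 = n\eps$, which is exactly the claim. The main obstacle in formalizing this is the Hoeffding bookkeeping: the sample size $\frac{4}{\eps^2}\log(2n/\eps)$ is slightly smaller than the one used for Lemma~\ref{lem.useful.sampling} (which took $\log(2kn/\eps)$), so I need to verify that union-bounding over all $nk$ estimates still yields aggregate failure $\leq \eps/2$. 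This reduces to checking that the extra factor of $k$ from the union bound is dominated by the extra factor of $n$ available in the per-pair tail, which holds because $k$ is polylogarithmic in $1/\eps$ while the tail carries $n^{-2}$.
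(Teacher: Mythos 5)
Your overall route is the same as the paper's: condition on a good sampling event under which trimming can be triggered only for agents whose true allocation rule is already within $\eps/2$ of constant, charge each such agent a loss of at most $\eps/2$, and charge the bad event (probability at most $\eps/2$) a trivial loss of at most $1$ per agent. Your explicit cost-preservation remark, the covariance bookkeeping, and the symmetric treatment of the ``all samples allocate'' trigger (which the paper's write-up glosses over, handling only the ``no allocating sample found'' direction) are refinements of the same argument rather than a different one. One minor imprecision: in the untrimmed case the ex-post allocation $\alloci(\ivals)$ need not depend only on the piece containing $\ivali$; what you actually need (and what holds) is that the \emph{interim} rule $\alloci(\cdot)$ is constant on each piece, so that conditioning on the piece and using the independence of $\vali$ from $(\ivali,\valsmi)$ gives equality of the welfare contributions.

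The genuine problem is the union-bound step, which you yourself flagged and then resolved on a false premise: $k$ is \emph{not} polylogarithmic in $1/\eps$. By Lemma~\ref{lem.useful.discrete} (and the proof of Theorem~\ref{thm.main.eps-bne}), $k = \log_{1+\eps}(1/\eps) = \Theta(\eps^{-1}\log(1/\eps))$. Your aggregate failure bound is therefore $nk\cdot\eps^2/(2n^2) = k\eps^2/(2n) \approx \eps\log(1/\eps)/(2n)$, which is at most $\eps/2$ only when $n$ is at least on the order of $\log(1/\eps)$ --- not guaranteed, and realistically violated since the lemma is invoked in the final construction with $\eps$ replaced by $\eps/6kn^2$. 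Two easy repairs. (i) The per-piece event you actually need is not $\eps/2$-accuracy of the empirical mean but the much rarer event that all $m = 4\eps^{-2}\log(2n/\eps)$ samples agree while the piece's true allocation probability lies in $[\eps/2,\,1-\eps/2]$; its probability is at most $(1-\eps/2)^m \leq e^{-(2/\eps)\log(2n/\eps)}$, which is far below $\eps/(2nk)$, so the union over all $nk$ pairs then goes through. (ii) Follow the paper and avoid the union over pieces entirely: for each agent fix a single deterministically chosen witness piece with allocation probability at least $\eps/2$ (respectively at most $1-\eps/2$), note that trimming forces that witness piece to have produced no allocating (respectively no non-allocating) sample, bound this by $\eps/(2n)$, and union over the $n$ agents only. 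With either patch your argument matches the paper's conclusion.
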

\begin{proof}
We claim that, with probability at least $1-\frac{\eps}{2}$, for each
agent $i$, the allocation rules for $\trimalg{\alg}$ and $\alg$ will 
differ only on values $\vali$ for
which $\alloci(\vali) \leq \frac{\eps}{2}$.  Before
proving the claim, let us see how it implies the desired result.  The
claim implies that $\trimalg{\alg} \geq \alg -
(\frac{\eps}{2})n$ with probability $1 -
\frac{\eps}{2}$.  For the remaining probability, we note that
$\trimalg{\alg} \geq 0 = \alg - \alg \geq \alg - n$ trivially.
Thus, over all possible outcomes of sampling, we conclude that
\[ \trimalg{\alg} \geq \alg - \tfrac{\eps}{2}n - \tfrac{\eps}{2}n = \alg - n\eps \]
as required.

Let us now prove the claim.  Choose some agent $i$ and suppose that
$\trimalg{\alg}$ and $\alg$ differ on some interval $I$ with
$\alloci(I) \geq \frac{\eps}{2}$.  
%Let $I$ be the leftmost
%such interval. For the remainder of the proof we will say that a set
%$T$ has \emph{low cost for $I$} if $\cost(T) \leq \min I +
%n/\sqrt{\eps}$.  
Then, by the definition of $\intseti'$, it must be
that no set $T \ni i$ was found during the sampling of
interval $I$ for agent $i$.  However, since $\alloci(I) \geq \frac{\eps}{2}$, there is a probability of at least $\frac{\eps}{2}$ of finding such a set $T$ on each sample.  
%Consider some agent $i$ and some interval $I \in \cali$.  For
%the remainder of the proof we will say that a set $T$ has \emph{low
%  cost for $I$} if $\cost(T) \leq \min I + n\mumax/\sqrt{\eps}$.
%Given $\vals \sim \dists$, let $B(\vals)$ be the event
%$[\alloci(\vals) \wedge \sum_i \vali \leq \min I +
%  n/\sqrt{\eps}]$.  If event $B(\vals)$ occurs for some sample $\vals$,
%this means that $\alg$ returned some allocation $T \ni i$ and
%furthermore $\sum_i \vali \leq \min I + n/\sqrt{\eps}$.  But
%note that this allocation must generate non-negative profit (otherwise
%it would never be allocated), and hence $T$ must have low cost for
%$I$.  Thus $B(\vals)$ is precisely the event that $\alg$ returns a set
%$T \ni i$ with low cost for $I$.
%
%Consider the probability of $B(\vals)$.  By Markov's inequality,
%$\prob[\vals]{\sum_{j \neq i} \val_j > n/\sqrt{\eps}} <
%\sqrt{\eps}$.  Thus, since $\vali \geq \min I$ with probability $1$
%conditional on $\vali \in I$, $\prob[\vals]{\sum_i \vali > \min I +
%  n\mumax/\sqrt{\eps}} < \sqrt{\eps}$.  Also,
%$\prob[\vals]{\neg\alloci(\vals)\ |\ \vali \in I} = 1-\alloci(I) \leq
%1 - (\sqrt{\eps} + \frac{\eps}{2})$.  The union bound then implies
%that $\prob[\vals]{\neg B(\vals)} \leq
%1-(\sqrt{\eps}+\frac{\eps}{2})+\sqrt{\eps} = 1 - \frac{\eps}{2}$, so
%$\prob[\vals]{B(\vals)} \geq \frac{\eps}{2}$.
%
By Chernoff-Hoeffding inequality, the probability that we do not find even one such set during $4\eps^{-2}\log(2n/\eps)$ samples is at most
$\frac{\eps}{2n}$.  We conclude that the probability that no set $T
\ni i$ was found during the sampling of interval $I$ is
at most $\frac{\eps}{2n}$.  This is therefore a bound on the
probability that $\trimalg{\alg}$ and $\alg$ differ for agent $i$
on some interval $I$ with $\alloci(I) \geq
\frac{\eps}{2}$.  By the union bound, the probability that
this occurs for \emph{any} agent is at most $\frac{\eps}{2}$, as
required.
%
%We now wish to argue that, with high probability, the allocation rules for $\alg$ and $\trimalg_\eps$ differ for agent $i$ only on intervals $I$ with $\alloci(I) < \sqrt{\eps}+\frac{\eps}{2}$.  For each $i$, if $\alloci(I) < \sqrt{\eps}+\frac{\eps}{2}$ for all intervals $I$, then this is trivially true.  Otherwise, let $I_{j_i}$ denote the leftmost interval for which $\alloci(I) \geq \sqrt{\eps}+\frac{\eps}{2}$.  By the union bound, with probability $1 - \eps/2$, for each $i$, a set $S_i \ni i$ with low cost will be found when sampling interval $I_{j_i}$.  In this case, the behaviour of algorithms $\trimalg_\eps$ and $\alg$ differ only on intervals to the left of $I_{j_i}$, all of which satisfy $\alloci(I) < \sqrt{\eps}+\frac{\eps}{2}$.  Thus, conditioning on an event of probability $1 - \frac{\eps}{2}$, \[\trimalg_\eps \geq \left(1-\left(\sqrt{\eps}+\frac{\eps}{2}\right)\right)\alg \geq \alg - \left(\sqrt{\eps}+\frac{\eps}{2}\right)n\mumax.\]  
%For the remaining probability, $\frac{\eps}{2}$, we note that $\alg \leq n\mumax$ trivially.  We conclude that 
%\[\trimalg_\eps \geq \alg - \frac{\eps}{2}n\mumax - \left(\sqrt{\eps}+\frac{\eps}{2}\right)n\mumax \geq \alg - 2\sqrt{\eps} n\mumax\] 
%as required.  
\end{proof}

We are now ready to combine our tools into a BIC mechanism, proving Theorem \ref{thm.main.bic.2}.
%$\corralg{\epsilon}{\alg}$ from the statement of Theorem \ref{thm.main.bic.2}.

\begin{definition}[$\corralg{\alg}$] Given an algorithm $\alg$ and $\eps > 0$, the \emph{monotonization of $\alg$}, denoted $\corralg{\alg}$, is the algorithm $\combalg{\statalg{\trimalg{\discalg{\alg}}}}$.
%\begin{enumerate}
%\item Construct $\discalg{\epsilon}{\alg}$, the discretized version of $\alg$.
%\item Construct $\trimalg_{\eps}$, the stair-compatible version of $\discalg{\epsilon}{\alg}$.  This generates sets $S_1, \dotsc, S_n$.
%\item Construct $\statalg{\epsilon}{\alg}$, the statistically ironed algorithm for $\trimalg_{\eps}$.
%\item With probability $\stairfrac = 2k(n-1)\eps$, execute $\stair{\trimalg_{\eps}}$ with sets $S_1, \dotsc, S_n$.  Else, execute $\statalg{\epsilon}{\alg}$.
%\end{enumerate}
\end{definition}

\begin{lemma}
\label{lem.corralg}
$\corralg{\alg}$ is BIC, and 
%$\corralg{\eps}{\alg} \geq \alg - 7kn^2\sqrt{\eps}\mumax$.
$\corralg{\alg} \geq \alg - 6kn^2\eps$.
\end{lemma}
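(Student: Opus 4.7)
\medskip
\noindent
\textbf{Proof plan.} Write $\alg_1 = \discalg{\alg}$, $\alg_2 = \trimalg{\alg_1}$, $\alg_3 = \statalg{\alg_2}$, so that $\corralg{\alg} = \combalg{\alg_3}$. The plan is to establish the BIC property and the welfare bound separately, with both following by chaining the lemmas from this section and paying carefully for the final convex combination step.

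For BIC, the plan is to apply Lemma~\ref{lem.step}, which says that if its input is $\eps$-close to a monotone algorithm, then the output of $\combalg$ is BIC. In the proof of Lemma~\ref{lem.statalg} it was already shown that the (expected) allocation rule of $\alg_3 = \statalg{\alg_2}$ is $\eps$-close to the fictional ironed rule $\ialg_{\ests}$, which is monotone. The parameter $\stairfrac = 2(k-1)n\eps$ in $\combalg$ is tuned precisely to this $\eps$, so Lemma~\ref{lem.step} applies and yields that $\corralg{\alg}$ is BIC.

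For the welfare bound, I would first chain together Lemmas~\ref{lem.useful.discrete}, \ref{lem.findsets}, and \ref{lem.statalg}, to obtain
\[
\alg_3 \;\geq\; \alg_2 - n\eps \;\geq\; \alg_1 - 2n\eps \;\geq\; \alg - 4n\eps.
\]
Then, writing the convex combination as $\corralg{\alg} = \alg_3 - \stairfrac(\alg_3 - \stair{\alg_3})$, it suffices to bound $\alg_3 - \stair{\alg_3}$ uniformly.

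The main obstacle is controlling $\stair{\alg_3}$ from below in the general cost setting. The key observation here is that the sets $S_i, T_i$ used by $\stair{\alg_3}$ are not arbitrary: by construction of $\trimalg$ they are allocations actually returned by $\alg$ on some sampled input $\vals^*$. Assuming WLOG that $\alg$ never returns an allocation with negative social welfare (else the mechanism can substitute the empty allocation), each such set satisfies $\cost(S_i) \leq \sum_{j \in S_i} v_j^* \leq n$ since values lie in $[0,1]$, and similarly for $T_i$. Hence on any input $\vals$, the stair algorithm's welfare is at least $0 - n = -n$. Since trivially $\alg_3 \leq \sum_j v_j \leq n$, we obtain $\alg_3 - \stair{\alg_3} \leq 2n$, and therefore
\[
\corralg{\alg} \;\geq\; \alg_3 - 2n\stairfrac \;\geq\; (\alg - 4n\eps) - 4(k-1)n^2\eps \;\geq\; \alg - 6kn^2\eps,
\]
where the last inequality holds for any $k,n \geq 1$. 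Agents whose rule was flattened by $\trimalg$ (so no $S_i$ or $T_i$ was observed) can be handled by defining $S_i = T_i = \emptyset$ in the stair step, which preserves the argument since that agent's slice contributes welfare $0$.
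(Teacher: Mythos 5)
Your proof is correct and follows essentially the same route as the paper's: the same decomposition $\combalg{\statalg{\trimalg{\discalg{\alg}}}}$, BIC via Lemma~\ref{lem.statalg} plus Lemma~\ref{lem.step}, the welfare chain through Lemmas~\ref{lem.useful.discrete}, \ref{lem.findsets}, \ref{lem.statalg}, and the same key observation that WLOG $\cost(S_i)\leq n$ because $S_i$ is in the range of $\alg$ and non-negative welfare with values in $[0,1]$ forces cost at most $n$, so the stair component loses at most $n$. If anything you are a bit more careful than the paper about the convex-combination arithmetic and about agents whose rules were flattened by $\trimalg$, but these are refinements of the same argument, not a different one.
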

\begin{proof}
For notational convenience we define $\alg' = \trimalg{\discalg{\alg}}$.
Recall that during the construction of $\alg'$ we find sets $S_1, \dotsc, S_n$ with $S_i \ni i$.
Also, Lemma \ref{lem.statalg} implies that $\statalg{\alg'}$ is $\eps$-close to a monotone algorithm.  Thus $\combalg{\statalg{\alg'}}$ is well-defined, and is also BIC by Lemma \ref{lem.step}.  

Our ironing techniques do not affect the distribution of allocations generated by an algorithm, so the expected costs of $\corralg{\alg}$ and $\alg$ are the same.  Furthermore, by Lemmas \ref{lem.useful.discrete}, \ref{lem.statalg}, and \ref{lem.findsets}, 
\[\statalg{\alg'} \geq \alg' - n\eps = \trimalg{\discalg{\alg}} - n\eps \geq \discalg{\alg} - 2n\eps \geq \alg - 4n\eps.\]  
We next claim that one can assume without loss of generality that $\cost(S_i) \leq n$ for all $i$.  This is because $S_i$ is in the range of $\alg$, and we can assume that $\alg$ never returns an allocation that results in negative welfare (since otherwise a trivial improvement to $\alg$ would return the empty allocation instead).  Since valuations lie in $[0,1]$, non-negative welfare can be generated only by sets with cost at most $n$, and thus we can assume $\cost(S_i) \leq n$ for all $i$.
%by Lemma \ref{lem.findsets-bound}, 

This implies that the expected social welfare obtained by $\stair{\statalg{\alg'}}$ is at least $(-n)$.  We conclude
\begin{align*}
\corralg{\alg} & = \combalg{\statalg{\alg'}} \\
& = (1-\stairfrac)\statalg{\alg'} - \stairfrac \stair{\statalg{\alg'}} \\
& \geq \alg - 4n\eps - (2(k-1)n\eps)n \\
& \geq \alg - 6kn^2\eps.
\end{align*}
%Taking $\eps' = (\eps/7kn^2)^2$, we arrive at the desired result.
\end{proof}

%\noindent
%We are now ready to complete the proof of Theorem \ref{thm.main.bic.2}.

\noindent
\textbf{Proof of Theorem \ref{thm.main.bic.2}: }
%Theorem \ref{thm.main.bic.2} now follows immediately from Lemma \ref{lem.corralg} by considering algorithm 
Let $\alg'$ be the monotonized algorithm $\corralg[\epsilon']{\alg}$, where 
%$\eps' = (\eps/7kn^2)^2 = \eps^2/49k^2n^4$.  
$\eps' = \eps/6kn^2$.  The result then follows immediately from Lemma \ref{lem.corralg}.
The runtime, which is dominated by sampling, is $O(kn(\eps')^{-2}) = %\softO(n^9\eps^{-9}\log^5(\val_{max}/\eps\mumax))$.
\softO(\frac{n^5}{\eps^5}\log^3(1/\eps))$.
\QED

\section{Conclusions}

\label{sec:conclusions}

Our main result is for single-parameter agents and the
objective of social welfare where we give a black-box reduction that
converts any Bayesian approximation algorithm into a Bayesian
incentive compatible mechanism.  For these settings there is no gap
separating the approximation complexity of algorithms and BIC
mechanisms.  

It is notable that our transformation from an approximation algorithm
to a BIC mechanism cannot be duplicated by the agents acting on their
own: there are non-monotone algorithms that, when coupled with any
reasonable payment rule, do not have any BNE with near the expected
welfare as the original algorithm on the true values.  A concrete
example is given in Appendix \ref{app:original}.

While our main theorem is extremely general, the situations
not covered by it are of notable interest.

\begin{enumerate}
\item Multi-parameter Bayesian mechanism design is not very well
  understood, but there is every reason to believe that approximation
  (which has not been pursued much by the economics literature) has a
  very interesting and relevant role to play in providing positive
  results.  For the objetive of profit maximization the result
  of~\cite{CHMS-10} reduces mechanism design to algorithm design in
  {\em unit-demand} settings with a natural ``substitutability''
  property of the feasibility constraint, e.g., from matroid set
  systems.  For social welfare maximization, the approach of this
  paper was recently generalized to convert any algorithm to a BIC
  mechanism in multi-dimensional discrete settings~\cite{HKM-11}.  The
  same approach gives an $\epsilon$-BIC approximation in general
  multi-dimensional settings~\cite{BH-11}.  These reductions can be
  applied to the combinatorial public project problem for which
  Papadimitriou et al.~\cite{PSS-08} exhibit a gap separating the
  approximation complexity of algorithms and ex post IC mechanisms.
  These reductions show that the gap is in fact between BIC and IC
  mechanisms~\cite{BH-11}.

\item Our reduction applies to the objective of social welfare
  maximization.  It would be nice to extend our result more generally
  to any monotone objective function, e.g., makespan.  Unfortunately,
  our approach fails to preserve the approximation factor of the
  makespan objective.  A concrete example that shows tha tour ideal
  reduction does not preserve expected makespan is given in
  Appendix~\ref{app:makespan}.  {\em Is there a polynomial-time
    reduction that turns any approximation algorithm for any monotone
    objective into a BIC mechanism with the same approximation
    factor?}

\item In the special case where our reduction is applied to a
  worst-case $\approxratio$-approximation (recall: our reduction applies more
  generally to Bayesian $\approxratio$-approximations), the resulting BIC
  mechanism is still only a $\approxratio$-approx\-i\-mation in the weaker Bayesian sense.
%  A concrete example of this is given in the full version.
%  See 
%  Appendix~\ref{app:Bayesian-approx}.  
{\em
    Is there a poly\-no\-mial-time black-box reduction that turns any worst-case
    $\approxratio$-approximation algorithm into a BIC mechanism that is also a
    worst-case $\approxratio$-approximation?}

\item While Bayes-Nash equilibrium (i.e., BIC) is the standard
  equilibrium concept for implementation in economics, the stronger
  dominant strategy equilibrium (i.e., ex post IC) is the standard
  concept
  in computer science.  The main
  challenge in obtaining a similar reduction for IC mechanisms is that
  the valuation space is exponentially big and monotonizing all points
  seems to require an exhaustive procedure.  
  %This 
  %seems more difficult in 
  %does not seem to be the case in
  %non-Bayesian settings.  
  One potential approach would be to apply our
  ironing technique repeatedly, re-ironing an agent's curve whenever it
  is affected by an ironing of another agent's curve.  
  Such a procedure
  %can be argued to terminate, but 
  will not generally 
  give a monotone allocation rule; %see
  A concrete example is given in the full version of the paper.
  %Appendix~\ref{app:IC-dependence}. 
  {\em Is
    there a polynomial-time reduction for turning any
    $f(n)$-approximation (worst-case or Bayesian) algorithm for a 
    single-parameter domain into an ex
    post IC mechanism that is a (worst-case or Bayesian)
    $\Theta(f(n))$-approximation?}
\end{enumerate}
The final question above is a refinement of what we consider to be the
main open question of this work.  {\em Is there a gap separating the
  approximation complexity of implementation by Bayesian incentive
  compatible and ex post incentive compatible mechanisms for
  single-parameter social welfare maximization?}

\bibliographystyle{plain}
\bibliography{auctions}

\begin{thebibliography}{10}

\bibitem{AH-06}
G.~Aggarwal and J.~Hartline.
\newblock Knapsack auctions.
\newblock In {\em Proceedings of the 17th Annual {ACM-SIAM} Symposium on
  Discrete Algorithms}, 2006.

\bibitem{AADK-02}
K.~Akcoglu, J.~Aspens, B.~Dasgupta, and M.~Kao.
\newblock An opportunity-cost algorithm for combinatorial auctions.
\newblock In {\em Applied Optimization: Computational Methods in
  Decision-Making, Economics, and Finance}, 2002.

\bibitem{APTT-03}
A.\ Archer, C.\ Papadimitriou, K.\ Talwar, and E.\ Tardos.
\newblock An approximate truthful mechanism for combinatorial auctions with
  single parameter agents.
\newblock In {\em Proc. 14th ACM Symp. on Discrete Algorithms}. ACM/SIAM, 2003.

\bibitem{AT-01}
A.~Archer and E.~Tardos.
\newblock Truthful mechanisms for one-parameter agents.
\newblock In {\em Proc. 42nd IEEE Symp. on Foundations of Computer Science},
  2001.

\bibitem{BB-04}
M.~Babaioff and L.~Blumrosen.
\newblock Computationally-feasible truthful auctions for convex bundles.
\newblock In {\em Proc. 7th Intl. Workshop on Approximation Algorithms for
  Combinatorial Optimization Problems}, 2004.

\bibitem{BLP-09}
M.~Babaioff, R.~Lavi, and E.~Pavlov.
\newblock Single-value combinatorial auctions and algorithmic implementation in
  undominated strategies.
\newblock {\em Journal of the ACM}, 2009.

\bibitem{BH-11}
X.~Bei and Z.~Huang.
\newblock Bayesian incentive compatibility via fractional assignments.
\newblock In {\em Proc. 22st ACM Symp. on Discrete Algorithms}, 2011.

\bibitem{BGGM-09}
S.~Bhattacharya, G.~Goel, S.~Gollapudi, and K.~Munagala.
\newblock Budget constrained auctions with heterogeneous items.
\newblock In {\em Proc. 41st ACM Symp. on Theory of Computing}, 2010.

\bibitem{BL-10}
A.~Borodin and B.~Lucier.
\newblock Price of anarchy for greedy auctions.
\newblock In {\em Proc. 21st ACM Symp. on Discrete Algorithms}, 2010.

\bibitem{CHMS-10}
S.~Chawla, J.~Hartline, D.~Malec, and B.~Sivan.
\newblock Sequential posted pricing and multi-parameter mechanism design.
\newblock In {\em Proc. 41st ACM Symp. on Theory of Computing}, 2010.

\bibitem{CHRR-06}
S.~Chawla, J.~Hartline, U.~Rajan, and R.~Ravi.
\newblock Bayesian optimal no-deficit mechanism design.
\newblock In {\em Workshop on Internet and Network Economics (WINE)}, 2006.

\bibitem{CKS-08}
G.~Christodoulou, A.~Kov\'acs, and Michael Schapira.
\newblock Bayesian combinatorial auctions.
\newblock In {\em Proc. 35st Intl. Colloq. on Automata, Languages and
  Programming}, pages 820--832, 2008.

\bibitem{DDDR-08}
P.~Dhangwatnotai, S.~Dobzinski, S.~Dughmi, and T.~Roughgarden.
\newblock Truthful approximation schemes for single-parameter agents.
\newblock In {\em Proc. 49th IEEE Symp. on Foundations of Computer Science},
  2008.

\bibitem{FPS-00}
J.~Feigenbaum, C.~Papadimitriou, and S.~Shenker.
\newblock Sharing the cost of multicast transmissions.
\newblock {\em Journal of Computer and System Sciences}, 63(1):21--41, 2001.

\bibitem{GMT-05}
M.~Gairing, B.~Monien, and K.~Tiemann.
\newblock Selfish routing with incomplete information.
\newblock In {\em Proc. 17th ACM Symp. on Parallel Algorithms and
  Architectures}, 2005.

\bibitem{HKM-11}
J.~Hartline, R.~Kleinberg, and A.~Malekian.
\newblock Bayesian incentive compatibility via matchings.
\newblock In {\em Proc. 22st ACM Symp. on Discrete Algorithms}, 2011.

\bibitem{lav-07}
R.~Lavi.
\newblock Computationally efficient approximation mechanisms.
\newblock In N.~Nisan, T.~Roughgarden, {\'E}.~Tardos, and V.~Vazirani, editors,
  {\em Algorithmic Game Theory}, chapter~12, pages 301--329. Cambridge
  University Press, 2007.

\bibitem{LS-05}
R.~Lavi and C.~Swamy.
\newblock Truthful and near-optimal mechanism design via linear programming.
\newblock In {\em Proc. 46th IEEE Symp. on Foundations of Computer Science},
  2005.

\bibitem{LOS-99}
D.~Lehmann, L.~I. O'Callaghan, and Y.~Shoham.
\newblock Truth revelation in approximately efficient combinatorial auctions.
\newblock In {\em Proc. 1st ACM Conf. on Electronic Commerce}, pages 96--102.
  ACM Press, 1999.

\bibitem{mye-81}
R.~Myerson.
\newblock Optimal auction design.
\newblock {\em Mathematics of Operations Research}, 6:58--73, 1981.

\bibitem{PSS-08}
C.~Papadimitriou, M.~Schapira, and Y.~Singer.
\newblock On the hardness of being truthful.
\newblock In {\em Proc. 49th IEEE Symp. on Foundations of Computer Science},
  2008.

\end{thebibliography}

\appendix

\section{Ironing Allocation Rules vs. Ironing Virtual Valuations}
\label{app:Myerson}

At the heart of our mechanism construction is an ironing procedure that monotonizes allocation rules, outlined in Section \ref{sec:ideal}.
A similar process is used by Myerson as part of his construction of (revenue) optimal mechanisms 
for single-parameter settings \cite{mye-81}.  
In light of this similarity, we will now compare these two constructions and highlight their differences.

We first recall Myerson's optimal mechanism.  For each agent $i$, the mechanism considers the \emph{virtual valuation function} $\phi_i(\cdot)$
given by $\phi_i(\vali) = \vali - \frac{1 - \disti(\vali)}{\densi(\vali)}$.  This function is monotonized\footnote{Note that the virtual valuation function may be non-monotone if $\disti$ does not satisfy the monotone hazard rate assumption.  For instance, bimodal distributions generally have non-monotone virtual valuation functions.} using the ironing method described in
Section \ref{sec:ideal}; the resulting monotone function is denoted $\overline{\phi_i}(\cdot)$.  Given a valuation profile $\vals$, the mechanism
returns the allocation $\allocs$ that maximizes $\sum_i \overline{\phi_i}(\vali) \cdot \alloci - \cost(\allocs)$.  Myerson's celebrated result is that 
this allocation rule is revenue-optimal among the class of incentive compatible allocation rules.

Informally speaking, one can interpret Myerson's mechanism as first considering the allocation rule that maximizes social welfare with 
respect to the profile of virtual values $\phi_i(\vali)$.  However, if the virtual valuation function is non-monotone, this 
allocation rule will also be non-monotone and hence not incentive compatible.  
The mechanism addresses this issue by ironing the virtual valuation function, which effectively monotonizes the allocation rule.

The motivation for ironing in our construction is quite similar, in that we are given a non-monotone allocation rule that we wish to make
incentive compatible.  Furthermore, we address the issue in a similar way: by ironing the offending non-monotone curve.  One might
therefore suspect that these two monotonization procedures are, in fact, equivalent when restricted to the allocation rule that maximizes virtual welfare.  However, as we will now show, this is not the
case: the mechanisms that result from ironing the virtual valuation function and from ironing the allocation rule are distinct.  Thus, our 
construction does differ, in an essential way, from that of Myerson.

%\textbf{BJL - stuff}
%
%The ironing procedure described in this paper is reminiscent of the
%ironing procedure used by Myerson for maximizing revenue
%\cite{mye-81}.  Myerson's optimal mechanism allocates to maximize
%virtual value (defined by $\phi_i(\vali) = \vali - \frac{1-\disti(\vali)}{\densi(\vali)}$ for agent $i$).
%When the virtual valuation functions are non-monotone, this mechanism
%first irons the virtual valuation functions, then allocates to
%maximize ironed virtual value.  A natural question is whether this is
%identical to ironing the non-monotone allocation rule that would
%result from maximizing non-ironed virtual values.  We now give an example demonstrating that these two mechanisms are distinct.

Let us provide an example to illustrate this difference.
Consider an auction of a single indivisible item to multiple bidders with values drawn i.i.d.~from distribution $F$.
Consider the following distribution $F$: with probability $1/2$, the value is drawn uniformly from $[\frac{3}{8},\frac{1}{2}]$; otherwise, it is drawn uniformly from $(\frac{1}{2},1]$ (see Figure \ref{fig:appMyerson}). The virtual valuation function and ironed virtual valuation function corresponding to this distribution are
$$
\phi(v) = \begin{cases}
2v - \frac{5}{8} & v \in \left[\frac{3}{8},\frac{1}{2}\right] \\
2v - 1 & v \in \left(\frac{1}{2},1\right].
\end{cases}
\quad\quad
\overline{\phi}(v) = \begin{cases}
2v - \frac{5}{8} & v \in \left[\frac{3}{8},\frac{13}{32}\right] \\
\frac{3}{16} & v \in \left(\frac{13}{32},\frac{19}{32}\right] \\
2v - 1 & v \in \left(\frac{19}{32},1\right].
\end{cases}
$$ 

\begin{figure}[t]
\begin{center}
\begin{tabular}{ccc}

{
\def\pshlabel#1{\footnotesize  #1}
\def\psvlabel#1{\footnotesize #1}
\psset{yunit=3cm,xunit=3cm}
\begin{pspicture}(-0.2,-0.3)(1,1)
\psaxes[ticksize=0.4ex,Dx = 0.5,Dy = 0.5]{-}(0,0)(0,0)(1,1)

{%
\psset{linewidth=2pt,linecolor=purple}
\psline{-}(0.375,0)(0.5,0.5)
\psline{-}(0.5,0.5)(1,1)
}

\rput[b](0.5,0.8){$\dist(\val)$}
\rput[b](0.5,-0.3){Bid Value}
\rput[b]{90}(-0.25,0.5){Probability}

\end{pspicture}
}

&

{
\def\pshlabel#1{\footnotesize  #1}
\def\psvlabel#1{\footnotesize #1}
\psset{yunit=3cm,xunit=3cm}
\begin{pspicture}(-0.2,-0.3)(1,1)
\psaxes[ticksize=0.4ex,Dx = 0.5,Dy = 0.5]{-}(0,0)(0,0)(1,1)

{%
\psset{linewidth=2pt,linecolor=purple}
\psline{-*}(0.375,0.125)(0.5,0.375)
\psline{o-}(0.5,0)(1,1)
}

\rput[b](0.5,0.8){$\phi(\val)$}
\rput[b](0.5,-0.3){Bid Value}
\rput[b]{90}(-0.25,0.5){Virtual Value}

\end{pspicture}
}

&

{
\def\pshlabel#1{\footnotesize  #1}
\def\psvlabel#1{\footnotesize #1}
\psset{yunit=3cm,xunit=3cm}
\begin{pspicture}(-0.2,-0.3)(1,1)
\psaxes[ticksize=0.4ex,Dx = 0.5,Dy = 0.5]{-}(0,0)(0,0)(1,1)

{%
\psset{linewidth=2pt,linecolor=purple}
\psline{-}(0.375,0.125)(0.40625,0.1875)
\psline{-}(0.40625,0.1875)(0.59375,0.1875)
\psline{-}(0.59375,0.1875)(1,1)
}

\rput[b](0.5,0.8){$\overline{\phi}(\val)$}
\rput[b](0.5,-0.3){Bid Value}
\rput[b]{90}(-0.25,0.5){Virtual Value}

\end{pspicture}
} 

\end{tabular}
\end{center}
\caption{The distribution $\dist(\val)$ used in Appendix \ref{app:Myerson}, with virtual valuation function $\phi(\val)$ and ironed virtual valuation function  $\overline{\phi}(\val)$.}
\label{fig:appMyerson}
\end{figure}

Suppose $\alg$ is the allocation rule that assigns the item to the agent with highest virtual value.
We now consider the two incentive compatible variants of $\alg$ that we wish to compare.  Namely, 
let $\alg'$ be Myerson's algorithm, which assigns the item to the agent with the highest \emph{ironed} virtual value, and
let $\ialg$ be the ironed algorithm corresponding to $\alg$ (as in Section \ref{sec:ideal}).  
Let $\alloc(\cdot)$, $\alloc'(\cdot)$, and $\ialloc(\cdot)$ denote the allocation curves corresponding to $\alg$, $\alg'$, and $\ialg$, respectively\footnote{We drop the usual subscript of agent index since, by symmetry, the allocation curves are the same for each player.}.  
Our goal is to show that $\ialloc(\cdot) \neq \alloc'(\cdot)$.

%The ironed virtual valuation function is given by:
%$$
%\phi'(v) = \begin{cases}
%2v - 12 & v \in [10,10.25] \\
%8.5 & v \in (10.25,11.75] \\
%2v - 15 & v \in (11.75,15].
%\end{cases}
%$$ 
We observe that the function $\overline{\phi}(\cdot)$ achieves a strict minimum, over its effective range $[\frac{3}{8},1]$, at the point $v = \frac{3}{8}$.  This implies that $\alloc'(\frac{3}{8}) = 0$, since an agent that declares the minimal value can be awarded an allocation only in the $0$-probability event that all other agents report this same value.

On the other hand, it must be that $\ialloc(\frac{3}{8}) = \expect[v]{\alloc(v) \given v \leq z}$ for some $z \in [\frac{3}{8},1]$.  We claim that this value is strictly positive.  Indeed, $\phi(v) > \phi(w)$ for $v \in [\frac{3}{8},\frac{1}{2}]$ and $w \in (\frac{1}{2},\frac{9}{16})$.  This implies that $\alloc(v) > 0$ for all $v \in [\frac{3}{8},\frac{1}{2})$.  We must therefore have $\ialloc(\frac{3}{8}) = \expect[v]{\alloc(v) \given v \leq z} > 0$.

We conclude $\alloc'(\frac{3}{8}) \neq \ialloc(\frac{3}{8})$, and thus the allocation rules $\alg'$ and $\ialg$ are distinct.

\section{Equilibria of Non-Monotone Algorithms}
\label{app:original}

We have shown how to transform a non-monotone
algorithm into a monotone one to obtain a mechanism that is BIC.  It
is notable that the agents could not do this on their own: there
are non-monotone algorithms that, when coupled with any individually-rational and no-positive-transfer\footnote{No positive
  transfers implies that losers have zero payment.}
payment rule, do not
have any BNE with near the expected welfare as the original algorithm
on the true values.

Choose parameter $X \gg n$.  Consider an auction of a single
indivisible item to $n$ bidders with values drawn i.i.d.~from the
following distribution: with probability $1/n$ the value is $X$; with
the remaining probability it is drawn uniformly from $[0,1]$.
Let $\alg$ allocate to the bidder with the
largest value in $\left[0,\frac{1}{n^2}\right]$, if any and breaking ties
randomly; otherwise it
allocates to the bidder with the largest value.  

Consider the expected welfare of $\alg$.  Since with high probability
an agent has value $X$ and no agent has value $1/n^2$ or below, $\alg
= \Omega(X)$.

Next we show that in any BNE most agents will bid $1/n^2$ and the
expected welfare will be the average value of the agents which is
$O(X/n)$.  Thus, the equiligrium is far from the algorithms Bayesian
performance, i.e., the {\em price of stability} is linear.

Consider any mechanism that pairs $\alg$ with an ex-post IR and
no-positive-transfer payment scheme.  We claim that in any BNE of such
a mechanism, an agent with value greater than $\frac{1}{n^2}$ would
instead report value $\frac{1}{n^2}$.  To see this, consider a BNE and
let $p$ denote the probability that some agent declares value
$\frac{1}{n^2}$.  Suppose that $p < 1-\frac{1}{n}$.  If agent $i$ has
value $\vali \in \left[\frac{1}{8},\frac{3}{8}\right]$ and he does not
bid $\frac{1}{n^2}$, then his probability of allocation is at most
$p(3/4+o(1))$ (since otherwise, with high probability, there will be
an agent with value at least $2\vali$ who could improve his utility by
copying agent $i$'s strategy).  The expected utility of agent $i$ is
therefore at most $p(3\vali/4+o(1))$.  On the other hand, agent $i$
could bid $\frac{1}{n^2}$ for an expected utility of at least
$p\left(\vali-\frac{1}{n^2}\right) > p(\vali/2+o(1))$ (since $\vali
\geq 1/8$).  Thus any agent with a value in
$\left[\frac{1}{8},\frac{3}{8}\right]$ will bid $\frac{1}{n^2}$, so $p
\geq 1-\frac{1}{n}$.  We conclude by noting that if $p \geq
1-\frac{1}{n}$, every player with value above $\frac{1}{n^2}$
maximizes his utility by declaring $\frac{1}{n^2}$.

\section{Failure to Preserve Worst-Case Approximations}
\label{app:Bayesian-approx}

We present an example to demonstrate that ideal ironing does not
preserve worst-case approximation ratios.  Consider an auction of 2
objects to 2 unit-demand bidders, with the goal of optimizing social
welfare.  The private value of agent 1 is drawn uniformly from
$\{1,100\}$, and the private value of agent 2 is drawn uniformly from
$\{10,1000,1001\}$.  Let $\alg$ be an approximation algorithm whose
allocation rule is described in Figure \ref{fig.appendix2}.

\begin{figure}
\begin{center}
\begin{tabular}{c||c|c|c|}
 & 10 & 1000 & 1001 \\
\hline \hline
100 & 1, 0 & 0, 1 & 0, 1 \\
\hline
1 & 0, 1 & 1, 1 & 1, 1 \\
\hline
\end{tabular}
\caption{The allocation rule for algorithm $\alg$.  The vertical axis corresponds to $v_1$, the horizontal to $v_2$, and the table entries are of the form ``$x_1$, $x_2$''.  For example, if $(v_1,v_2) = (100,10)$, then $(x_1,x_2) = (1,0)$.}
\label{fig.appendix2}
\end{center}
\end{figure}

%     +-----+------+------+
% 100 | 1/0 | 0/1  | 0/1  |
%     +-----+------+------+
%   1 | 0/1 | 1/1  | 1/1  |
%     +-----+------+------+
%       10    1000   1001

We note that $\alg$ is a worst-case 11/10 approximation algorithm
(where the optimal solution is to always allocate to both players).
Also, $\alg$ is not BIC for agent 1: $E[x_1(1)] = 2/3$, whereas
$E[x_1(100)] = 1/3$.  The ideal monotonization procedure will draw a
new bid $v_1'$ for agent 1 uniformly from $\{1,100\}$, and run $\alg$
on $(v_1',v_2)$.  Call this new algorithm $\ialg$.

Note that if $v_1 = 100$ and $v_2 = 10$, then with probability 1/2
$\ialg$ will take $v_1' = 1$ and choose allocation $(0,1)$, and with
the remaining probability it will take $v_1' = 100$ and choose
allocation $(1,0)$.  Hence, for this set of input values, the expected
welfare obtained by $\ialg$ is $\frac{100+10}{2} = 55$.  Since $110$
is optimal, $\ialg$ is at best a $2$-approximation algorithm, whereas
$\alg$ is an $11/10$-approximation algorithm.  We conclude that ideal
ironing can cause a significant decrease in worst-case approximation
ratios.

\section{Beyond Social Welfare}
\label{app:makespan}

In the ideal model, our general reduction applies to any
single-parameter optimization problem and converts an algorithm into a
mechanism with at least the same expected social welfare.
Unfortunately, this approach does not preserve other relevant
objective values, even monotone ones such as the makespan.  We
illustrate this deficiency of the approach with an example.

Consider the problem of {\em job scheduling on related machines} where the objective is to minimize the makespan.  Here the machines are agents and
each has a (privately-known) speed.  The time a job takes on a machine
is the product of its length and the machine's speed.  The goal of the
mechanism is to assign a given set of jobs with varying lengths to the
machines so as to minimize the time until all machines have finished
processing their jobs, a.k.a., the makespan.

Consider an instance in which we have 10 unit-length jobs, and 5
machines.  We assume a Bayesian setting, where the speeds of the
machines are probabilistic.  The first 4 machines are identical: they
all have speed 2 with probability 1.  The last machine has either
speed 1 or speed 2, each with probability $1/2$.

Suppose $\alg$ behaves in the following way.  When $s_5 = 2$, it will
choose $x_i = 2$ for all $i$, resulting in a makespan of $1$.  When
$s_5 = 1$, $\alg$ sets $\allocs = (6,1,0,0,3)$, resulting in a
makespan of $3$ (whereas the optimal is 1.5).  Thus the average
expected makespan achieved by $\alg$ is $2$.

We note that, for this algorithm, the allocation curve for machine 5
is not monotone.  Our monotonization procedure will therefore iron the
valuation space of machine 5.  The optimal ironing of this curve will
draw $s_5'$ uniformly from $\{1,2\}$.  This causes 4 equally likely
possibilities, corresponding to $(s_5,s_5') \in \{1,2\}^2$.

If $s_5 = s_5'$ then $\alg$ is proceeding as though no ironing
occurred: if $s_5 = s_5' = 1$ then the the makespan is 1, and if $s_5
= s_5' = 2$ the makespan is 3.  Suppose $s_5 = 2$, $s_5' = 1$.  Then
$\alg$ forms allocation $x$ as though machine 5 has speed 1, though it
actually has speed 2.  Hence we obtain $\allocs = (6,1,0,0,3)$, for a
makespan of $3$.  If, on the other hand, $s_5 = 1, s_5' = 2$ then we
obtain $\allocs = (2,2,2,2,2)$, for a makespan of $2$.

We conclude that the expected makespan of the ironed procedure is
$\frac{1+3+3+2}{4} = 2.25$, which is strictly worse than the expected makespan
obtained by the original algorithm.

\section{Recursive Ironing does not Guarantee ex post IC}
\label{app:IC-dependence}

In order for a mechanism to guarantee ex-post incentive compatibility, it must be that, for all $i \in [n]$, the allocation rule $\alloci$ is monotone \emph{for any choice of $\valsmi$}.  Monotonizing each agent's allocation rule independently is insufficient to obtain this goal.  Indeed, it is easy to construct examples where each each agent's allocation curve is monotone in expectation, but non-monotone for a particular choice of the other agents' bids.

One might imagine the following recursive approach for obtaining ex-post incentive compatibility.  Begin by assuming that each agents' input is drawn from the singleton interval $I_i = [\vali,\vali]$, and let $\textbf{I}$ denote the cube $I_1 \times I_2 \times \dotsm \times I_n$.  Choose some agent $i$ whose allocation curve is not monotone, under the assumption that each other agents' values are drawn from cube $\textbf{I}$, and iron that agent's curve under this assumption.  This ironing process may enlarge the interval from which agent $i$'s value is drawn; update $I_i$ to be this new interval.  Repeat this process, choosing a new agent on each iteration, until all agents' curves are monotone.

%On input $\vals$, choose some agent $i$ and use our ironing procedure to monotonize the allocation curve $\alloci$ given that the other agents bid $\valsmi$.  As a result of this monotonization procedure, the effective bid of agent $i$ is no longer necessarily $\vali$, but rather some $\vali'$ chosen from an interval $I_i$ containing $\vali$.  Now choose some other agent $j \neq i$, and monotonize the allocation curve $\alloc_j$ under the assumption that $\vali$ is drawn from interval $I_i$ and the other values correspond to $\vals$.  The effective bid of agent $j$ is now some $v_j'$ chosen from an interval $I_j$.  We then choose another agent to monotonize, and continue in this way until the allocation rules of all agents are monotone, with respect to the intervals from which the agents' values are drawn.

Unfortunately, as we now demonstrate, the above procedure fails to guarantee ex-post incentive compatibility.  Consider an auction setting with 2 agents, such that either both agents receive an allocation or neither does.  Suppose $\alg$ is the algorithm with allocation rule described in Figure \ref{fig.appendix3}.

\begin{figure}
\begin{center}
\begin{tabular}{c||c|c|c|c|c|c|}
 & 1 & 2 & 3 & 4 & 5 & 6 \\
\hline \hline
2 & 0.20 & 0.60 & 0.60 & 0.20 & 0.20 & 0.60 \\
\hline
1 & 0.80 & 0.20 & 0.82 & 0.22 & 0.84 & 0.24 \\
\hline
\end{tabular}
\caption{The allocation rule for algorithm $\alg$.  The vertical axis corresponds to possible values $v_1$, the horizontal axis corresponds to possible values $v_2$, and the table entries denote the probability of allocation to both agents.  For example, if $(v_1,v_2) = (1,4)$, then $(x_1,x_2) = (0.22,0.22)$.}
\label{fig.appendix3}
\end{center}
\end{figure}

Consider the application of our recursive monotonization technique on this algorithm when $(v_1,v_2) = (1,5)$.  Suppose we choose to monotonize the allocation curve for agent $2$.  Applying our monotonization procedure to the values $(0.8,0.2,0.82,0.22,0.84,0.24)$, we obtain ironed intervals $\{1,2\}$, $\{3,4\}$, and $\{5,6\}$.  The resulting allocation rule is shown in figure \ref{fig.appendix3-2}(a).

\begin{figure}
\begin{center}
\begin{tabular}{cc}
\begin{tabular}{c||c|c|c|c|c|c|}
 & 1 & 2 & 3 & 4 & 5 & 6 \\
\hline \hline
2 & 0.40 & 0.40 & 0.40 & 0.40 & 0.40 & 0.40 \\
\hline
1 & 0.50 & 0.50 & 0.52 & 0.52 & 0.54 & 0.54 \\
\hline
\end{tabular} & 
\begin{tabular}{c||c|c|c|c|c|c|}
 & 1 & 2 & 3 & 4 & 5 & 6 \\
\hline \hline
2 & 0.45 & 0.45 & 0.46 & 0.46 & 0.47 & 0.47 \\
\hline
1 & 0.45 & 0.45 & 0.46 & 0.46 & 0.47 & 0.47 \\
\hline
\end{tabular} 
\\
(a) & (b)
\end{tabular}
\caption{The results of the recursive monotonization procedure for algorithm $\alg$, on input $(1,5)$, (a) after 1 step, (b) after 2 steps.}
\label{fig.appendix3-2}
\end{center}
\end{figure}

We next monotonize curve $x_1$ at the point $v_2 = 5$.  This curve is non-monotone for agent $1$, and the resulting ironed interval is $\{1,2\}$. The resulting allocation rule is shown in figure \ref{fig.appendix3-2}(b).  After this monotonization, the allocation curves for all agents are monotone.  The final expected allocation probability for both players is $0.47$ (the entry at $(v_1,v_2) = (1,5)$).

Next consider the application of this technique when $(v_1,v_2) = (2,5)$.  Monotonizing agent $2$ first, we apply our procedure to values $(0.2, 0.6, 0.6, 0.2, 0.2, 0.6)$ and we obtained ironed interval $\{2,3,4,5\}$.  The resulting allocation rule is shown in figure \ref{fig.appendix3-3}(a).

\begin{figure}
\begin{center}
\begin{tabular}{cc}
\begin{tabular}{c||c|c|c|c|c|c|}
 & 1 & 2 & 3 & 4 & 5 & 6 \\
\hline \hline
2 & 0.20 & 0.40 & 0.40 & 0.40 & 0.40 & 0.60 \\
\hline
1 & 0.80 & 0.52 & 0.52 & 0.52 & 0.52 & 0.24 \\
\hline
\end{tabular} & 
\begin{tabular}{c||c|c|c|c|c|c|}
 & 1 & 2 & 3 & 4 & 5 & 6 \\
\hline \hline
2 & 0.50 & 0.46 & 0.46 & 0.46 & 0.46 & 0.42 \\
\hline
1 & 0.50 & 0.46 & 0.46 & 0.46 & 0.46 & 0.42 \\
\hline
\end{tabular} \\
(a) & (b) \\
\multicolumn{2}{c}{
\begin{tabular}{c||c|c|c|c|c|c|}
 & 1 & 2 & 3 & 4 & 5 & 6 \\
\hline \hline
2 & 0.46 & 0.46 & 0.46 & 0.46 & 0.46 & 0.46 \\
\hline
1 & 0.46 & 0.46 & 0.46 & 0.46 & 0.46 & 0.46 \\
\hline
\end{tabular} 
}\\
\multicolumn{2}{c}{(c)}
\end{tabular}
\caption{The results of the recursive monotonization procedure for algorithm $\alg$, on input $(2,5)$, (a) after 1 step, (b) after 2 steps, (c) after 3 steps.}
\label{fig.appendix3-3}
\end{center}
\end{figure}

We next monotonize curve $x_1$ at the point $v_2 = 5$.  This curve is non-monotone for agent $1$, and the resulting ironed interval is $\{1,2\}$. The resulting allocation rule is shown in figure \ref{fig.appendix3-3}(b).  After this monotonization, the allocation curve for agent $2$ is no longer monotone, so we must iron again over the interval $\{1,2,3,4,5,6\}$.  At this point all agents' allocation curves are monotone.  The final expected allocation probability for both players is $0.46$.

What we have shown is that, given $v_2 = 5$, our recursive monotonization procedure generates the allocation rule $x_1(1) = 0.47 > 0.46 = x_1(2)$ for agent $1$.  This procedure therefore does not result in a monotone allocation rule, and hence does not obtain ex-post incentive compatibility.

%%%%%%%%%%%%%%%%%%%%%%%%%%%%%%

\section{Extension to General Valuations}
\label{app:gen-vals}

We now show how to modify our construction from Section \ref{sec:black-box} to obtain a multiplicative error for many problems of interest, such as downward-closed feasibility settings.  To see how this contrasts with Theorem \ref{thm.main.bic.2}, consider a setting in which the expected valuation of each agent is exponentially small\footnote{Since values are scaled to lie in $[0,1]$, this situation can occur whenever the expected valuations of the agents are bounded, but agents can have exponentially larger values with positive probability.}. In this case, any additive error $\epsilon$ such that $\epsilon^{-1}$ is polynomial will dominate the expected welfare of algorithm $\alg$.  We therefore require a more general theorem in order to obtain meaningful results in this setting.

To this end, let $\mumax = \max_i \expect{\vali}$ be the maximum expected valuation of any agent.  The following is a tightened version of Theorem \ref{thm.main.bic.2} in which the loss in social welfare is scaled by $\mumax$.

\begin{theorem}
\label{thm.main.bic}
In the black-box model and general cost settings, for any $\epsilon > 0$,
a BIC algorithm $\alg'$ can be computed from any Bayesian
algorithm $\alg$.  Its social welfare satisfies 
$\alg' \geq \alg - \epsilon\mumax$, and its
runtime is
polynomial in $n$, $1/\epsilon$, and $\log(1/\mumax)$.
%$\softO(n^{9}\epsilon^{-9}\log^5(\val_{max}/\epsilon\mumax))$.  
%In general feasibility settings, the runtime can be improved to
%$\softO(n^5\epsilon^{-5}\log^3(\val_{max}/\epsilon\mumax))$.
\end{theorem}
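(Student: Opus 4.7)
The plan is to adapt the monotonized algorithm $\corralg[\eps']{\alg}$ of Theorem~\ref{thm.main.bic.2} with three modifications, each of which makes the welfare loss scale with $\mumax$ rather than with the trivial upper bound $1$ on valuations, while keeping the runtime logarithmic in $1/\mumax$.  Internal discretization and sampling parameters $\eps_d$ and $\eps_s$ will be chosen as functions of $\eps$, $n$, and $\mumax$ so that all four sources of error (discretization, estimation, trimming, and stair combination) are each $O(\eps\mumax)$.

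First, I would refine the bottom of the discretization of Definition~\ref{def:discalg}, replacing the lowest bucket $[0,\eps_d)$ with $[0,\eps_d\mumax)$ and placing geometric buckets $[\eps_d\mumax(1+\eps_d)^t,\eps_d\mumax(1+\eps_d)^{t+1})$ above it. The number of pieces becomes $k = O(\log(1/(\eps_d\mumax))/\eps_d)$, which is polynomial in $\log(1/\mumax)$, and the analogue of Lemma~\ref{lem.useful.discrete} yields loss $2n\eps_d\mumax$: values below $\eps_d\mumax$ contribute at most $\eps_d\mumax$ per agent, and above $\eps_d\mumax$ the multiplicative $(1-\eps_d)$ rounding contributes at most $\eps_d\expect{\vali} \leq \eps_d\mumax$ per agent. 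Second, I would sharpen Lemmas~\ref{lem.welfare} and~\ref{lem.findsets} by replacing the trivial bound $\vali \leq 1$ with $\expect{\vali} \leq \mumax$: for $\eps$-close allocation rules the per-agent loss becomes $\eps\mumax$, and for trimming intervals with $\alloci(I) \leq \eps/2$ the per-agent loss becomes $(\eps/2)\mumax$, so the statistical and trimming losses contribute only $O(n\eps_s\mumax)$.

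The main obstacle is the stair algorithm. In the general cost setting, the deterministic stair of Section~\ref{sec:black-box-bic} may use $S_i$ with $\cost(S_i)$ as large as $n$, so the combination would lose $\stairfrac\cdot n$, forcing $\eps_s$ to be polynomially small in $\mumax$ and ruining the runtime. I would circumvent this by using a \emph{randomized} stair in which, on every invocation, $S_i$ is freshly drawn from the distribution of outputs of $\alg$ on random inputs $\vals^{(i)}\sim\dists$ conditioned on $i$ being allocated (and $T_i$ similarly, conditioned on $i$ not being allocated). The non-negative-welfare property of $\alg$ gives $\expect{\cost(\allocs(\vals))} \leq \expect{\vals\cdot\allocs(\vals)} \leq n\mumax$, and conditioning on an event of probability $\alloci\geq\eps/2$ (guaranteed by trimming, which handles the excluded agents via ironing on $[0,1]$) yields $\expect{\cost(S_i)} \leq 2n\mumax/\eps$. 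The randomized stair therefore has expected welfare at least $-O(n\mumax/\eps)$, and the combination loss reduces to $\stairfrac\cdot O(n\mumax/\eps)$. I still need to verify that the randomized stair, when convex-combined with $\statalg{\alg'}$, yields a BIC algorithm: each agent $j$'s marginal allocation probability decomposes as $\tfrac{1}{n}p_j(v_j)$ (coming from the event $i=j$, in which $j\in S_j$ with probability one) plus a term that is independent of $v_j$ (averaging over the fresh draws of $S_i,T_i$ and over the bucket of $v_i$ for $i\neq j$); this is monotone in $v_j$, so Lemma~\ref{lem.step} still applies.

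Finally, setting $\eps_d = \Theta(\eps/n)$ and $\eps_s = \Theta(\eps^2/(kn^2))$ makes each of the four sources of loss $O(\eps\mumax/4)$, yielding the claimed welfare guarantee $\alg' \geq \alg - \eps\mumax$. The runtime is dominated by sampling, which uses $\tilde{O}(nk/\eps_s^2)$ black-box calls to $\alg$; substituting $k=O(\log(1/\mumax)/\eps)$ and the chosen $\eps_s$, this is polynomial in $n$, $1/\eps$, and $\log(1/\mumax)$ as required. The non-obvious step is the cost-averaging in the randomized stair: without it, the stair's worst-case cost $n$ would couple $\eps_s$ to $\mumax$ polynomially, breaking the runtime bound.
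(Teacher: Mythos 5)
Your $\mumax$-scaled discretization and the sharpened closeness/trimming bounds (replacing $\vali \leq 1$ by $\expect{\vali}\leq\mumax$) match the paper's route to Theorem~\ref{thm.main.eps-bne.2}, and your observation that a randomized stair remains BIC is sound (the $i\neq j$ branches contribute a term constant in $\val_j$, while the $i=j$ branch still steps by exactly $1/((k-1)n)$). The gap is in the cost bound for that randomized stair. You claim $\expect{\cost(S_i)}\leq 2n\mumax/\eps$ by conditioning on the event ``$i$ is allocated,'' asserting that trimming guarantees this event has probability at least $\eps/2$. It does not: trimming samples each piece \emph{conditioned on} $\vali\in I_j$ and only flattens agents for whom no allocation is ever observed, so a surviving agent can have all of its allocation mass on pieces of arbitrarily small prior probability, making $\prob[\vals\sim\dists]{i\in\allocs(\vals)}$ arbitrarily small. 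In that regime two things break at once: drawing $S_i$ from the conditional distribution by rejection sampling is not polynomial time (and the runtime may only depend on $n$, $1/\eps$, $\log(1/\mumax)$), and the conditional expected cost is not $O(n\mumax/\eps)$ --- e.g., if $\alg$ allocates to $i$ only on a rare event on which all values are near $1$ and it then returns a set of cost close to $n$, then $\expect{\cost(S_i)\given i\in S_i}\approx n$ while $\mumax$ can be tiny, which is exactly the worst case the randomized stair was meant to avoid. (The draw of $T_i$ conditioned on $i$ \emph{not} being allocated has the symmetric problem.)

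What is missing is the paper's coupling of $\cost(S_i)$ to the agent's own value via the \emph{stair threshold} (Definition~\ref{def:stairthreshold}). In the paper's construction (Definition~\ref{def:stairalg2}), each piece $I_j$ is sampled in search of a set $T\ni i$ with $\cost(T)\leq \min I_j + n\mumax/\sqrt{\eps}$; a Markov-inequality argument shows such a set is returned with probability at least $\eps/2$ per sample whenever $\alloci(I_j)\geq\sqrt{\eps}+\eps/2$, and all pieces to the left of the first such $I_{j_i}$ are ironed together so that $\min I_{j_i}$ becomes the stair threshold $w_i^{\trimalg{\alg}}$ (Lemma~\ref{lem.findsets-bound}). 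Since $\stair{\alg}$ never allocates $S_i$ when $\vali$ lies in the first piece, its welfare is bounded below by $-n\mumax/\sqrt{\eps}$ (Lemma~\ref{lem.stair.cost}), the trimming loss is $2n\mumax\sqrt{\eps}$ (Lemma~\ref{lem.findsets.2}), and the final parameter choice $\eps'=(\eps/9kn^2)^2$ gives the theorem. Your proposal has no analogue of this threshold/ironing coupling, and without it (or some other device tying the cost of the sets the stair allocates to the values that trigger their allocation) the convex-combination loss cannot be bounded by $O(\eps\mumax)$ within the required runtime.
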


For the special case of downward-closed set
systems for feasibility problems, we can assume that $\alg \geq \mumax$, since the trivial
algorithm that simply allocates to the single player with the highest
input value attains this value.  This implies the following corollary.

\begin{cor}
\label{cor.main.bic.downclosed}
In the black-box model and downward-closed feasibility settings, 
for any $\epsilon > 0$, 
a BIC Bayesian $\approxratio(1+\epsilon)$-approximation algorithm
$\alg'$ can be computed from any Bayesian
$\approxratio$-approximation algorithm $\alg$.  Its runtime is
polynomial in $n$, $1/\epsilon$, and $\log(1/\mumax)$.
%polynomial in $n$, $\epsilon^{-1}$, and $\alg^{-1}$.
%$\softO(n^5\epsilon^{-5}\log^3(\val_{max}/\epsilon\mumax))$.
\end{cor}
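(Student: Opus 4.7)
The plan is to reduce the corollary to Theorem~\ref{thm.main.bic} by exploiting the downward-closed structure of the feasibility constraint to bound $\alg$ from below by $\mumax$, which then converts the additive error $\epsilon \mumax$ in that theorem into a purely multiplicative factor on the approximation ratio.

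The first step is to observe that in a downward-closed feasibility setting the singleton $\{i\}$ is feasible for every agent $i$, so for any valuation profile $\vals$ we have $\OPT(\vals) \geq \max_i \vali$. Taking expectations yields $\OPT \geq \expect{\max_i \vali} \geq \max_i \expect{\vali} = \mumax$. The second step is to argue that we may assume $\alg \geq \mumax$ without loss of generality. Let $\alg_0$ denote the trivial algorithm that allocates only to the agent of highest reported value; this is feasible by downward-closedness and attains expected welfare $\expect{\max_i \vali} \geq \mumax$. If $\alg < \mumax$, replace $\alg$ by $\alg_0$: the replacement remains a Bayesian $\approxratio$-approximation since $\alg_0 \geq \mumax > \alg \geq \OPT/\approxratio$, and now we have $\alg \geq \mumax$.

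The final step is to invoke Theorem~\ref{thm.main.bic} with parameter $\epsilon' = \epsilon/(1+\epsilon)$, producing a BIC algorithm $\alg'$ with $\alg' \geq \alg - \epsilon' \mumax$. Using $\mumax \leq \alg$, this gives $\alg' \geq (1-\epsilon')\alg = \alg/(1+\epsilon) \geq \OPT/(\approxratio(1+\epsilon))$, so $\alg'$ is a Bayesian $\approxratio(1+\epsilon)$-approximation. The runtime guaranteed by Theorem~\ref{thm.main.bic} is polynomial in $n$, $1/\epsilon'$, and $\log(1/\mumax)$, which is polynomial in $n$, $1/\epsilon$, and $\log(1/\mumax)$, as required. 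There is no genuine obstacle here beyond the downward-closed observation of step one; the corollary is an almost immediate consequence of Theorem~\ref{thm.main.bic} once the lower bound $\alg \geq \mumax$ is in hand.
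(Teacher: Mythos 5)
Your proposal is correct and follows essentially the same route as the paper: the paper's one-line justification is exactly that in downward-closed feasibility settings one may assume $\alg \geq \mumax$ (via the trivial highest-bidder algorithm), after which Theorem~\ref{thm.main.bic} converts the additive $\epsilon\mumax$ loss into a multiplicative one. Your explicit rescaling to $\epsilon' = \epsilon/(1+\epsilon)$ just makes the stated $\approxratio(1+\epsilon)$ factor exact, a detail the paper leaves implicit.
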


To prove Theorem \ref{thm.main.bic}, we first consider the following variant of Theorem \ref{thm.main.eps-bne}:

\begin{theorem}
\label{thm.main.eps-bne.2}
In the black-box model and general cost settings, for any $\epsilon > 0$,
an $\epsilon$-BIC algorithm $\alg'$ can be
computed from any Bayesian algorithm $\alg$.  Its social welfare satisfies 
$\alg' \geq \alg - \epsilon\mumax$, and its running time is 
polynomial in $n$, $1/\epsilon$, and $\log(1/\mumax)$.
%$\softO(n^3\epsilon^{-2}\log_{1+\epsilon}(\val_{max}/\epsilon\mumax))$.
\end{theorem}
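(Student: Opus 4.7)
The plan is to mirror the proof of Theorem~\ref{thm.main.eps-bne}, changing only the discretization grid and exploiting a tighter reading of Lemma~\ref{lem.welfare}.

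The key observation is that Lemma~\ref{lem.welfare} actually proves more than its statement: if $\alg$ and $\alg'$ are $\delta$-close with equal expected cost, then the per-agent welfare loss is at most $\delta\expect{\vali} \le \delta\mumax$, so
\[\alg' \;\ge\; \alg - \delta\sum_i \expect{\vali} \;\ge\; \alg - n\delta\mumax.\]
Thus $\delta$-closeness already buys welfare loss proportional to $\mumax$; no new estimation machinery is needed, and the existing $\statalg{\cdot}$ construction can be used verbatim.

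The second modification is to replace the discretization of Definition~\ref{def:discalg} with a $\mumax$-scaled version using the interval collection
\[\intseti = \bigl\{[0,\eps\mumax)\bigr\} \cup \Bigl\{\bigl[\eps\mumax(1+\eps)^t,\,\eps\mumax(1+\eps)^{t+1}\bigr)\Bigr\}_{0 \le t \le \log_{1+\eps}(1/(\eps\mumax))}.\]
The number of pieces becomes $k = O(\tfrac{1}{\eps}\log(1/(\eps\mumax)))$, which is polynomial in $1/\eps$ and $\log(1/\mumax)$. Re-running the proof of Lemma~\ref{lem.useful.discrete} with this grid, the bottom-interval loss is at most $\eps\mumax$ per agent, and the multiplicative loss is at most $\eps\cdot\expect{\vali\alloci(\vali)} \le \eps\mumax$ per agent, yielding $\discalg{\alg} \ge \alg - 2n\eps\mumax$.

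With these two ingredients in place, the rest of the argument proceeds exactly as in the proof of Theorem~\ref{thm.main.eps-bne}. Set $\eps' = \eps/(6n)$ and define $\alg' = \statalg[\eps']{\discalg[\eps']{\alg}}$. Chaining the modified discretization bound, Lemma~\ref{lem.useful.sampling}, and Lemma~\ref{lem.statalg} (whose proof invokes the sharpened Lemma~\ref{lem.welfare} in place of the stated version) gives $\alg' \ge \alg - \eps\mumax$, and the $\eps$-BIC property follows from Lemma~\ref{lem.eps-BNE} since the resulting allocation rule is $\eps/(6n)$-close to a monotone one. The runtime is dominated by sampling: $O(nk/\eps'^{2}\cdot\log(nk/\eps')) = \mathrm{poly}(n,\,1/\eps,\,\log(1/\mumax))$.

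The main thing to verify is that no step silently reintroduces a dependence on $1/\mumax$ worse than logarithmic. The only candidate is the sample complexity, which is polynomial in $k$ and $\log k$; since our modified $k$ is polylogarithmic in $1/\mumax$, the overall runtime remains polynomial in $\log(1/\mumax)$ as required.
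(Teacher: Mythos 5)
Your proposal matches the paper's own proof essentially exactly: the paper also proves Theorem~\ref{thm.main.eps-bne.2} by rerunning the proof of Theorem~\ref{thm.main.eps-bne}, replacing every use of $\expect{\vali}\le 1$ with $\expect{\vali}\le\mumax$ (your sharpened reading of Lemma~\ref{lem.welfare}) and switching Definition~\ref{def:discalg} to the same $\mumax$-scaled grid $\left\{[0,\eps\mumax)\right\}\cup\left\{[\eps\mumax(1+\eps)^t,\eps\mumax(1+\eps)^{t+1})\right\}$, giving $k=O(\tfrac{1}{\eps}\log\tfrac{1}{\eps\mumax})$ and hence runtime polynomial in $n$, $1/\eps$, and $\log(1/\mumax)$. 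Your argument and accounting of where the $\mumax$ factors enter are correct and in line with the paper.
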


The proof of Theorem \ref{thm.main.eps-bne.2} follows the proof of Theorem \ref{thm.main.eps-bne} from Section \ref{sec:black-box-bne} almost exactly.
Indeed, the only changes required are to replace instances of the inequality $\expect{\vali} \leq 1$ with $\expect{\vali} \leq \mumax$ throughout, and to alter the definition of the discretization of an algorithm $\alg$, Definition \ref{def:discalg}, so that discretization occurs on the intervals
\[ \intseti = \left\{ \left[0,\eps\mumax\right) \right\} \cup  \left\{\left[\eps\mumax(1+\epsilon)^t,\eps\mumax(1+\epsilon)^{t+1}\right)\right\}_{0 \leq t \leq \log_{1+\eps}(1/\eps\mumax)}.\]
We omit further details of the proof of Theorem \ref{thm.main.eps-bne.2}.

We now turn to proving Theorem \ref{thm.main.bic} from Theorem \ref{thm.main.eps-bne.2}.  Our approach will be the same as the proof of Theorem \ref{thm.main.bic.2} from Theorem \ref{thm.main.eps-bne} in Section \ref{sec:black-box-bic}: we consider a convex combination of the almost-monotone algorithm from Theorem \ref{thm.main.eps-bne.2} with the blatantly monotone stair algorithm.  Recall that in Section \ref{sec:black-box-bic} some care was necessary when finding sets $S_1, \dotsc, S_n$.  This task becomes much more difficult when we wish to keep our error bounded by $\eps\mumax$ (rather than $\eps$).  We describe our approach in the next two subsections: we first give an algorithm for general cost settings, then present an optimization for the special case of feasibility settings.

\subsection{Implementing the Stair Algorithm in General Cost Settings}

In general cost settings, algorithm $\stair{\alg}$ may incur negative
value if, for some $i$, the cost of set $S_i$ is large relative to
$\vali$.  To bound the expected welfare of $\combalg{\alg}$ we must therefore limit the
costs of sets $S_1, \dotsc, S_n$.  Our upper bound on cost will depend
on the following quantity, which relates to the structure of the
piecewise constant intervals for algorithm $\alg$.

\begin{definition} 
\label{def:stairthreshold}
Suppose $\alg$ has piece-wise constant allocation rules, where $\cali_i = \{I_1, I_2, \dotsc \}$ are the constant intervals for agent $i$.  The \emph{stair threshold for agent $i$}, $w_i^\alg$, is defined as $w_i^\alg := \max I_1$ if $|\cali_i|>1$; otherwise $w_i^\alg := \infty$.  That is, $w_i^\alg$ is the upper endpoint of the first valuation space interval for agent $i$, assuming the presence of multiple intervals.
\end{definition}
We can now relate the value of $\combalg{\alg}$ to the cost of sets $S_1, \dotsc, S_n$ and the stair thresholds of algorithm $\alg$.
\begin{lem}
\label{lem.stair.cost}
If there exists $X \geq 0$ such that $\cost(S_i) \leq w_i^\alg + X$ for all $i$, then $\combalg{\alg} \geq \alg - \stairfrac (n\mumax + X)$.
\end{lem}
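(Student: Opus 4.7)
The plan is to exploit the convex-combination definition $\combalg{\alg} = (1-\stairfrac)\alg + \stairfrac\,\stair{\alg}$. Rewriting as $\combalg{\alg} = \alg - \stairfrac(\alg - \stair{\alg})$, the desired bound is equivalent to $\alg - \stair{\alg} \leq n\mumax + X$, which I will split into the two sub-claims (i)~$\alg \leq n\mumax$ and (ii)~$\stair{\alg} \geq -X$; adding them establishes the lemma.

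Sub-claim (i) is a standard WLOG reduction: I may assume $\alg$ never returns a welfare-negative allocation, since replacing any such output with $\emptyset$ only weakly increases welfare. Then $\expect{\cost(\alg(\vals))} \leq \expect{\sum_i v_i \alloci(\vals)} \leq \sum_i \mu_i \leq n\mumax$, so $\alg \leq n\mumax$. For sub-claim (ii), recall $\stair{\alg}$ picks an agent $i$ uniformly and then allocates $S_i$ with probability $q_i(v_i) = (j-1)/(k-1)$ (where $j$ indexes the piece of $v_i$) or $T_i$ otherwise, and I decompose $\stair{\alg}$ per agent. \emph{$S_i$-branch:} since $q_i(v_i)>0$ forces $v_i > w_i^\alg$, one has $\expect[v_i]{q_i(v_i) v_i} \geq p_i w_i^\alg$ where $p_i := \expect[v_i]{q_i(v_i)}$; combined with $i \in S_i$, the non-negativity of the other $\mu_j$'s, and the hypothesis $\cost(S_i) \leq w_i^\alg + X$, the branch contributes at least $-p_i X$. \emph{$T_i$-branch:} the set $T_i$ was recorded in the trimming phase as $\alg(\vals_T)$ for a sample $\vals_T$ with $i \notin \alg(\vals_T)$; since $\alg$'s welfare is non-negative on any input, $\cost(T_i) \leq \sum_{j \in T_i}(v_T)_j$, which together with the independence of the fresh input $\vals$ from $\vals_T$ will let me show that the expected $T_i$-branch contribution is non-negative. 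Summing across $i$ and dividing by $n$ yields $\stair{\alg} \geq -X$.

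The main obstacle is the $T_i$-branch analysis: the bound $\cost(T_i) \leq \sum_{j \in T_i}(v_T)_j$ is stated in terms of sample values $(v_T)_j$, while the expected welfare of allocating $T_i$ on a fresh input compares $\cost(T_i)$ to the means $\mu_j = \expect{(v_T)_j}$. Bridging the two demands careful handling of the correlation between $T_i$ and $\vals_T$; the cleanest route is to exploit the symmetry that $\vals$ and $\vals_T$ are i.i.d.\ draws from $\dists$, coupling them so as to transfer the non-negative-welfare guarantee from $\vals_T$ to $\vals$ in expectation. Once this step is discharged, sub-claims (i) and (ii) combine to give $\combalg{\alg} \geq (1-\stairfrac)\alg - \stairfrac X \geq \alg - \stairfrac(n\mumax + X)$, as required.
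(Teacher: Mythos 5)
Your overall skeleton matches the paper's: write $\combalg{\alg} = (1-\stairfrac)\alg + \stairfrac\,\stair{\alg}$, use $\alg \leq n\mumax$, and lower-bound $\stair{\alg}$ by $-X$ via the fact that $S_i$ is only allocated when $\vali \geq w_i^\alg$, so that branch has welfare at least $w_i^\alg - \cost(S_i) \geq -X$. The gap is in your $T_i$-branch, which you yourself flag as the main obstacle and do not discharge --- and it cannot be discharged as proposed. In the general-valuations construction this lemma belongs to (Definition~\ref{def:stairalg2}), no sets $T_i$ are sampled at all: the stair algorithm's alternative outcome is simply the empty allocation, so the non-$S_i$ branch contributes welfare $0$ and one gets $\stair{\alg} \geq \frac{1}{n}\sum_i \min\{w_i^\alg - \cost(S_i),\,0\} \geq -X$ in one line. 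That is the paper's proof, and it is why the statement can afford the bound $\stairfrac(n\mumax + X)$ with everything scaled by $\mumax$.

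If instead you insist on $T_i$'s harvested from sample runs of $\alg$ (the Section~\ref{sec:black-box-bic} variant), your coupling idea fails for a concrete reason: the certificate $\cost(T_i) \leq \sum_{j \in T_i}(v_T)_j$ holds only at the particular sample $\vals_T$ on which $\alg$ chose $T_i$, and $T_i$ is selected \emph{because} those sampled values were high --- an adverse correlation. On a fresh draw the expected welfare of allocating $T_i$ is $\sum_{j\in T_i}\mu_j - \cost(T_i)$, and $\cost(T_i)$ can exceed $\sum_{j\in T_i}\mu_j$ by essentially $n$ (e.g., an agent with value $1$ with probability $\mumax$ and $0$ otherwise: a set containing her with cost $1$ has non-negative welfare on the sample that produced it, but expected welfare $\mumax - 1$ on a fresh input). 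So the best you could salvage is $\stair{\alg} \geq -X - n$, i.e., an extra $\stairfrac\, n$ loss; this is exactly the term the appendix is designed to avoid, since in the regime where $\mumax$ is tiny an additive $\stairfrac\, n$ swamps $\stairfrac\, n\mumax$ and the lemma's conclusion (and the downstream multiplicative guarantee) would be lost. A secondary nit: in sub-claim (i) your displayed chain bounds $\expect{\cost(\allocs(\vals))}$, not the welfare; the intended (and correct, given non-negative costs) argument is simply that welfare is at most $\sum_i \vali$, whose expectation is at most $n\mumax$.
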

\begin{proof}
By construction, $\combalg{\alg} = (1-\stairfrac)\alg +
\stairfrac\stair{\alg}$. Recall that $\stair{\alg}$ chooses some $i$
uniformly at random, and then either allocates $S_i$ or $\emptyset$.
Moreover, $\stair{\alg}$ will always allocate $\emptyset$ if $\vali$
is in the first piece of the valuation space; that is, if $\vali <
w_i^\alg$.  We therefore conclude that $\stair{\alg} \geq
\frac{1}{n}\sum_i \min\{w_i^\alg - \cost(S_i),0\} \geq -X$.  Also,
$\alg \leq n\mumax$ trivially.  Thus $\combalg{\alg} \geq \alg - \stairfrac
n\mumax + \stairfrac(-X)) = \alg - \stairfrac (n\mumax + X)$.
\end{proof}

Our goal will be to find sets $S_i$ with $\cost(S_i) \leq w_i^\alg +
n\mumax/\sqrt{\eps}$, then apply Lemma \ref{lem.stair.cost} with $X =
n\mumax/\sqrt{\eps}$.  To find such sets, we will apply the same
sampling techniques used in the construction of $\statalg{\alg}$.
That is, for each $i$ and each piece of the valuation space, we will
run $\alg$ on many sample inputs.  As long as $\alloci(I)$ is not too
small on a given interval $I$, we are very likely to find some valid
allocation that includes agent $i$ during the sampling process!
Moreover, we will show that not all sets discovered in this way can
have high cost, so with high probability we will find a set $S_i$ with
cost at most $n\mumax/\sqrt{\eps}$.  To relate the cost of set $S_i$
with $w_i^\alg$, we will also \emph{iron together all left-most
  intervals for which a low-cost set was not found}.  These
ironed-together intervals will then act like a single piece of
valuation space, which will allow us to relate the cost of any set
$S_i$ we \emph{do} find to the stair threshold for the (modified)
algorithm.

\begin{definition}[$\trimalg{\alg}$]
\label{def:stairalg2}
Given piece-wise constant algorithm $\alg$, the \emph{stair-compatible algorithm for $\alg$}, $\trimalg{\alg}$, is as follows:
\begin{enumerate}
\item For each agent $i$: 
\item \quad Let $\intseti = \{I_1, \dotsc, I_k\}$ be the constant valuation space intervals for agent $i$.
\item \quad For each $I_j \in \intseti$, draw $4\eps^{-2}\log(2n/\eps)$ samples from $\dists$ conditional on $\vali \in I_j$, and run $\alg$ on each of these samples.
\item \quad Let $j_i$ be the minimal index such that, for some sample of interval $I_{j_i}$, $\alg$ allocated a set $T_i$ with $T_i \ni i$ and $\cost(T_i) \leq \min I_{j_i} + n\mumax/\sqrt{\eps}$.  Choose $S_i$ to be any such $T_i$.
\item \quad If no such set $T_i$ was returned for \emph{any} interval, take $j_i = k+1$ and $S_i = \{i\}$.
\item \quad Define $\intseti' = \{ I_1 \cup \dotsc \cup I_{j_i-1}, I_{j_i}, \dotsc, I_k \}$.
\item Run $\resampalg[\intsets']{\alg}$.
\end{enumerate}
Note that, as part of the execution of $\trimalg{\alg}$, a set $S_i \ni i$ will be found for each $i$, which is taken to be any set satisfying the conditions on line 4 for interval $I_{j_i}$ (or $\{i\}$ if no sets were found).
\end{definition}

In summary, $\trimalg{\alg}$ samples each constant interval for agent $i$, searching for an appropriate set $S_i$.  We take $I_{j_i}$ to be the leftmost interval for which such a set $S_i$ was found.  All intervals to the left of $I_{j_i}$ are then ironed together.  Thus, regardless of the sampling outcome, $I_{j_i}$ will be the second valuation space piece for agent $i$ in algorithm $\ialg_{\intsets'}$.  Thus $\cost(S_i) \leq w_i^{\trimalg{\alg}} + n\mumax/\sqrt{\eps}$.

\begin{lem}
\label{lem.findsets-bound}
The stair compatible algorithm $\trimalg{\alg}$ for
$\alg$
(Definition~\ref{def:stairalg2}) and stair thresholds
${\mathbf w}^{\trimalg{\alg}}$ (Definition~\ref{def:stairthreshold}) satisfy $\cost(S_i) \leq w_i^{\trimalg{\alg}} +
n\mumax/\sqrt{\eps}$ for all $i$.
\end{lem}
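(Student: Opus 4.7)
The plan is a direct case analysis on the value of $j_i$ defined in Definition \ref{def:stairalg2}, exploiting the fact that the construction explicitly picks $S_i$ to certify the required cost bound whenever possible, and otherwise makes the leftmost interval of $\intseti'$ unbounded.

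First I would handle the generic case $1 < j_i \leq k$. Here, by line 4 of Definition \ref{def:stairalg2}, the set $S_i$ is some allocation returned by $\alg$ during the sampling of $I_{j_i}$ which contains $i$ and has $\cost(S_i) \leq \min I_{j_i} + n\mumax/\sqrt{\eps}$. The intervals of $\trimalg{\alg}$ for agent $i$ are $\intseti' = \{I_1 \cup \dotsb \cup I_{j_i-1}, I_{j_i}, \dotsc, I_k\}$. Since the $I_j$ are the pieces of a piecewise-constant partition (hence consecutive and abutting), the upper endpoint of the first interval $I_1 \cup \dotsb \cup I_{j_i-1}$ equals $\min I_{j_i}$. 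Because $|\intseti'| > 1$, Definition \ref{def:stairthreshold} gives $w_i^{\trimalg{\alg}} = \min I_{j_i}$, so the bound $\cost(S_i) \leq w_i^{\trimalg{\alg}} + n\mumax/\sqrt{\eps}$ follows immediately.

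Next I would dispatch the two boundary cases. If $j_i = 1$, then $\intseti' = \intseti$ effectively (the empty union is discarded), the first piece is $I_1$ itself, and the cost bound from the construction reads $\cost(S_i) \leq \min I_1 + n\mumax/\sqrt{\eps} = n\mumax/\sqrt{\eps} \leq w_i^{\trimalg{\alg}} + n\mumax/\sqrt{\eps}$, using $\min I_1 = 0 \leq \max I_1 = w_i^{\trimalg{\alg}}$ (the latter being finite only if $|\intseti| > 1$, and otherwise infinite and trivializing the claim). If $j_i = k+1$, then by line 5 all original intervals were merged, so $|\intseti'| = 1$ and Definition \ref{def:stairthreshold} sets $w_i^{\trimalg{\alg}} = \infty$, making the bound vacuous regardless of what $S_i = \{i\}$ happens to cost.

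I don't anticipate any real obstacle here: the statement is essentially a bookkeeping consequence of how $j_i$ and $\intseti'$ are defined so as to make $\min I_{j_i}$ coincide with the new stair threshold. The only point requiring mild care is the boundary case $j_i = 1$, where the empty union in the definition of $\intseti'$ must be interpreted correctly so that $w_i^{\trimalg{\alg}}$ is still well-defined; I would state this explicitly at the outset of the proof to avoid ambiguity.
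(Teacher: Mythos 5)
Your proof is correct and takes essentially the same route as the paper's: a case split on $j_i$, with the case $j_i=k+1$ handled by $w_i^{\trimalg{\alg}}=\infty$ and the remaining case handled by combining the cost condition from line 4 with the fact that merging all intervals left of $I_{j_i}$ makes $\min I_{j_i}$ coincide with (or, when $j_i=1$, lie below) the new stair threshold. Your explicit treatment of the boundary case $j_i=1$ is a small point of extra care that the paper's proof glosses over, but it does not change the argument.
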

\begin{proof}
For each $i$, if no
set satisfying the conditions on line 4 of $\trimalg{\alg}$ was found
during the sampling of any interval, then $j_i = k+1$ and all
intervals of $\alg$ are ironed together in $\intsets'$.  In this case
$w_i^{\trimalg{\alg}} = \infty$, so $\cost(S_i) \leq
w_i^{\trimalg{\alg}}$ trivially.  Otherwise, by line 4 of
$\trimalg{\alg}$, $\cost(S_i) \leq \min I_{j_i} +
n\mumax/\sqrt{\eps}$.  However, since all intervals to the left of
$I_{j_i}$ are ironed together in $\intsets'$, $I_{j_i}$ will be the
second piecewise constant interval of $\trimalg{\alg}$, and hence
$\min I_{j_i} = w_i^{\trimalg{\alg}}$.  Thus $\cost(S_i) \leq \min
w_i^{\trimalg{\alg}} + n\mumax/\sqrt{\eps}$ as required.
\end{proof}

\begin{lem}
\label{lem.findsets.2}
$\trimalg{\alg} \geq \alg - 2n\mumax\sqrt{\eps}$.
\end{lem}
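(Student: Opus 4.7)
The plan is to adapt the argument of Lemma~\ref{lem.findsets}, with two new ingredients: a sampling threshold calibrated to the cost bound $n\mumax/\sqrt{\eps}$ from Definition~\ref{def:stairalg2}, and a Markov-type estimate that leverages non-negativity of $\alg$'s welfare in order to control how often $\alg$ returns an allocation whose cost exceeds this bound. As in Lemma~\ref{lem.findsets}, the resampling preserves the marginal distribution of $\alg$'s input, so $\alg$ and $\trimalg{\alg}$ incur the same expected cost and the welfare loss equals $\sum_i E_{v_i}[v_i(\alloci(v_i)-\bar{\alloci}(v_i))]$, where $\bar{\alloci}$ is the allocation rule of $\trimalg{\alg}$ for agent $i$. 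Since $\bar{\alloci}(v_i)=\alloci(v_i)$ for $v_i \geq \min I_{j_i}$ and $\bar{\alloci} \geq 0$ everywhere, each agent's contribution is bounded by $E_{v_i}[v_i \alloci(v_i)\,\mathbf{1}\{v_i < \min I_{j_i}\}]$.

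I would first establish a high-probability sampling event. For each agent $i$ and piece $I_l$, let $\alloci^{low}(I_l)$ denote the probability, conditional on $v_i \in I_l$, that $\alg$ returns some $S \ni i$ satisfying the cost condition of step 4, i.e., $\cost(S) \leq \min I_l + n\mumax/\sqrt{\eps}$. Let $E$ be the event that, for every pair $(i, I_l)$ with $\alloci^{low}(I_l) \geq \sqrt{\eps}/2$, the sampling in step 3 of $\trimalg{\alg}$ produces a witness set for step 4. A Hoeffding--Chernoff estimate on the $4\eps^{-2}\log(2n/\eps)$ samples together with a union bound over agents and intervals gives $\Pr[E] \geq 1-\eps/2$.

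Conditional on $E$, I would split $\alloci = \alloci^{low} + \alloci^{high}$ according to whether the set returned by $\alg$ satisfies the step 4 cost condition. On any ironed interval $I_l$ (with $l < j_i$) sampling failed, so on $E$ we must have $\alloci^{low}(I_l) < \sqrt{\eps}/2$; this bounds the low-cost contribution of agent $i$ by $(\sqrt{\eps}/2)\mumax$. For the high-cost part, non-negativity of welfare forces any returned $S \ni i$ with $\cost(S) > n\mumax/\sqrt{\eps}$ to satisfy $\sum_{j \neq i} v_j > n\mumax/\sqrt{\eps} - 1$; Markov's inequality then yields $\alloci^{high}(v_i) \leq 2\sqrt{\eps}$ in the main regime $n\mumax \geq 2\sqrt{\eps}$, so the high-cost contribution per agent is at most $2\sqrt{\eps}\mumax$. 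Summing over agents gives a loss of $O(n\sqrt{\eps}\mumax)$ under $E$; the complement contributes at most $\Pr[E^c]\cdot n\mumax \leq n\mumax\eps/2$ via $\trimalg{\alg} \geq 0$, and absorbing constants yields the claimed $2n\mumax\sqrt{\eps}$ bound.

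The principal obstacle is the high-cost estimate: it must combine the non-negative-welfare hypothesis with a Markov bound tuned to the specific threshold $n\mumax/\sqrt{\eps}$ chosen in Definition~\ref{def:stairalg2}. A minor subtlety is the degenerate regime $n\mumax < 2\sqrt{\eps}$, where the Markov step is weak; in that regime, however, $\alg \leq n\mumax$ is itself comparable to the desired error term, and the inequality either holds trivially (whenever $\alg \leq 2n\mumax\sqrt{\eps}$) or reduces to a direct case analysis using $\trimalg{\alg} \geq 0$ together with the crude per-agent bound $E_{v_i}[v_i\alloci^{high}(v_i)] \leq \mumax$.
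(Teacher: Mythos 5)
Your overall strategy is the same as the paper's proof of Lemma~\ref{lem.findsets.2}: a Chernoff-bounded sampling event, Markov's inequality on the other agents' total value combined with the no-negative-welfare assumption to show that allocations containing $i$ with cost above the threshold $\min I + n\mumax/\sqrt{\eps}$ are rare, a union bound, and a split between the good sampling event and its complement. But as written the argument does not deliver the stated bound. Your per-agent budget under $E$ is $\sqrt{\eps}/2$ (low-cost part) plus $2\sqrt{\eps}$ (high-cost part), i.e.\ $\tfrac{5}{2}\sqrt{\eps}\,\mumax$ per agent, and the failure event adds $\tfrac{\eps}{2}n\mumax$; this totals roughly $3n\mumax\sqrt{\eps}$, and since $2n\mumax\sqrt{\eps}$ \emph{is} the statement being proved, there is no downstream slack in which to ``absorb constants'' --- you must retune. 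The paper avoids the overshoot by not splitting $\alloci$ into low- and high-cost pieces: it argues that on any interval $I$ where no witness was found (with high probability) one must have $\alloci(I) \leq \sqrt{\eps} + \eps/2$, because the per-sample probability of the certifying event ``$\alg$ allocates to $i$ and $\sum_j \val_j \leq \min I + n\mumax/\sqrt{\eps}$'' (which by non-negative welfare yields a low-cost witness) is at least $\alloci(I) - \sqrt{\eps}$; the loss is then $(\sqrt{\eps}+\eps/2)n\mumax + (\eps/2)n\mumax \leq 2n\mumax\sqrt{\eps}$.

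The second, more substantive issue is the degenerate regime you flag. Because you bound $\vali \leq 1$, your Markov estimate needs $n\mumax/\sqrt{\eps} - 1$ to be comfortably positive, and your proposed patch for $n\mumax < 2\sqrt{\eps}$ does not close the case $\alg \in (2n\mumax\sqrt{\eps},\, n\mumax]$: the assertion $\trimalg{\alg} \geq 0$ is itself not obvious in general cost settings (welfare is measured at the true values while the allocation is computed on resampled ones), and even granting it, the crude bound $\expect{\vali\alloci^{high}(\vali)} \leq \mumax$ only yields a loss of order $n\mumax$, far exceeding $2n\mumax\sqrt{\eps}$. The paper never meets this regime because its bad event is ``$\sum_j \val_j > \min I + n\mumax/\sqrt{\eps}$'' --- the offset $\min I$, which already appears in the low-cost condition $\cost(T) \leq \min I + n\mumax/\sqrt{\eps}$, absorbs the value of agent $i$ (who is confined to $I$), so Markov is applied to $\sum_{j\neq i}\val_j$ against the threshold $n\mumax/\sqrt{\eps}$ itself, with no ``$-1$'' and no case split. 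Rewriting your high-cost estimate in that form (and tightening the low-cost threshold as above) repairs both issues; as it stands, though, the proposal proves only a weaker constant and leaves the small-$n\mumax$ case open.
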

\begin{proof}
We claim that, with probability at least $1-\frac{\eps}{2}$, for each
agent $i$, the allocation rules for $\trimalg{\alg}$ and $\alg$ will differ only on values $\vali$ for
which $\alloci(\vali) \leq \sqrt{\eps}+\frac{\eps}{2}$.  Before
proving the claim, let us see how it implies the desired result.  The
claim implies that $\trimalg{\alg} \geq \alg -
(\sqrt{\eps}+\frac{\eps}{2})n\mumax$ with probability $1 -
\frac{\eps}{2}$.  For the remaining probability, we note that
$\trimalg{\alg} \geq \alg - \alg \geq \alg - n\mumax$ trivially.
Thus, over all possible outcomes of sampling, we conclude that
\[ \trimalg{\alg} \geq \alg - \left(\sqrt{\eps}+\frac{\eps}{2}\right)n\mumax - \frac{\eps}{2}n\mumax \geq \alg - 2n\mumax\sqrt{\eps} \]
as required.

Let us now prove the claim.  Choose some agent $i$ and suppose that
$\trimalg{\alg}$ and $\alg$ differ on some interval $I$ with
$\alloci(I) \geq \sqrt{\eps}+\frac{\eps}{2}$.  Let $I$ be the leftmost
such interval. For the remainder of the proof we will say that a set
$T$ has \emph{low cost for $I$} if $\cost(T) \leq \min I +
n\mumax/\sqrt{\eps}$.  Then, by the definition of $\intseti'$, it must be
that no set $T \ni i$ with low cost was found during the sampling of
interval $I$ for agent $i$.  Let us bound the probability of this
event.  
%Consider some agent $i$ and some interval $I \in \cali$.  For
%the remainder of the proof we will say that a set $T$ has \emph{low
%  cost for $I$} if $\cost(T) \leq \min I + n\mumax/\sqrt{\eps}$.
Given $\vals \sim \dists$, let $B(\vals)$ be the event
$[\alloci(\vals) \wedge \sum_i \vali \leq \min I +
  n\mumax/\sqrt{\eps}]$.  If event $B(\vals)$ occurs for some sample $\vals$,
this means that $\alg$ returned some allocation $T \ni i$ and
furthermore $\sum_i \vali \leq \min I + n\mumax/\sqrt{\eps}$.  But
note that this allocation must generate non-negative profit (otherwise
it would never be allocated), and hence $T$ must have low cost for
$I$.  Thus $B(\vals)$ is precisely the event that $\alg$ returns a set
$T \ni i$ with low cost for $I$.

Consider the probability of $B(\vals)$.  By Markov's inequality,
$\prob[\vals]{\sum_{j \neq i} \val_j > n\mumax/\sqrt{\eps}} <
\sqrt{\eps}$.  Thus, since $\vali \geq \min I$ with probability $1$
conditional on $\vali \in I$, $\prob[\vals]{\sum_i \vali > \min I +
  n\mumax/\sqrt{\eps}} < \sqrt{\eps}$.  Also,
$\prob[\vals]{\neg\alloci(\vals)\ |\ \vali \in I} = 1-\alloci(I) \leq
1 - (\sqrt{\eps} + \frac{\eps}{2})$.  The union bound then implies
that $\prob[\vals]{\neg B(\vals)} \leq
1-(\sqrt{\eps}+\frac{\eps}{2})+\sqrt{\eps} = 1 - \frac{\eps}{2}$, so
$\prob[\vals]{B(\vals)} \geq \frac{\eps}{2}$.

By Chernoff-Hoeffding inequality, the probability that event $B$ does
not occur even once during $\eps^{-2}\log(2n/\eps)$ samples is at most
$\frac{\eps}{2n}$.  We conclude that the probability that no set $T
\ni i$ with low cost was found during the sampling of interval $I$ is
at most $\frac{\eps}{2n}$.  This is therefore a bound on the
probability that $\trimalg{\alg}$ and $\alg$ differ for agent $i$
on some interval $I$ with $\alloci(I) \geq
\sqrt{\eps}+\frac{\eps}{2}$.  By the union bound, the probability that
this occurs for \emph{any} agent is at most $\frac{\eps}{2}$, as
required.
%
%We now wish to argue that, with high probability, the allocation rules for $\alg$ and $\trimalg{\eps}{\alg}$ differ for agent $i$ only on intervals $I$ with $\alloci(I) < \sqrt{\eps}+\frac{\eps}{2}$.  For each $i$, if $\alloci(I) < \sqrt{\eps}+\frac{\eps}{2}$ for all intervals $I$, then this is trivially true.  Otherwise, let $I_{j_i}$ denote the leftmost interval for which $\alloci(I) \geq \sqrt{\eps}+\frac{\eps}{2}$.  By the union bound, with probability $1 - \eps/2$, for each $i$, a set $S_i \ni i$ with low cost will be found when sampling interval $I_{j_i}$.  In this case, the behaviour of algorithms $\trimalg{\eps}{\alg}$ and $\alg$ differ only on intervals to the left of $I_{j_i}$, all of which satisfy $\alloci(I) < \sqrt{\eps}+\frac{\eps}{2}$.  Thus, conditioning on an event of probability $1 - \frac{\eps}{2}$, \[\trimalg{\eps}{\alg} \geq \left(1-\left(\sqrt{\eps}+\frac{\eps}{2}\right)\right)\alg \geq \alg - \left(\sqrt{\eps}+\frac{\eps}{2}\right)n\mumax.\]  
%For the remaining probability, $\frac{\eps}{2}$, we note that $\alg \leq n\mumax$ trivially.  We conclude that 
%\[\trimalg{\eps}{\alg} \geq \alg - \frac{\eps}{2}n\mumax - \left(\sqrt{\eps}+\frac{\eps}{2}\right)n\mumax \geq \alg - 2\sqrt{\eps} n\mumax\] 
%as required.  
\end{proof}

We are now ready to describe the algorithm used to prove Theorem \ref{thm.main.bic}.

\begin{definition}[$\corralg{\alg}$] Given an algorithm $\alg$ and any $\eps > 0$, the \emph{monotonization of $\alg$}, $\corralg{\alg}$, is $\combalg{\statalg{\trimalg{\discalg{\alg}}}}$.
%\begin{enumerate}
%\item Construct $\discalg{\alg}$, the discretized version of $\alg$.
%\item Construct $\trimalg{\alg}$, the stair-compatible version of $\discalg{\alg}$.  This generates sets $S_1, \dotsc, S_n$.
%\item Construct $\statalg{\alg}$, the statistically ironed algorithm for $\trimalg{\alg}$.
%\item With probability $\stairfrac = 2k(n-1)\eps$, execute $\stair{\trimalg{\alg}}$ with sets $S_1, \dotsc, S_n$.  Else, execute $\statalg{\alg}$.
%\end{enumerate}
\end{definition}

\begin{lem}
\label{lem.corralg.2}
$\corralg{\alg}$ is BIC, and $\corralg{\alg} \geq \alg - 9kn^2\sqrt{\eps}\mumax$.
\end{lem}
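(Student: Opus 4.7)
The plan is to follow exactly the same template as the proof of Lemma~\ref{lem.corralg}, simply tracking the $\mumax$-scaled error bounds and using Lemma~\ref{lem.stair.cost} together with Lemma~\ref{lem.findsets-bound} to handle the general-cost stair algorithm. Write $\alg' = \trimalg{\discalg{\alg}}$ for convenience. The BIC part is immediate: by Theorem~\ref{thm.main.eps-bne.2} (the $\mumax$-scaled analogue of Lemma~\ref{lem.statalg}), $\statalg{\alg'}$ is $\eps$-close to a monotone algorithm, and Lemma~\ref{lem.findsets-bound} guarantees that the sets $S_1,\ldots,S_n$ needed by $\stair{\cdot}$ are well-defined and satisfy the cost bound that will be needed in the welfare analysis. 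Hence Lemma~\ref{lem.step} applies to give that $\combalg{\statalg{\alg'}}$ is BIC.

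For the welfare bound, the strategy is to peel off the error one operation at a time. First, by the $\mumax$-scaled discretization lemma, $\discalg{\alg} \geq \alg - 2n\eps\mumax$. Next, by Lemma~\ref{lem.findsets.2}, $\trimalg{\discalg{\alg}} \geq \discalg{\alg} - 2n\sqrt{\eps}\mumax$. Finally, by the $\mumax$-scaled version of Lemma~\ref{lem.statalg} (which holds because the proof of Theorem~\ref{thm.main.eps-bne.2} just replaces $1$ with $\mumax$ throughout), $\statalg{\alg'} \geq \alg' - n\eps\mumax$. Concatenating these three bounds gives
\[
\statalg{\alg'} \;\geq\; \alg - 3n\eps\mumax - 2n\sqrt{\eps}\mumax.
\]

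Now comes the key step that differs from Lemma~\ref{lem.corralg}: bounding the damage caused by mixing in $\stair{\statalg{\alg'}}$. Lemma~\ref{lem.findsets-bound} guarantees $\cost(S_i) \leq w_i^{\alg'} + n\mumax/\sqrt{\eps}$ for each $i$, which is exactly the hypothesis of Lemma~\ref{lem.stair.cost} with $X = n\mumax/\sqrt{\eps}$. (Note that ironing within $\statalg{\cdot}$ can only move endpoints leftward or leave them untouched, so the stair thresholds of $\statalg{\alg'}$ are at least those of $\alg'$, and the cost bound is preserved.) Applying Lemma~\ref{lem.stair.cost} therefore yields
\[
\combalg{\statalg{\alg'}} \;\geq\; \statalg{\alg'} - \stairfrac\bigl(n\mumax + n\mumax/\sqrt{\eps}\bigr),
\]
and since $\stairfrac = 2(k-1)n\eps$, the subtracted term is at most $4(k-1)n^2\sqrt{\eps}\mumax$ for $\eps \leq 1$. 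Combining with the earlier chain and absorbing the (smaller) $n\eps\mumax$ and $n\sqrt{\eps}\mumax$ terms into the dominant $O(kn^2\sqrt{\eps}\mumax)$ contribution produces $\corralg{\alg} \geq \alg - 9kn^2\sqrt{\eps}\mumax$, as claimed. Since expected costs are preserved by resampling, the only subtlety is the step involving the stair algorithm's cost, which is precisely why the work of Section 5.3.2 culminating in Lemma~\ref{lem.findsets-bound} is set up; that interaction is the main potential obstacle and is handled by choosing $X = n\mumax/\sqrt{\eps}$ so that $\stairfrac \cdot X$ scales as $\sqrt{\eps}$ rather than $1/\sqrt{\eps}$.
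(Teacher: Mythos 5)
Your proof is correct and follows essentially the same route as the paper's: the same decomposition into the $\mumax$-scaled discretization, trimming (Lemma~\ref{lem.findsets.2}), and statistical-ironing bounds, followed by Lemma~\ref{lem.findsets-bound} and Lemma~\ref{lem.stair.cost} with $X = n\mumax/\sqrt{\eps}$, and the same arithmetic absorbing all terms into $9kn^2\sqrt{\eps}\mumax$. (Only a wording slip: ironing merges intervals, so the first interval's upper endpoint moves \emph{rightward} or stays put, which is exactly why $w_i^{\statalg{\alg'}} \geq w_i^{\alg'}$ as you and the paper both use.)
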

\begin{proof}
%Consider the algorithm $\trimalg_{\eps}$ generated during $\corralg{\eps}{\alg}$; recall that the nature of this algorithm is randomized.  With probability $1$, 
For notational convenience define $\alg' = \trimalg{\discalg{\alg}}$.
Lemma \ref{lem.statalg} implies that $\statalg{\alg'}$ is $\eps$-close to a monotone algorithm, and during the construction of $\alg'$ we find sets $S_1, \dotsc, S_n$ with $S_i \ni i$.  Thus $\combalg{\statalg{\alg'}}$ is well-defined and, by Lemma \ref{lem.step}, is BIC.

We note that costs are not affected by our ironing techniques, and, by Lemma \ref{lem.findsets.2}, 
\begin{align*}
\statalg{\alg'} & \geq \alg' - n\eps\mumax = \trimalg{\discalg{\alg}} - n\eps\mumax \\
& \geq \discalg{\alg} - n\eps\mumax - 2n\mumax\sqrt{\eps} \geq \alg - 5\sqrt{\eps}n\mumax.
\end{align*} 
Also, $\cost(S_i) \leq w_i^{\alg'} + n\mumax/\sqrt{\eps} \leq w_i^{\statalg{\alg'}} + n\mumax/\sqrt{\eps} $ for all $i$ by Lemma \ref{lem.findsets-bound}.  Thus, by Lemma \ref{lem.stair.cost}, 
\begin{align*}
\corralg{\alg} & = \combalg{\statalg{\alg'}} \\
& \geq \statalg{\alg'} - \stairfrac (n\mumax + n\mumax/\sqrt{\eps}) \\
& \geq \alg - 5n\sqrt{\eps}\mumax - (2(k-1)n\eps)2n\mumax/\sqrt{\eps} \\
& \geq \alg - 9kn^2\sqrt{\eps}\mumax.
\end{align*}
%Taking $\eps' = (\eps/7kn^2)^2$, we arrive at the desired result.
\end{proof}

Theorem \ref{thm.main.bic} now follows immediately from Lemma \ref{lem.corralg.2} by considering algorithm $\corralg[\eps']{\alg}$, where $\eps' = (\eps/9kn^2)^2 = \eps^2/81k^2n^4$.  The runtime, which is dominated by sampling, is $O(kn(\eps')^{-2}) = \softO(n^9\eps^{-9}\log^5(\val_{max}/\eps\mumax))$.

 %proceeding separately for feasibility problems and general cost problems.

%\begin{proof}
%Since $\corralg$ behaves as $\alg'$ with probability $(1-\mu)$ and $\stair{\alg'}$ always returns a feasible solution, $\corralg \geq (1-\mu)\alg'$.

%To show $\corralg$ is BIC, choose any agent $i$ and any values $\vali < \vali'$; we will show $\corralloci(\vali) \leq \corralloci(\vali')$.  If $\vali, \vali'$ are in the same piece of the valuation space then $\corralloci(\vali) = \corralloci(\vali')$.  Otherwise, since $\alg'$ is $\epsilon'$-close to monotone $\alg$, it must be that $\alloci'(\vali) \leq \alloci'(\vali')-2\epsilon'$.  Furthermore, if $s$ is the allocation rule for 
%$\stair{\alg'}$, then $s_i(\vali) \leq s_i(\vali')+1/(k-1)n$.  We conclude that 
%\begin{equation*}
%\begin{split}
%\corralloci(\vali) & = (1-\mu)\alloci'(\vali) + \mu s_i(\vali) \\
%& \leq \corralloci(\vali') - 2\epsilon' + \mu/(k-1)n \\
%& \leq \corralloci(\vali')
%\end{split}
%\end{equation*}
%as required, since $\mu \geq 2(k-1)n\epsilon'$.
%\end{proof}

\subsection{Feasibility Settings}

In general feasibility settings, where costs are either $0$ or infinite, the performance of algorithm $\corralg{\alg}$ improves significantly.  Specifically, we can improve Lemma \ref{lem.findsets.2} as follows:

\begin{lem}
\label{lem.findsets.3}
In feasibility settings, $\trimalg{\alg} \geq \alg - \eps n\mumax$.
\end{lem}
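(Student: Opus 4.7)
The plan is to rerun the argument of Lemma~\ref{lem.findsets.2} but exploit the fact that in a feasibility setting every allocation returned by $\alg$ has cost $0$ (infeasible allocations have infinite cost and are never returned, since $\alg$ would do strictly worse than outputting the empty allocation). This means the ``low-cost set'' requirement in line 4 of $\trimalg{\alg}$ is automatically satisfied by any set $T \ni i$ that ever appears as an output of $\alg$, so the Markov-inequality step in Lemma~\ref{lem.findsets.2} that cost a factor of $\sqrt{\eps}$ can be dropped entirely.

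Concretely, I would establish the following claim: with probability at least $1 - \eps/2$ over the sampling in $\trimalg{\alg}$, for every agent $i$ the allocation rules of $\trimalg{\alg}$ and $\alg$ agree except possibly on values $\vali$ with $\alloci(\vali) \leq \eps/2$. Granting the claim, the welfare deficit in the good event is at most $(\eps/2)\sum_i \expect{\vali} \leq (\eps/2)n\mumax$, while the bad event of probability $\eps/2$ contributes at most $n\mumax$ trivially, giving $\trimalg{\alg} \geq \alg - \eps n\mumax$ just as in Lemma~\ref{lem.findsets.2}.

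To prove the claim, fix $i$ and let $I$ be the leftmost interval on which $\trimalg{\alg}$ and $\alg$ differ, and suppose for contradiction that $\alloci(I) \geq \eps/2$. By construction of $\intseti'$, no allocation $T \ni i$ satisfying the cost bound on line~4 was observed during the $4\eps^{-2}\log(2n/\eps)$ samples of $I$. In the feasibility setting the cost bound is vacuous, so this is exactly the event that $\alg$ never allocated to agent $i$ on any sample drawn from $\dists$ conditional on $\vali \in I$. Each sample independently allocates to $i$ with probability at least $\eps/2$, so this failure probability is at most $(1-\eps/2)^{4\eps^{-2}\log(2n/\eps)} \leq e^{-2\eps^{-1}\log(2n/\eps)} \leq \eps/(2n)$. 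Union bounding over the $n$ agents gives total failure probability at most $\eps/2$, establishing the claim.

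The only genuinely new ingredient compared to Lemma~\ref{lem.findsets.2} is the observation that feasibility eliminates the need to bound $\sum_i \vali$ via Markov's inequality; everything else is a direct simplification of the earlier proof, and no new tools are required.
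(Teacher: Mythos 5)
Your proof is correct in substance and follows essentially the same route as the paper: in a feasibility setting the cost condition in line 4 of $\trimalg{\alg}$ is vacuous for any allocation that $\alg$ actually returns, so a sample drawn conditional on $\vali \in I$ yields a qualifying set with probability exactly $\alloci(I) \geq \eps/2$, and your good-event/bad-event welfare accounting (deficit $\tfrac{\eps}{2}n\mumax$ with probability $1-\tfrac{\eps}{2}$, plus $n\mumax$ on the remaining $\tfrac{\eps}{2}$) is the paper's. The only step to tighten is the final probability bound: you apply the concentration estimate to ``the leftmost interval on which $\trimalg{\alg}$ and $\alg$ differ,'' which is a sampling-dependent (random) interval, and moreover the negation of your claim is that \emph{some} differing interval -- not necessarily the leftmost one -- has $\alloci(I) > \eps/2$; the paper sidesteps both issues by fixing, for each agent $i$, the \emph{deterministic} leftmost interval $I_{j^*}$ with $\alloci(I_{j^*}) \geq \eps/2$, noting that if its sampling succeeds (probability at least $1 - \tfrac{\eps}{2n}$) then $j_i \leq j^*$, so every trimmed interval lies strictly to the left of $I_{j^*}$ and hence has $\alloci(I) < \eps/2$, which gives your claim with only a union bound over the $n$ agents.
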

\begin{proof}
Consider some agent $i$ and interval $I \in \cali$, and suppose $\alloci(I) \geq \frac{\eps}{2}$.  Consider the probability of finding an allocation with cost at most $\min I + n\mumax/\sqrt{\eps}$ when sampling for this interval.  Since costs are either $0$ or $\infty$, this is precisely the probability of finding an allocation that includes agent $i$, which is $\alloci(I) \geq \frac{\eps}{2}$.  By Chernoff-Hoeffding inequality, the probability that this event does not occur even once in $\eps^{-2}\log(n/2\eps)$ samples is at most $\eps/2n$.  We will therefore successfully find a set $S_i \ni i$ with probability at least $1-\eps/2n$.

For each $i$, let $I_{j_i}$ denote the leftmost interval on which $\alloci(I) \geq \eps$.  By the union bound, with probability $1 - \eps/2$ we will find a set $S_i \ni i$ when sampling interval $I_{j_i}$, for all $i$.  In this case, the behavior of algorithms $\trimalg{\alg}$ and $\alg$ differ only on intervals $I$ to the left of $I_{j_i}$, all of which satisfy $\alloci(I) < \frac{\eps}{2}$.  Thus, conditioning on an event of probability $1 - \frac{\eps}{2}$, $\trimalg{\alg} \geq \alg(1-\frac{\eps}{2}) \geq \alg - n\mumax\eps$.  For the remaining probability, $\frac{\eps}{2}$, we note that $\alg \leq n\mumax$ trivially.  We conclude that $\alg_\eps \geq \alg - \eps n\mumax$ unconditionally.
\end{proof}

Using Lemma \ref{lem.findsets.3} instead of Lemma \ref{lem.findsets.2} in the analysis of $\corralg{\alg}$, we find that the statement of Lemma \ref{lem.corralg.2} improves to show that $\corralg{\alg} > \alg - 8kn^2\eps\mumax$ in feasibility settings.  Thus, in feasibility settings, we can improve the runtime of the algorithm in Theorem \ref{thm.main.bic} by taking $\alg'$ to be $\corralg[\eps']{\alg}$ with $\eps' = 8kn^2\eps$, which has a runtime of $O(kn(\eps')^{-2}) = \softO(n^5\eps^{-5}\log^3(\val_{max}/\eps\mumax))$.

\end{document}